\documentclass[12pt]{article}
\usepackage[utf8]{inputenc}
\usepackage[unicode, colorlinks=true, linkcolor=blue, filecolor=cyan, citecolor=blue, urlcolor=magenta, bookmarks=true, bookmarksopen=false, bookmarksnumbered=true]{hyperref}
\usepackage[numbered]{bookmark}
\makeatletter
\newcommand*{\sectionbookmark}[1][]{
  \bookmark[
    level=section,
    dest=\@currentHref,%
    #1%
  ]%
}
\makeatother
\usepackage{lmodern}
\usepackage{placeins}
\usepackage{needspace}
\usepackage{setspace}
\doublespacing
\usepackage{float}
\usepackage{pdflscape}
\usepackage{longtable}
\usepackage[nopatch=footnote]{microtype}
\usepackage[T1]{fontenc}
\usepackage{amsthm}
\usepackage[letterpaper,margin=1in]{geometry}
\usepackage[american]{babel}
\usepackage{apacite}
\bibliographystyle{apacite}

\usepackage[noblocks]{authblk}
\setcounter{Maxaffil}{0}
\usepackage{etoolbox}
\AtBeginEnvironment{thebibliography}{\normalsize}
\usepackage{booktabs}
\usepackage{multirow}
\usepackage{threeparttable}
\usepackage{siunitx}
\usepackage{caption}
\usepackage{tcolorbox}
\usepackage{adjustbox}
\usepackage{amsmath}
\usepackage{enumitem}
\usepackage{amssymb}
\usepackage{bm}
\usepackage{amsfonts}
\usepackage{extarrows}
\usepackage{mathtools}
\usepackage{upgreek}
\usepackage{xtab}
\usepackage{makecell}
\usepackage{tabu}
\usepackage{pifont}
\usepackage{tocloft}
\usepackage{titlesec}
\usepackage{fancyhdr}
\usepackage{indentfirst}
\usepackage{accents}
\usepackage{subcaption}
\usepackage{newtxtext, newtxmath}
\usepackage[ruled]{algorithm2e}
\usepackage{pgfplots}
\pgfplotsset{compat=1.18}
\usetikzlibrary{decorations.markings, arrows.meta, backgrounds}
\usepackage{epstopdf}
\epstopdfsetup{outdir=./}
\setlength{\parindent}{2em}

\setcounter{secnumdepth}{4}
\setcounter{tocdepth}{4}
\pagestyle{fancy}

\fancyhf{}
\fancyhead{}
\pagestyle{plain}
\cfoot{\thepage}
\allowdisplaybreaks[4]

\captionsetup[sub]{
labelformat=simple
}
\raggedbottom

\captionsetup{labelsep=period, justification=justified, singlelinecheck=false}
\newtheorem{theorem}{Theorem}
\newtheorem{assumption}{Assumption}

\newtheorem{techlemma}{Lemma}

\newtheorem{proposition}{Proposition}
\newtheorem{corollary}{Corollary}
\newtheorem{remark}{Remark}

\title{Bayesian Smoothed Quantile Regression}

\author{%
  Bingqi Liu\thanks{Corresponding author, E-mail: \href{mailto:bqliu@zju.edu.cn}{bqliu@zju.edu.cn} (Bingqi Liu), ORCID: \href{https://orcid.org/0000-0003-0948-8930}{0000-0003-0948-8930}.}
  \hspace{1.6em}
  Kangqiang Li\thanks{E-mail: \href{mailto:11935023@zju.edu.cn}{11935023@zju.edu.cn} (Kangqiang Li), ORCID: \href{https://orcid.org/0000-0002-4253-6730}{0000-0002-4253-6730}.}
  \hspace{1.3em}
  Tianxiao Pang\thanks{E-mail: \href{mailto:txpang@zju.edu.cn}{txpang@zju.edu.cn} (Tianxiao Pang).}
  
  \vspace*{1.2em}
  
  \small \textit{School of Mathematical Sciences, Zhejiang University, Hangzhou 310058, China}
}

\date{}

\begin{document}
\maketitle
\doublespacing
\vspace{-1cm}
\begin{abstract}
The standard asymmetric Laplace framework for Bayesian quantile regression (BQR) suffers from a fundamental decision-theoretic misalignment, yielding biased finite-sample estimates, and precludes gradient-based computation due to non-smoothness. We propose Bayesian smoothed quantile regression (BSQR), a principled framework built on a kernel-smoothed, fully differentiable likelihood. Methodologically, the symmetrizing property of our objective reduces inferential bias and aligns the posterior mean with the true conditional quantile. Theoretically, we establish posterior consistency and a Bernstein--von Mises theorem under misspecification, delivering asymptotic normality and valid frequentist coverage via a generalized Wilks phenomenon, while guaranteeing global posterior existence unlike empirical likelihood approaches. Computationally, BSQR enables Hamiltonian Monte Carlo for BQR, alleviating high-dimensional mixing bottlenecks. In simulations, BSQR reduces out-of-sample prediction error by up to 50\% and improves sampling efficiency by up to 80\% relative to asymmetric Laplace benchmarks, with uniform and triangular kernels performing particularly well. In a financial application to asymmetric systemic risk, BSQR uncovers distinct regime shifts around the COVID-19 period and yields sharper yet well-calibrated predictive quantiles, underscoring its practical relevance.

\vspace{1em}
\noindent\textbf{Keywords:} Bayesian quantile regression; Bernstein-von Mises theorem; Hamiltonian Monte Carlo; Kernel smoothing; Model misspecification
	
  
	\noindent\textbf{JEL Classification:} Primary: C21, C11; Secondary: C63, C14
\end{abstract}

\section{Introduction}\label{sec:introduction}

Quantile regression (QR) \shortcite{Koenker1978} constitutes a cornerstone of modern econometrics, enabling robust characterization of distributional heterogeneity essential for financial risk management and policy evaluation \shortcite{Chernozhukov2013}. Rigorous specification testing and inference for quantiles have been extensively studied in the frequentist literature \shortcite{Linton2005,Whang2006}. In the Bayesian paradigm, QR is prized for coherent uncertainty quantification and flexibility in hierarchical and latent-variable modeling. In practice it is used for tail risk, inequality, and heterogeneous treatment effects, often with short time spans or moderately high-dimensional covariates. In such settings we find that standard Bayesian quantile regression (BQR) can be systematically misleading: Bayesian quantile regression based on the asymmetric Laplace distribution (BQR-ALD) produces severely biased and under-covering posteriors at extreme quantiles in small samples, penalized exponentially tilted empirical likelihood (PETEL) \shortcite{Tang2022} exhibits poor mixing once $p$ becomes even moderately large, and in systemic risk applications conventional BQR-ALD delivers unstable dynamic betas with erratic credible bands.

Traditionally, BQR relies on the ALD working likelihood \shortcite{Yu2001} and its scale-mixture representation for Gibbs sampling \shortcite{Kozumi2011}. This device makes BQR practically accessible, but it faces two fundamental bottlenecks. First, as emphasized by \shortciteA{Gneiting2011} and \shortciteA{Sriram2013}, the ALD-based posterior is decision-theoretically misaligned: while asymptotically consistent, the posterior mean minimizes expected squared error rather than check loss, violating proper scoring rules and inducing severe finite-sample bias, especially in the tails. Second, the non-smooth check loss precludes state-of-the-art gradient-based Markov chain Monte Carlo (MCMC) algorithms such as Hamiltonian Monte Carlo (HMC) \shortcite{Neal2011}, effectively confining practitioners to conditionally Gaussian augmentation schemes whose mixing deteriorates rapidly with dimension, hierarchical structure, or extreme quantiles.

Recent work seeks to move beyond these limitations via generalized Bayesian inference (GBI) \shortcite{Bissiri2016} and joint quantile estimation frameworks \shortcite{Yang2017}. PETEL \shortcite{Tang2022} delivers a moment-based posterior but inherits the geometric fragility of empirical likelihood: posterior support can be empty or nearly empty in finite samples, mixing in high dimensions is often extremely poor, and extensions to hierarchical or latent-variable models are structurally cumbersome. Likelihood-free approaches such as the martingale posterior \shortcite{Fong2025} instead treat the loss as a scoring rule and optimize predictive performance without an explicit parametric likelihood. While powerful for forecasting, they obscure the direct covariate--quantile link that underpins structural interpretation, marginal effects, and policy counterfactuals. In short, existing approaches either retain the non-smooth check-loss geometry and forgo gradient-based computation, impose fragile empirical-likelihood constraints that scale poorly, or abandon parametric likelihoods and thereby weaken structural interpretability.

We propose Bayesian smoothed quantile regression (BSQR), a principled framework that bridges frequentist smoothing and robust Bayesian inference. Building on the ``conquer'' convolution-smoothing strategy \shortcite{Fernandes2021,He2023} and the quasi-Bayesian paradigm \shortcite{Chernozhukov2003}, we construct a principled, fully differentiable pseudo-likelihood whose negative log-density coincides with a kernel-smoothed check loss. This formulation explicitly manages the classical bias--variance trade-off in smoothing \shortcite{Gozalo2000}, restores decision-theoretic coherence with the underlying quantile loss, defines a valid error distribution on $\mathbb{R}$ that avoids convex-hull pathologies, and unlocks the substantial efficiency gains of HMC and the No-U-Turn sampler (NUTS) \shortcite{Hoffman2014} for BQR.

Our contributions advance the literature along three dimensions: methodological realignment, theoretical rigor, and computational scalability. Methodologically, we exploit a symmetrizing property of convolution smoothing to rectify the decision-theoretic flaw of standard BQR. Using the smoothed quantile regression (SQR) objective of \shortciteA{He2023}, we define a new error distribution whose negative log-density is proportional to the smoothed check loss and show that the pseudo-true parameter coincides with the minimizer of expected check loss. This aligns the posterior mean with the conditional quantile decision problem and eliminates the finite-sample bias induced by squared-error optimality in BQR-ALD. The smoothed loss admits closed-form first and second derivatives in residuals, enabling efficient gradient-based sampling for regression and scale parameters. The resulting pseudo-likelihood is strictly positive on $\mathbb{R}$, ensuring global posterior existence and avoiding empty-support phenomena that can afflict empirical-likelihood-based methods. In simulations this rectification reduces out-of-sample check loss by up to 40--50\% relative to BQR-ALD and markedly improves tail behavior and coverage, especially for extreme quantiles and short samples.

Theoretically, we go beyond simple consistency and provide a comprehensive asymptotic and geometric analysis. We establish posterior consistency for the true linear conditional quantile under mild regularity and an undersmoothing regime for the bandwidth, despite using a smoothed pseudo-likelihood. We then prove a Bernstein--von Mises (BvM) theorem under possible misspecification, leveraging the $C^2$-smoothness of the loss to obtain a classical local asymptotic normality expansion and an explicit limiting covariance for the centered posterior, thereby justifying credible intervals for frequentist coverage. Complementing this, we derive a generalized Wilks-type result: by calibrating the scale parameter to satisfy a generalized information equality, BSQR credible sets achieve asymptotically correct frequentist coverage even under misspecification; both plug-in and fully Bayesian treatments of this scale are analyzed and shown to enhance robustness. Furthermore, we analyze posterior propriety under various priors and derive geometric insights linking kernel ``peakedness'' to posterior concentration, establishing tail equivalence with ALD posteriors for compact kernels and clarifying the bias--variance and robustness trade-offs inherent in smoothing.

Computationally, this work marks, to our knowledge, the first systematic deployment of gradient-based sampling for BQR with an explicit likelihood and BvM justification. Smoothing the objective renders the posterior jointly differentiable in regression and scale parameters, enabling joint HMC/NUTS updates in standard probabilistic programming platforms such as \textsf{Stan} \shortcite{Carpenter2017}. We show that the normalizing constant of the smoothed likelihood is log-convex in the scale parameter, implying a log-concave conditional posterior under standard priors and numerically stable gradient evaluations. Extensive simulations and a systemic risk application demonstrate that BSQR substantially reduces out-of-sample prediction error by up to 50\% relative to BQR-ALD and significantly enhances sampler efficiency (up to 80\% in our applications), maintaining robust coverage in extreme tails. In higher-dimensional settings (e.g., $p=50$), it achieves effective sample sizes per second that are an order of magnitude larger than those of PETEL, whose chains often suffer near-complete mixing collapse due to the geometric fragility of empirical likelihood constraints. In our systemic risk study, BSQR produces smoother and more persistent dynamic betas with tighter yet well-calibrated credible bands, while matching or improving the forecasting performance of BQR-ALD.

Collectively, these innovations position BSQR as a rigorous bridge between frequentist accuracy and Bayesian probabilistic richness, retaining the structural interpretability prized in applied econometrics while delivering decision-theoretically aligned inference, classical large-sample guarantees, and modern computational efficiency. The complete source code is publicly available for reproducibility.\footnote{The replication package is available at \url{https://github.com/BeauquinLau/BSQR}.}

The paper is organized as follows. Section~\ref{sec:model_formulation} presents the BSQR model and smoothed likelihood. Sections~\ref{sec:consistency}, \ref{sec:bvm}, and \ref{sec:prior_selection_and_propriety_actual_concise} establish the theoretical core, covering posterior consistency, the Bernstein--von Mises theorem, and posterior propriety. Section~\ref{sec:analysis_kernel_selection_sqr} analyzes kernel effects and guides kernel and bandwidth selection. Section~\ref{sec:sampling} details the HMC/NUTS computational framework. Sections~\ref{sec:simulation} and \ref{sec:empirical_application_final_para} present simulation evidence and an empirical application to asymmetric systemic risk. Section~\ref{sec:discussion} concludes. All proofs, algorithms, technical derivations, and simulation tables are provided in the appendices.

\section{The Bayesian smoothed quantile regression framework}\label{sec:model_formulation}

This section develops the BSQR framework. We first review the fundamental limitations of standard BQR to motivate our approach, then lay out the technical foundations of loss smoothing, and finally construct our principled Bayesian model upon this foundation.

\subsection{Motivation: The challenge of non-smoothness in Bayesian quantile inference}\label{subsec:motivation_and_background}

The linear QR model posits that for i.i.d. observations $(y_i, \boldsymbol{x}_i)$ from a joint distribution $(Y, \boldsymbol{X})$, the $\tau$-th conditional quantile of the response $Y$ is a linear function of covariates for a given $\tau \in (0,1)$:
\begin{align}\label{eq:bsqr_model}
Q_Y(\tau \mid \boldsymbol{x}_i) = \boldsymbol{x}_i^\top\boldsymbol{\beta}(\tau), \quad i=1,\dots,n,
\end{align}
where $Q_Y(\tau \mid \boldsymbol{x}_i) \coloneqq \inf \{q : F_{Y \mid \boldsymbol{X}}(q \mid \boldsymbol{x}_i) \geq \tau \}$ is defined by the conditional cumulative distribution function (CDF) $F_{Y \mid \boldsymbol{X}}$, $\boldsymbol{x}_i \in \mathbb{R}^d$ is a covariate vector including an intercept, and $\boldsymbol{\beta}(\tau) \in \mathbb{R}^d$ is the coefficient vector. This implies the error term $\varepsilon_i \coloneqq y_i - \boldsymbol{x}_i^\top\boldsymbol{\beta}(\tau)$ has a zero $\tau$-th conditional quantile, $Q_{\varepsilon}(\tau \mid \boldsymbol{x}_i) = 0$. The assumption of independence between $\varepsilon_i$ and $\boldsymbol{x}_i$ simplifies this to the unconditional requirement $F_{\varepsilon}(0) = \tau$.

The population coefficient vector $\boldsymbol{\beta}(\tau)$ is the minimizer of the expected check loss \shortcite{Koenker1978}:
\begin{align}\label{eq:population_objective_qr}
R(\boldsymbol{b}; \tau) \coloneqq \mathbb{E}\left[\rho_{\tau}(\zeta)\right] = \int_{-\infty}^{\infty} \rho_{\tau}(e) \, \mathrm{d}F_{\zeta}(e),
\end{align}
where $\zeta \coloneqq Y - \boldsymbol{X}^\top\boldsymbol{b}$ is the population-level residual for a candidate vector $\boldsymbol{b}$, $F_{\zeta}$ is its CDF, and $\rho_{\tau}(e) = e(\tau - \mathbb{I}(e<0))$ is the non-differentiable ``pinball'' loss. Correspondingly, the sample estimator $\hat{\boldsymbol{\beta}}(\tau)$ minimizes the empirical risk:
\begin{align}\label{eq:sample_objective_qr} 
\widehat{R}(\boldsymbol{b}; \tau) \coloneqq \frac{1}{n} \sum_{i=1}^{n} \rho_{\tau}(e_i(\boldsymbol{b})),
\end{align}
where $e_i(\boldsymbol{b}) \coloneqq y_i - \boldsymbol{x}_i^\top\boldsymbol{b}$.

The BQR paradigm connects to this objective via the ALD, whose probability density function (PDF) is $p_{\mathrm{ALD}}(\cdot; \theta, \tau) \propto \exp(-\theta\rho_\tau(\cdot))$ \shortcite{Yu2001}. While this formulation ensures that its posterior mode, which we denote $\check{\boldsymbol{\beta}}(\tau)$, numerically coincides with the frequentist point estimator $\hat{\boldsymbol{\beta}}(\tau)$, this reliance on the non-smooth check loss imposes critical limitations. First, it precludes the use of modern gradient-based samplers (e.g., HMC). Second, it creates an inferential misalignment: the posterior mean is a biased estimator of the true conditional quantile and does not minimize the expected check loss, the canonical strictly proper scoring rule for quantile forecasts \shortcite{Gneiting2011, Sriram2013}. These challenges motivate a fundamental reformulation of the Bayesian likelihood.

\subsection{Technical foundation: Kernel smoothing of the check loss}\label{subsec:technical_foundation_smoothing}

To address the non-differentiability, we adapt kernel smoothing\,---\,a technique with a rich history in econometrics for enhancing estimation efficiency and utilizing parametric information \shortcite{Gozalo2000}\,---\,to construct a differentiable parametric likelihood. Specifically, we adopt the convolution-type smoothing strategy recently advanced in the frequentist literature, particularly the ``conquer'' framework established by \shortciteA{Fernandes2021} and \shortciteA{He2023}. While these works utilized smoothing primarily to facilitate gradient-based M-estimation and bootstrap inference, we leverage their formulation as the structural basis for our Bayesian likelihood.

Following \shortciteA{He2023}, the smoothed check loss is constructed by replacing the indicator function in the standard check loss with a smooth approximation derived from a kernel. Formally, let $K(\cdot)$ be a symmetric, non-negative kernel density function integrating to one, and let $K_h(v) = h^{-1}K(v/h)$ be the scaled kernel with bandwidth $h>0$. The \textit{smoothed check loss function} (see Figure~\ref{fig:smoothing_concept}) is defined as the convolution of the standard check loss $\rho_{\tau}(\cdot)$ and the kernel $K_h(\cdot)$:
\begin{align}\label{eq:smoothed_check_loss}
  L_h(e; \tau) &\coloneqq (\rho_{\tau} * K_h)(e) = \int_{-\infty}^{\infty} \rho_{\tau}(e-v) K_h(v) \,\mathrm{d}v \notag \\
  &= \int_{-\infty}^{\infty} \rho_{\tau}(u) K_h(u-e) \,\mathrm{d}u.
\end{align}

\begin{figure}[htbp]
\centering
\resizebox{0.55\textwidth}{!}{%
\begin{tikzpicture}
    \definecolor{acadRed}{RGB}{215, 25, 28}
    \definecolor{acadBlue}{RGB}{44, 123, 182}
    \definecolor{acadGray}{RGB}{80, 80, 80}

    \begin{axis}[
        width=8.5cm, height=6.5cm,
        axis lines=middle,
        axis line style={-Stealth, thick, color=black!80},
        xlabel={$e$ (Residual)},
        ylabel={Loss $\rho(e)$},
        xlabel style={at={(ticklabel* cs:1)}, anchor=west, xshift=2pt},
        ylabel style={at={(ticklabel* cs:1)}, anchor=south, yshift=2pt},
        ymin=-0.2, ymax=1.6,
        xmin=-2, xmax=2,
        xtick=\empty, ytick=\empty,
        legend style={
            at={(0.68, 1.15)},      
            anchor=north west,
            draw=black!40,
            thin,
            rounded corners=2pt,
            fill=white,
            fill opacity=0.95,
            text opacity=1,
            font=\scriptsize,
            inner xsep=5pt, inner ysep=4pt
        }
    ]
    \def\tauval{0.7}

    \addplot [
        domain=-1.8:1.8, 
        samples=200, 
        color=acadGray, 
        dashed,
        line width=1.2pt
    ]
    {x * (\tauval - (x<0))};
    \addlegendentry{Standard Check Loss $\rho_\tau(\cdot)$}

    \addplot [
        domain=-1.8:1.8, 
        samples=300, 
        color=acadBlue, 
        very thick,
        opacity=0.9
    ]
    { (x * (2*\tauval - 1) + sqrt(x^2 + 0.2^2))/2 }; 
    \addlegendentry{Smoothed Loss $L_h(\cdot;\tau)$}
    
    \node[below left, color=gray, font=\footnotesize] at (axis cs:0,0) {$0$};

    \node[
        circle,
        fill=acadRed,
        fill opacity=0.15,
        draw=acadRed,
        thin,
        minimum size=0.6cm,
        inner sep=0pt,
        anchor=center
    ] at (axis cs:0, 0.1) {};
    
    \node[anchor=west, acadRed, font=\scriptsize] (NoteText) at (axis cs: 0.55, 1.2) 
        {\textbf{Diff. at 0} ($C^2$)};
    
    \draw[->, >=Stealth, thick, acadRed] 
        (NoteText.west) to[out=180, in=45] (axis cs:0.1, 0.25);

    \draw[dotted, gray!60, thick] (axis cs:0,0) -- (axis cs:0, 0.2);

    \end{axis}
\end{tikzpicture}
}
\captionsetup{width=0.9\textwidth, justification=justified, font=small}
\caption{Smoothing the quantile objective. Comparison between the non-differentiable check loss (dashed gray) and the kernel-smoothed loss $L_h(\cdot; \tau)$ (solid blue). The smoothing removes the singularity at the origin (red circle), enabling gradient-based sampling via HMC.}
\label{fig:smoothing_concept}
\end{figure}

Based on this smoothed loss, the smoothed quantile regression (SQR) objective function is defined as the average smoothed loss over the sample. The resulting SQR estimator, denoted as $\hat{\boldsymbol{\beta}}_h(\tau)$, is:
\begin{align}\label{sqr_obj}
\hat{\boldsymbol{\beta}}_h(\tau) \coloneqq \arg\min_{\boldsymbol{b} \in \mathbb{R}^d} \widehat{R}_h(\boldsymbol{b}; \tau,h), \quad \text{where}\,\, \widehat{R}_h(\boldsymbol{b}; \tau,h) = \frac{1}{n}\sum_{i=1}^{n} L_h(e_i(\boldsymbol{b}); \tau).
\end{align}

A distinct advantage of this formulation, which we exploit for HMC sampling, is its differentiability. The derivative of the smoothed loss, denoted as $\Psi_h(e; \tau)$, is the convolution of the check loss subderivative $\psi_\tau(e) = \tau - \mathbb{I}(e<0)$\footnote{Strictly speaking, $\psi_\tau(e) = \tau - \mathbb{I}(e<0)$ is a subderivative, as $\rho_\tau(e)$ is non-differentiable at $e=0$ due to the discontinuity in $\psi_\tau$. However, this non-differentiability does not affect the subsequent smoothing and computation, as the convolution with a sufficiently smooth kernel $K_h$ yields a differentiable $\Psi_h(\cdot;\tau)$.} with the kernel:
\begin{align}\label{eq:Psi_h_definition}
\Psi_h(e; \tau) \coloneqq \frac{\partial L_h(e; \tau)}{\partial e} = (\psi_\tau * K_h)(e) = \int_{-\infty}^{\infty} \psi_\tau(e - v) K_h(v) \,\mathrm{d}v.\footnotemark
\end{align}
\footnotetext{The interchange of differentiation and integration is justified by the dominated convergence theorem. The partial derivative of the integrand with respect to $e$, namely $\psi_\tau(e-v)K_h(v)$, is bounded in absolute value by $\max(\tau, 1-\tau)K_h(v)$. This dominating function is integrable over $v \in \mathbb{R}$ since $K_h$ is a probability density, thus satisfying the conditions for the theorem.}By evaluating this convolution integral, we explicitly derive a closed-form expression involving the kernel's CDF, $F_K(\cdot)$:
\begin{align}\label{eq:Psi_h_cdf_form}
\Psi_h(e; \tau) &= \tau \int_{-\infty}^{\infty} K_h(v) \,\mathrm{d}v - \int_{-\infty}^{\infty} \mathbb{I}(e-v < 0) K_h(v) \,\mathrm{d}v \notag \\
&= \tau - \int_{e}^{\infty} K_h(v) \,\mathrm{d}v = \tau - \left(1 - F_K\left(\frac{e}{h}\right)\right) \notag \\
&= F_K\left(\frac{e}{h}\right) - (1-\tau).
\end{align}

This analytical gradient is fundamental for efficient HMC-based Bayesian inference, as detailed in Section~\ref{sec:sampling}. Furthermore, to elucidate the geometry of the posterior distribution sampled by HMC, we explicitly derive the second derivative (the curvature). Differentiating the smoothed score function $\Psi_h(e; \tau)$ from Eq.~\eqref{eq:Psi_h_definition} with respect to $e$:
\begin{align*}
    \frac{\partial \Psi_h(e; \tau)}{\partial e} &= \frac{\partial}{\partial e} \left( (\psi_\tau * K_h)(e) \right) = (\psi_\tau' * K_h)(e).
\end{align*}
The derivative of $\psi_\tau(u) = \tau - \mathbb{I}(u<0)$ is $\psi_\tau'(u) = \delta(u)$, where $\delta(u)$ is the Dirac delta function.\footnote{Strictly speaking, the classical derivative of $\psi_\tau(u)$ does not exist at $u=0$ due to the discontinuity. However, in the distributional sense, it is given by $\psi_\tau'(u) = \delta(u)$. To see this, note that we can write $\psi_\tau(u) = \tau - 1 + \mathbb{I}(u \geq 0)$. The distributional derivative of the indicator function $\mathbb{I}(u \geq 0)$ is the Dirac delta $\delta(u)$, leading to $\psi_\tau'(u) = \delta(u)$. This does not affect the convolution, which yields a smooth second derivative.} Thus, we obtain
\begin{align}\label{eq:hessian_derivation_base}
    L_h''(e; \tau) \coloneqq \frac{\partial \Psi_h(e; \tau)}{\partial e} &= (\delta * K_h)(e) = \int_{-\infty}^{\infty} \delta(v) K_h(e-v) \,\mathrm{d} v \notag \\
    &= K_h(e-0) = \frac{1}{h}K\left(\frac{e}{h}\right),
\end{align}
by utilizing the sifting property of the Dirac delta function, which states $\int_{-\infty}^{\infty} f(x)\delta(x-a)\,\mathrm{d} x = f(a)$.

Consequently, the Hessian of the sum of smoothed losses (the SQR objective) with respect to $\boldsymbol{\beta}$ takes the form of a weighted covariance matrix:
\begin{align}\label{eq:hessian_matrix_structure}
    \boldsymbol{H}_{\text{SQR}}(\boldsymbol{\beta}) \coloneqq \nabla^2_{\boldsymbol{\beta}} \left( \sum_{i=1}^n L_h(e_i(\boldsymbol{\beta}); \tau) \right) = \sum_{i=1}^n \frac{1}{h} K\left(\frac{e_i(\boldsymbol{\beta})}{h}\right) \boldsymbol{x}_i \boldsymbol{x}_i^\top.
\end{align}
This explicit Hessian structure serves a dual purpose in our framework: it characterizes the posterior curvature essential for constructing the mass matrix in HMC, ensuring efficient exploration of the posterior parameter space, and it is central to our asymptotic analysis in Section~\ref{sec:asymptotics} as well as the discussion of kernel effects in Section~\ref{subsec:kernel_effects_concentration}.

\subsection{The BSQR likelihood and posterior}\label{subsec:bscr_likelihood_construction}

While previous works utilized the smoothed loss $L_h(\cdot;\tau)$ primarily as a computational surrogate for fast optimization or bootstrap inference \shortcite{He2023}, constructing a principled Bayesian framework requires a valid probability density. Since the standard ALD-based likelihood relies on the non-smooth check loss $\rho_\tau(\cdot)$ and generally $L_h(\cdot;\tau) \neq \rho_\tau(\cdot)$, the ALD posterior does not target the smoothed estimator $\hat{\boldsymbol{\beta}}_h(\tau)$. To achieve both computational efficiency and inferential coherence, we formulate a new likelihood grounded in the smoothing geometry.

We propose a novel error distribution, denoted as $f_{\text{SQR}}$, constructed explicitly with a negative log-density proportional to the smoothed check loss. Unlike the ALD, our distribution is smooth and explicitly depends on the bandwidth $h$:
\begin{align*}
f_{\text{SQR}}(\varepsilon; \theta, \tau, h) = \frac{1}{Z(\theta, \tau, h)} \exp\left( -\theta L_h(\varepsilon; \tau) \right),
\end{align*}
where $\theta > 0$ is a scale parameter. A critical component here, which is absent in the optimization-based ``conquer'' framework, is the normalizing constant $Z(\theta, \tau, h)$ required for $f_{\text{SQR}}$ to be a valid PDF:
\begin{align}\label{eq:normalizing_constant_Z}
Z(\theta, \tau, h) \coloneqq \int_{-\infty}^{\infty} \exp\left( -\theta L_h(u; \tau) \right) \,\mathrm{d}u.
\end{align}
This constant ensures that $f_{\text{SQR}}$ is a proper density, a feature essential for Bayesian model selection and the valid sampling of $\theta$.

For an observed dataset $\boldsymbol{y} = (y_1, \dots, y_n)^\top$ and $\boldsymbol{\mathcal{X}} = (\boldsymbol{x}_1^\top, \dots, \boldsymbol{x}_n^\top)^\top$, under the assumption of independence, the joint likelihood for parameters $\boldsymbol{\beta}$ and $\theta$ is:
\begin{align}\label{eq:sqr_likelihood}
L(\boldsymbol{y} \mid \boldsymbol{\mathcal{X}}, \boldsymbol{\beta}, \theta; \tau, h) &= \prod_{i=1}^n f_{\text{SQR}}(e_i(\boldsymbol{\beta}); \theta, \tau, h) \notag \\
&= \left( Z(\theta, \tau, h) \right)^{-n} \exp\left( - \theta \sum_{i=1}^n L_h(e_i(\boldsymbol{\beta}); \tau) \right).
\end{align}
The log-likelihood is thus:
\begin{align}\label{eq:sqr_log_likelihood}
\ell(\boldsymbol{y} \mid \boldsymbol{\mathcal{X}}, \boldsymbol{\beta}, \theta; \tau, h) = -n \log Z(\theta, \tau, h) - \theta \sum_{i=1}^n L_h(e_i(\boldsymbol{\beta}); \tau).
\end{align}
By construction, maximizing this likelihood with respect to $\boldsymbol{\beta}$ is equivalent to minimizing the SQR objective (Eq.~\eqref{sqr_obj}), establishing a principled link between our Bayesian model and the frequentist SQR estimator. With independent priors $\pi(\boldsymbol{\beta}, \theta) = \pi(\boldsymbol{\beta})\pi(\theta)$, the posterior is:
\begin{align*}
\pi(\boldsymbol{\beta}, \theta \mid \boldsymbol{y}, \boldsymbol{\mathcal{X}}; \tau, h) \propto \left(Z(\theta, \tau, h)\right)^{-n} \exp\left( -\theta \sum_{i=1}^n L_h(e_i(\boldsymbol{\beta}); \tau) \right) \pi(\boldsymbol{\beta}) \pi(\theta).
\end{align*}
This formulation overcomes the limitations of the standard framework by providing: (1) differentiability for efficient HMC sampling (Section~\ref{sec:sampling}); (2) a consistent estimator that resolves the inferential bias of BQR-ALD (Section~\ref{sec:consistency}); and (3) a theoretically valid basis for uncertainty quantification, backed by Bernstein-von Mises theorems (Section~\ref{sec:bvm}) and rigorous propriety conditions (Section~\ref{sec:prior_selection_and_propriety_actual_concise}).

\section{Posterior consistency and bias resolution}\label{sec:consistency}

Posterior consistency is foundational. While standard ALD-based BQR is asymptotically consistent \shortcite{Sriram2013}, its posterior mean suffers from critical ``inferential misalignment'' in finite samples by minimizing squared error rather than the target check loss \shortcite{Gneiting2011}. Although joint estimation frameworks like \shortciteA{Tang2022} and \shortciteA{Yang2017} address structural issues, they typically entail heavy computational burdens and struggle with mixing in high-dimensional settings, impeding hierarchical extensions. In contrast, BSQR resolves this misalignment directly through loss smoothing. This section establishes the consistency of the BSQR posterior under the standard assumption of a correctly specified model, demonstrating that its symmetrizing property ensures inferences are correctly centered and unbiased. We extend this analysis to allow for model misspecification in Section~\ref{sec:bvm}.

\begin{assumption}[Regularity conditions for consistency]\label{as:consistency_conditions}
Let the true data generating process be governed by a linear conditional quantile model, such that $Q_Y(\tau \mid \boldsymbol{x}_i) = \boldsymbol{x}_i^\top\boldsymbol{\beta}_0(\tau)$, where $\boldsymbol{\beta}_0(\tau)$ is the true parameter vector. Define the error term as $\varepsilon_{0i} = y_i - \boldsymbol{x}_i^\top\boldsymbol{\beta}_0(\tau)$. By construction, the $\tau$-th conditional quantile of $\varepsilon_{0i}$ given $\boldsymbol{x}_i$ is zero, which implies that $F_{\varepsilon_0}(0) = \tau$. This defines the error structure implied by the model in Eq. \eqref{eq:bsqr_model} under the true parameter $\boldsymbol{\beta}_0(\tau)$. The following conditions are assumed to hold:

\begin{enumerate}[label=\textbf{\textup{(A\arabic*)}}]
    \item \textbf{Parameter Space:} The parameter space $\mathcal{B}$ for $\boldsymbol{\beta}(\tau)$ is a compact subset of $\mathbb{R}^d$, with the true parameter $\boldsymbol{\beta}_0(\tau)$ in its interior.
    
    \item \textbf{Error Distribution:} The error term $\varepsilon_{0i}$ satisfies the following:
    \begin{enumerate}[label=\textbf{\textup{(\alph*)}}]
        \item The CDF $F_{\varepsilon_0}(\cdot)$ is at least twice continuously differentiable in a neighborhood of 0, with a PDF $f_{\varepsilon_0}(\cdot) = F'_{\varepsilon_0}(\cdot)$ that is positive at 0, i.e., $f_{\varepsilon_0}(0) > 0$.
        \item The first moment of the error is finite, i.e., $\mathbb{E}[|\varepsilon_{0i}|] < \infty$.
    \end{enumerate}

    \item \textbf{Covariates:} The covariate vectors $\boldsymbol{x}_i$ are i.i.d. and uniformly bounded. The population second moment matrix $\Sigma_X \coloneqq \mathbb{E}[\boldsymbol{x} \boldsymbol{x}^\top]$ is positive definite.
    
    \item \textbf{Kernel Properties:} The kernel function $K(\cdot)$ is a symmetric, bounded, and continuous PDF satisfying $\mu_1(K) \coloneqq \int_{-\infty}^{\infty} u K(u) \,\mathrm{d}u = 0$ and $\mu_2(K) \coloneqq \int_{-\infty}^{\infty} u^2 K(u) \,\mathrm{d}u < \infty$.
    
    \item \textbf{Prior Distribution:} The prior distribution $\pi(\boldsymbol{\beta})$ is continuous and assigns positive probability to any open neighborhood of the true parameter $\boldsymbol{\beta}_0(\tau)$.
\end{enumerate}
\end{assumption}

\begin{theorem}[Posterior consistency of BSQR]\label{thm:consistency}
Under \textbf{Assumption}~\ref{as:consistency_conditions}, and assuming $h \to 0$ as $n\to \infty$, the posterior distribution $\pi(\boldsymbol{\beta} \mid \boldsymbol{y}, \boldsymbol{\mathcal{X}},\theta)$ derived from the smoothed likelihood in Eq.~\eqref{eq:sqr_likelihood} is consistent at the true parameter value $\boldsymbol{\beta}_0(\tau)$. That is, for any neighborhood $U$ of $\boldsymbol{\beta}_0(\tau)$,
\begin{align*}
    \int_U \pi(\boldsymbol{\beta} \mid \boldsymbol{y}, \boldsymbol{\mathcal{X}},\theta) \,\mathrm{d}\boldsymbol{\beta} \xrightarrow{P} 1 \quad \text{as } n \to \infty,
\end{align*}
where the convergence is in probability with respect to the true data generating distribution.
\end{theorem}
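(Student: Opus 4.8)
The plan is to prove consistency via the classical M-estimation route, treating the BSQR posterior as a Gibbs-type posterior built on the convex smoothed risk. Writing the empirical and population smoothed risks as $\widehat{R}_h(\boldsymbol{\beta}) = n^{-1}\sum_{i=1}^n L_h(e_i(\boldsymbol{\beta});\tau)$ and $R_h(\boldsymbol{\beta}) = \mathbb{E}[L_h(Y - \boldsymbol{X}^\top\boldsymbol{\beta};\tau)]$, the posterior mass on a set $A$ is governed by $\int_A \exp(-n\theta\,\widehat{R}_h(\boldsymbol{\beta}))\,\pi(\boldsymbol{\beta})\,\diff\boldsymbol{\beta}$, with $\theta$ held fixed. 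Consistency will then follow from three ingredients: identification of the population minimizer, a uniform law of large numbers, and a Laplace-type posterior-concentration argument.

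First I would establish identification. Since $L_h$ is the convolution of the convex check loss with a nonnegative kernel, it is convex, so both $\widehat{R}_h$ and $R_h$ are convex in $\boldsymbol{\beta}$, and the population minimizer $\boldsymbol{\beta}_h^\ast$ is characterized by the first-order condition $\mathbb{E}[\Psi_h(Y-\boldsymbol{X}^\top\boldsymbol{\beta};\tau)\,\boldsymbol{X}]=\boldsymbol{0}$. Using the closed form $\Psi_h(e;\tau)=F_K(e/h)-(1-\tau)$ from Eq.~\eqref{eq:Psi_h_cdf_form}, the independence of $\varepsilon_0$ and $\boldsymbol{X}$, and the quantile restriction $F_{\varepsilon_0}(0)=\tau$, an integration by parts followed by a second-order Taylor expansion of $F_{\varepsilon_0}$ about $0$ (valid under (C2a) and (C4)) gives $\mathbb{E}[\Psi_h(\varepsilon_0;\tau)] = -\tfrac{1}{2}h^2 f'_{\varepsilon_0}(0)\int_{-\infty}^{\infty} u^2 K(u)\,\diff u + o(h^2)$. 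This exhibits an $O(h^2)$ smoothing bias and shows $\boldsymbol{\beta}_h^\ast \to \boldsymbol{\beta}_0(\tau)$ as $h\to 0$; positive-definiteness of the limiting Hessian $f_{\varepsilon_0}(0)\Sigma_X$ (from (C2a), (C3), (C4)) guarantees that $\boldsymbol{\beta}_h^\ast$ is the unique minimizer with nondegenerate curvature.

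Next I would prove the uniform law of large numbers $\sup_{\boldsymbol{\beta}\in\mathcal{B}}|\widehat{R}_h(\boldsymbol{\beta})-R_h(\boldsymbol{\beta})|\xrightarrow{P}0$ over the compact set $\mathcal{B}$ (C1). This combines a pointwise law of large numbers — the summands are integrable because $L_h$ grows at most linearly and $\mathbb{E}|\varepsilon_0|<\infty$ (C2b) — with stochastic equicontinuity, which follows from the Lipschitz property of $L_h$ (its gradient $\Psi_h$ is bounded by $\max(\tau,1-\tau)$) and the uniform boundedness of the covariates (C3). For the concentration step I would bound the posterior odds of $U^c$ against a small ball $B\subset U$ centered at $\boldsymbol{\beta}_h^\ast$: uniform convergence replaces $\widehat{R}_h$ by $R_h$ up to a uniform $o_P(1)$, and the separation $\delta := \inf_{\boldsymbol{\beta}\in U^c}R_h(\boldsymbol{\beta})-R_h(\boldsymbol{\beta}_h^\ast)>0$ (from strict convexity and compactness) forces the numerator to decay like $\exp(-n\theta(\delta-o_P(1)))$ relative to the denominator, whose integral over $B$ is bounded below by the positive prior mass guaranteed by (C5). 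Dividing yields $\int_{U^c}\pi(\boldsymbol{\beta}\mid\boldsymbol{y},\boldsymbol{\mathcal{X}},\theta)\,\diff\boldsymbol{\beta}\xrightarrow{P}0$, equivalently $\int_U \pi(\boldsymbol{\beta}\mid\boldsymbol{y},\boldsymbol{\mathcal{X}},\theta)\,\diff\boldsymbol{\beta}\xrightarrow{P}1$.

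The main obstacle I anticipate is reconciling the population smoothed minimizer $\boldsymbol{\beta}_h^\ast$ with the true parameter $\boldsymbol{\beta}_0(\tau)$. For any fixed bandwidth the $O(h^2)$ bias above leaves $\boldsymbol{\beta}_h^\ast\neq\boldsymbol{\beta}_0(\tau)$ whenever $f'_{\varepsilon_0}(0)\neq 0$, so consistency at exactly $\boldsymbol{\beta}_0(\tau)$ forces a vanishing bandwidth $h=h_n\to 0$ so that $U$ eventually contains $\boldsymbol{\beta}_h^\ast$. The delicate point is then to verify that neither the separation $\delta$ nor the curvature of $R_h$ degenerates as $h$ shrinks — the Hessian $\mathbb{E}[h^{-1}K(\varepsilon_0/h)\,\boldsymbol{X}\boldsymbol{X}^\top]\to f_{\varepsilon_0}(0)\Sigma_X$ must stay bounded away from singularity — so that the exponential separation continues to dominate both the shrinking ball $B$ and the sample fluctuations. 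Carefully tracking this interplay between the vanishing bias, the vanishing-bandwidth curvature, and the stochastic error is where the real work lies.
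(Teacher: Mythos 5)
Your proposal follows essentially the same route as the paper's proof: both establish consistency of the smoothed-risk M-estimator by showing the population score $\mathbb{E}[\Psi_h(\varepsilon_{0};\tau)]$ is $\mathcal{O}(h^2)$ via a Taylor expansion of $F_{\varepsilon_0}$ about $0$, verifying that the limiting Hessian $f_{\varepsilon_0}(0)\Sigma_X$ is positive definite, proving a uniform law of large numbers over the compact parameter space, and then transferring concentration to the posterior using the prior-positivity condition (the paper delegates this last step to cited results of Chernozhukov--Hong and Ghosal et al., which is precisely the posterior-odds argument you sketch explicitly). Your flagging of the need for $h = h_n \to 0$ to eliminate the $\mathcal{O}(h^2)$ smoothing bias, and of the non-degeneracy of the curvature along the way, identifies a dependence the paper's proof uses only implicitly, and is well taken.
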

\noindent The proof of this theorem is provided in \hyperref[pf:consistency]{Appendix A.1}.

Theorem~\ref{thm:consistency} provides the theoretical cornerstone for BSQR. It demonstrates that our model retains the foundational property of posterior consistency while simultaneously resolving the computational and finite-sample limitations that plague the standard BQR-ALD. The proof hinges on the symmetrizing property of our smoothing procedure: convolving the asymmetric check loss with a symmetric kernel yields an objective function that is asymptotically unbiased in expectation, as shown by the pivotal result that $\mathbb{E}[\Psi_{h}(\varepsilon_{0i}; \tau)] \to 0$. This property ensures that, in large samples, BSQR inferences\,---\,such as posterior means and credible intervals\,---\,are correctly centered on the true parameters, thereby providing a foundation for the superior out-of-sample performance observed in our simulations (Section~\ref{subsec:sim_general}), particularly at extreme quantiles where the finite-sample bias of BQR-ALD is most severe.

\section{Limiting distribution and frequentist validity}\label{sec:bvm}

While Theorem~\ref{thm:consistency} establishes posterior consistency, rigorous inference requires characterizing the limiting posterior shape and ensuring its validity even under potential model misspecification\,---\,a central theme in recent GBI literature \shortcite{Matsubara2022, Tang2022}. Unlike standard BQR, where the non-smooth check loss necessitates complex empirical process theory, BSQR's kernel-smoothed loss $L_h(\cdot; \tau)$ is twice continuously differentiable (guaranteed by the continuity of $K(\cdot)$ in \textbf{Assumption A4}). This regularity allows us to establish a BvM theorem via a classical local asymptotic normality (LAN) expansion, addressing both posterior existence and frequentist coverage.

\subsection{Global existence and well-posedness}

A limitation of standard empirical likelihood approaches is that the posterior support is restricted to the convex hull of the estimating equations. While recent methods such as PETEL \shortcite{Tang2022} introduce penalization to alleviate this emptiness problem, BSQR ensures global validity naturally without requiring auxiliary penalty terms or relaxation of moment conditions.

\begin{proposition}[Finite-sample global existence]\label{prop:existence}
    For any sample size $n \ge 1$ and bandwidth $h > 0$, the BSQR posterior density $\pi(\boldsymbol{\beta} \mid \boldsymbol{y}, \boldsymbol{\mathcal{X}})$ is strictly positive and well-defined over the entire parameter space $\mathbb{R}^d$, for any valid kernel PDF $K(\cdot)$. Unlike standard empirical likelihood approaches, BSQR does not suffer from the ``empty set'' pathology where the posterior vanishes due to convex hull constraints.
\end{proposition}
\noindent The detailed proof is provided in \hyperref[pf:existence]{Appendix A.2}.

\subsection{Rate of convergence and bias control}

Before establishing the asymptotic normality of the posterior, it is essential to characterize the convergence rate of the frequentist smoothed estimator $\hat{\boldsymbol{\beta}}_h(\tau)$. This rate determines the allowable range of bandwidths to ensure that the smoothing bias does not contaminate the asymptotic distribution.

\begin{proposition}[Rate of convergence]\label{prop:rate}
    Under \textbf{Assumption}~\ref{as:consistency_conditions}, the smoothed quantile estimator $\hat{\boldsymbol{\beta}}_h(\tau)$ satisfies:
    \begin{align}
        \| \hat{\boldsymbol{\beta}}_h(\tau) - \boldsymbol{\beta}_0(\tau) \| = O_p\left(n^{-1/2} + h^2\right).
    \end{align}
    The term $O_p(n^{-1/2})$ represents the stochastic variation (variance), while $O(h^2)$ represents the deterministic approximation bias introduced by the kernel smoothing.
\end{proposition}
\noindent The detailed proof is provided in \hyperref[pf:rate]{Appendix A.2}.

\begin{remark}[Justification for undersmoothing]
    Proposition~\ref{prop:rate} reveals that the smoothing bias is of order $O(h^2)$. To ensure that the limiting distribution is centered at the truth $\boldsymbol{\beta}_0(\tau)$ without a bias shift, we require the scaled bias to vanish: $\sqrt{n} \cdot O(h^2) \to 0$, which implies $n h^4 \to 0$. This motivates the bandwidth rate condition formally stated as \textbf{Assumption (A6)} in the next subsection.
\end{remark}

\subsection{Bernstein-von Mises theorem under misspecification}\label{sec:asymptotics}

We now derive the asymptotic normality of the posterior. While frequentist approaches often employ specification tests to validate quantile models \shortcite{Whang2006}, our Bayesian framework rigorously accounts for potential misspecification via the generalized Bernstein-von Mises theorem. To ensure robustness, we do not assume the linear quantile model is correctly specified. Instead, following \shortciteA{Matsubara2022}, we define the target parameter $\boldsymbol{\beta}^*_h(\tau)$ as the minimizer of the expected smoothed risk (the pseudo-truth):
\[
\boldsymbol{\beta}^*_h(\tau) \coloneqq \arg\min_{\boldsymbol{b}} \mathbb{E}[L_h(e_i(\boldsymbol{b}); \tau)].
\]
Note that the smoothed loss $L_h(\cdot; \tau)$ is convex and converges pointwise to the check loss $\rho_\tau(\cdot)$ as $h \to 0$. Consequently, assuming the linear specification is valid (i.e., the true conditional quantile satisfies $Q_{Y}(\tau \mid \boldsymbol{x}_i) = \boldsymbol{x}_i^\top \boldsymbol{\beta}_0(\tau)$), standard M-estimation theory for convex objective functions ensures that the pseudo-truth $\boldsymbol{\beta}^*_h(\tau)$ converges to the true parameter $\boldsymbol{\beta}_0(\tau)$ \cite{Horowitz1998a}. We define the population Hessian and score covariance matrices at this pseudo-truth:
\begin{align}
    \mathcal{H}(\boldsymbol{\beta}^*_h(\tau)) &\coloneqq \mathbb{E}\left[ \nabla^2_{\boldsymbol{\beta}} L_h(e_i(\boldsymbol{\beta}^*_h(\tau)); \tau) \right], \label{eq:hessian_H_pop} \\
    \mathcal{J}(\boldsymbol{\beta}^*_h(\tau)) &\coloneqq \mathbb{E}\left[ \nabla_{\boldsymbol{\beta}} L_h(e_i(\boldsymbol{\beta}^*_h(\tau)); \tau) \nabla_{\boldsymbol{\beta}} L_h(e_i(\boldsymbol{\beta}^*_h(\tau)); \tau)^\top \right]. \label{eq:jacobian_J_pop}
\end{align}

\begin{assumption}[Regularity for asymptotic normality]\label{as:bvm_conditions}
    In addition to \textbf{Assumption}~\ref{as:consistency_conditions}, we assume:
    \begin{enumerate}[label=\textbf{\textup{(A\arabic*)}}, start=6]
        \item \textbf{Bandwidth Rate:} The bandwidth $h \to 0$ satisfies $n h^4 \to 0$ and $n h^2 \to \infty$. The upper bound condition $n h^4 \to 0$ ensures the smoothing bias is asymptotically negligible ($o(n^{-1/2})$), a standard requirement for centering confidence intervals \cite{Horowitz1998a}. Conversely, the lower bound $n h^2 \to \infty$ is required to control the variance of the sample Hessian, ensuring its uniform convergence to the population counterpart (see condition \textbf{A7} below).
        \item \textbf{Hessian Regularity:} The population Hessian matrix $\mathcal{H}(\boldsymbol{\beta}^*_h(\tau))$ is positive definite. Furthermore, the sample Hessian matrix converges uniformly in probability to the population Hessian $\mathcal{H}(\boldsymbol{\beta})$ over a neighborhood of the pseudo-truth $\boldsymbol{\beta}^*_h(\tau)$, i.e., there exists $\delta > 0$ such that
\begin{align*}
    \sup_{\|\boldsymbol{\beta} - \boldsymbol{\beta}^*_h(\tau)\| \le \delta} \left\| \frac{1}{n} \sum_{i=1}^n \nabla^2_{\boldsymbol{\beta}} L_h(e_i(\boldsymbol{\beta}); \tau) - \mathcal{H}(\boldsymbol{\beta}) \right\| \xrightarrow{P} 0.
\end{align*}
    \end{enumerate}
\end{assumption}

\begin{theorem}[Bernstein-von Mises theorem under misspecification]\label{thm:bvm}
    Under \textbf{Assumptions}~\ref{as:consistency_conditions} and \ref{as:bvm_conditions}, for a fixed scale parameter $\theta > 0$, the posterior density of the centered parameter $\boldsymbol{\vartheta} = \sqrt{n}(\boldsymbol{\beta} - \hat{\boldsymbol{\beta}}_h(\tau))$ converges in total variation distance to the multivariate normal density:
    \begin{align*}
        \left\| \pi(\boldsymbol{\vartheta} \mid \boldsymbol{y}, \boldsymbol{\mathcal{X}}, \theta) - \phi\left(\boldsymbol{\vartheta}; \boldsymbol{0}, \big(\theta \mathcal{H}(\boldsymbol{\beta}^*_h(\tau))\big)^{-1}\right) \right\|_{TV} \xrightarrow{P} 0, \quad \text{as } n \to \infty,
    \end{align*}
    where $\|\cdot\|_{TV}$ denotes the total variation distance between densities, $\phi(\cdot)$ is the probability density function of the multivariate normal distribution, $\hat{\boldsymbol{\beta}}_h(\tau)$ is the minimizer of the sample smoothed loss, and $\boldsymbol{\beta}^*_h(\tau)$ is the pseudo-true parameter.
\end{theorem}

\noindent The proof exploits the $C^2$ smoothness of $L_h(\cdot; \tau)$ to apply a Taylor expansion of the log-posterior, avoiding the empirical process theory required for non-smooth losses (see \hyperref[pf:bvm]{Appendix A.2}).

\begin{remark}[Choice of kernel and asymptotic universality]
    While \textbf{Assumption (A4)} requires kernel continuity to facilitate LAN expansion via classical calculus, Theorem \ref{thm:bvm} remains conceptually robust to this choice. Technically, the kernel shape affects asymptotic bias and variance solely through scalar constants $\mu_2(K) = \int_{-\infty}^{\infty} u^2 K(u)\,\mathrm{d} u$ and $\nu_2(K) = \int_{-\infty}^{\infty} K(u)^2 \,\mathrm{d} u$. We focus on continuous kernels to obviate the complex empirical process theory required for discontinuous counterparts (e.g., Uniform) \cite{Horowitz1998a}, noting that they typically yield equivalent limiting behaviors in practice.
\end{remark}

\begin{remark}[Regularity and efficiency gains]
    The differentiability of $L_h(\cdot; \tau)$ (\textbf{Assumption A4}) simplifies asymptotics by anchoring BSQR within the classical smooth LAN framework, obviating the strong stochastic equicontinuity assumptions required for non-smooth scores \cite{Koenker2005, Tang2022}. Furthermore, smoothing eliminates the $O(n^{-1/2})$ oscillation terms of the check loss. While \shortciteA{Horowitz1998a} established that this yields second-order frequentist efficiency, in the Bayesian context, operating on this regularized objective landscape avoids the local instabilities of standard methods, ensuring a theoretically robust foundation for inference.
\end{remark}

However, valid frequentist inference for $\boldsymbol{\beta}_0(\tau)$ faces two impediments. First, under model misspecification, standard quasi-Bayesian posteriors exhibit covariance mismatch \shortcite{Tang2022}: the posterior covariance $(\theta \mathcal{H}(\boldsymbol{\beta}^*_h(\tau)))^{-1}$ generally diverges from the true ``sandwich'' variance $\mathcal{H}(\boldsymbol{\beta}^*_h(\tau))^{-1} \mathcal{J}(\boldsymbol{\beta}^*_h(\tau)) \mathcal{H}(\boldsymbol{\beta}^*_h(\tau))^{-1}$. Departing from the empirical likelihood remedy of \shortciteA{Tang2022}, we calibrate the scale parameter of the smoothed objective directly to align these variances, following the sandwich adjustment strategy of \shortciteA{Mueller2013}. Second, smoothing introduces deterministic bias. The following corollary establishes that combining an undersmoothing schedule (controlling bias) with a generalized information equality condition (calibrating variance) yields valid asymptotic frequentist coverage.

\begin{corollary}[Asymptotic coverage and generalized Wilks' phenomenon]\label{cor:validity_true}
    Suppose the bandwidth satisfies the undersmoothing condition $n h^4 \to 0$ as specified in \textbf{Assumption (A6)}. Let $\hat{\theta}$ be a consistent estimator of the population scale parameter satisfying the generalized information equality (first-moment matching):
    \begin{align}\label{eq:calibration_condition}
        \hat{\theta} \xrightarrow{P} \theta^* \coloneqq \frac{d}{\mathrm{tr}\left( \mathcal{J}(\boldsymbol{\beta}^*_h(\tau)) \mathcal{H}(\boldsymbol{\beta}^*_h(\tau))^{-1} \right)},
    \end{align}
    where $d$ denotes the dimension of $\boldsymbol{\beta}$.\footnote{\label{fn:plug_in}Specifically, such an estimator is given by the plug-in form $\hat{\theta} = d / \mathrm{tr}( \widehat{\boldsymbol{J}}_n(\hat{\boldsymbol{\beta}}_h(\tau)) [ \widehat{\boldsymbol{H}}_n(\hat{\boldsymbol{\beta}}_h(\tau)) ]^{-1})$, where $\widehat{\boldsymbol{J}}_n(\hat{\boldsymbol{\beta}}_h(\tau))$ is the sample covariance of scores.} Based on this calibration, let $C_{1-\alpha}$ be the $(1-\alpha)$-level Bayesian credible set defined as the Wald-type ellipsoidal region:
    \[
    C_{1-\alpha} \coloneqq \left\{ \boldsymbol{\beta} : n(\boldsymbol{\beta} - \hat{\boldsymbol{\beta}}_h(\tau))^\top (\hat{\theta} \mathcal{H}(\boldsymbol{\beta}^*_h(\tau))) (\boldsymbol{\beta} - \hat{\boldsymbol{\beta}}_h(\tau)) \le \chi^2_{d, 1-\alpha} \right\},
    \]
    where $\chi^2_{d, 1-\alpha}$ is the $(1-\alpha)$-quantile of the chi-squared distribution with $d$ degrees of freedom.\footnote{\label{eq:sample_Hessian}In practice, the population Hessian $\mathcal{H}(\boldsymbol{\beta}^*_h(\tau))$ is unknown and is replaced by the sample Hessian $\widehat{\boldsymbol{H}}_n(\hat{\boldsymbol{\beta}}_h(\tau)) \coloneqq n^{-1}\sum_{i=1}^n \nabla^2_{\boldsymbol{\beta}} L_h(e_i(\hat{\boldsymbol{\beta}}_h(\tau)); \tau)$. By \textbf{Assumption (A7)} (uniform convergence) and Slutsky's theorem, this replacement does not alter the asymptotic distribution of the quadratic form.}
 
    Then, this Bayesian credible set achieves nominal asymptotic frequentist coverage for the true parameter $\boldsymbol{\beta}_0(\tau)$:
    \begin{align*}
        \mathbb{P}_{\boldsymbol{y} \mid \boldsymbol{\beta}_0(\tau)} \left( \boldsymbol{\beta}_0(\tau) \in C_{1-\alpha} \right) \to 1 - \alpha, \quad \text{as } n \to \infty.
    \end{align*}
    This result establishes a generalized Wilks' theorem for the smoothed quantile objective, guaranteeing that the calibrated posterior uncertainty is valid in the frequentist sense to the first order of approximation.
\end{corollary}

\noindent The detailed proof is provided in \hyperref[pf:cor_validity_true]{Appendix A.2}.

\noindent\textbf{Example (Analytic calibration in standard cases).} 
To build intuition for the abstract condition in Eq.~\eqref{eq:calibration_condition}, consider the standard i.i.d. linear model defined in Eq.~\eqref{eq:bsqr_model}, where errors satisfy $f_{\varepsilon}(0)>0$. In this scenario, as the bandwidth $h \to 0$, the smoothed population matrices approach the classical limits established in standard quantile regression theory \shortcite{Koenker2005}:
\[
    \mathcal{H}(\boldsymbol{\beta}^*_h(\tau)) \to f_{\varepsilon}(0) \Sigma_X \quad \text{and} \quad \mathcal{J}(\boldsymbol{\beta}^*_h(\tau)) \to \tau(1-\tau) \Sigma_X,
\]
where $\Sigma_X = \mathbb{E}[\boldsymbol{x}\boldsymbol{x}^\top]$. We apply the trace calibration formula from Eq.~\eqref{eq:calibration_condition} to these limits. First, we compute the product matrix inside the trace:
\[
    \mathcal{J}(\boldsymbol{\beta}^*_h(\tau)) \left[\mathcal{H}(\boldsymbol{\beta}^*_h(\tau))\right]^{-1} \to (\tau(1-\tau)\Sigma_X) (f_{\varepsilon}(0)\Sigma_X)^{-1} = \frac{\tau(1-\tau)}{f_{\varepsilon}(0)} \boldsymbol{I}_d,
\]
where $\boldsymbol{I}_d$ is the $d \times d$ identity matrix. Its trace is simply $d \cdot \frac{\tau(1-\tau)}{f_{\varepsilon}(0)}$. Next, substituting this trace value into the definition of the target parameter yields:
\[
    \theta^* = \frac{d}{\mathrm{tr}\left( \mathcal{J}(\boldsymbol{\beta}^*_h(\tau)) \left[\mathcal{H}(\boldsymbol{\beta}^*_h(\tau))\right]^{-1} \right)} = \frac{d}{d \cdot \frac{\tau(1-\tau)}{f_{\varepsilon}(0)}} = \frac{f_{\varepsilon}(0)}{\tau(1-\tau)}.
\]
This result explicitly recovers the standard scalar correction factor found in classical quantile regression theory. It demonstrates that our generalized trace calibration target $\theta^*$ naturally simplifies to the correct theoretical scaling in well-specified linear models. Consequently, the plug-in estimator $\hat{\theta}$ defined in Footnote~\ref{fn:plug_in} will consistently estimate this classical factor, ensuring valid coverage without requiring prior knowledge of the error density. The geometric intuition behind this variance rescaling is illustrated in Figure~\ref{fig:calibration_concept}.

\begin{figure}[htbp]
\centering
\resizebox{0.5\textwidth}{!}{%
\begin{tikzpicture}
    \definecolor{acadRed}{RGB}{228, 26, 28}    
    \definecolor{acadBlue}{RGB}{55, 126, 184}  
    \definecolor{acadGray}{RGB}{77, 77, 77}    

    \begin{axis}[
        width=10cm, height=7.5cm,
        axis lines=left,            
        axis line style={-Stealth, thick, color=black!80}, 
        xlabel={Parameter $\beta$},
        ylabel={Density},
        ymin=0, ymax=1,           
        xmin=-4.5, xmax=4.5,
        xtick=\empty,
        ytick=\empty,
        legend pos=north east,      
        legend cell align=left,     
        legend style={
            draw=black!40, thin, rounded corners=2pt,
            fill=white, fill opacity=0.95, text opacity=1,
            font=\scriptsize, inner xsep=5pt, inner ysep=4pt
        },
        axis on top=false 
    ]

    \addplot [domain=-4.5:4.5, samples=300, color=acadRed, thick, fill=acadRed, fill opacity=0.12, smooth]
    {1/(0.6*sqrt(2*pi)) * exp(-((x-0)^2)/(2*0.6^2))};
    \addlegendentry{Uncalibrated Posterior ($\theta=1$)}

    \addplot [domain=-4.5:4.5, samples=200, color=acadBlue, very thick, opacity=0.8, smooth]
    {1/(1.0*sqrt(2*pi)) * exp(-((x-0)^2)/(2*1.0^2))};
    \addlegendentry{Calibrated Posterior ($\hat{\theta} \xrightarrow{P} \theta^{*}$)}

    \addplot [domain=-4.5:4.5, samples=200, color=acadGray, dashed, line width=1.1pt, smooth]
    {1/(1.0*sqrt(2*pi)) * exp(-((x-0)^2)/(2*1.0^2))};
    \addlegendentry{Sampling Distribution (Sandwich)}

    \draw[->, >=Stealth, thick, acadBlue] (axis cs:0.45, 0.45) to[bend left=20] (axis cs:1.3, 0.25);
    \draw[->, >=Stealth, thick, acadBlue] (axis cs:-0.45, 0.45) to[bend right=20] (axis cs:-1.3, 0.25);

    \node[anchor=south, acadBlue, font=\scriptsize] at (axis cs: 0, 0.45) {Scale adjustment};

    \end{axis}
\end{tikzpicture}
}
\captionsetup{width=0.8\textwidth, justification=justified, font=small}
\caption{Geometric intuition of Bayesian calibration. Under model misspecification, the uncalibrated posterior (red) is typically overly concentrated (overconfident). The scale parameter $\theta$ rescales the posterior covariance (blue) to match the frequentist sampling variance (dashed gray), ensuring valid asymptotic coverage.}\label{fig:calibration_concept}
\end{figure}

\begin{remark}[Bayesian implementation strategy]
    While Corollary~\ref{cor:validity_true} establishes validity by fixing the scale to a consistent estimator $\hat{\theta}$ (e.g., the plug-in form derived in Footnote~\ref{fn:plug_in}), our practical implementation adopts a fully Bayesian approach that treats $\theta$ as a random parameter. This integrates out the scale uncertainty, offering a robust alternative to plug-in calibration. We provide a detailed econometric justification for this strategy in Remark~\ref{rem:inference_strategy}.
\end{remark}

\section{Posterior propriety under various prior specifications}\label{sec:prior_selection_and_propriety_actual_concise}

This section investigates posterior propriety for the BSQR model under common prior choices for the regression coefficients $\boldsymbol{\beta}$ and the scale parameter $\theta$. We analyze cases ranging from an improper uniform prior for $\boldsymbol{\beta}$ to hierarchical structures. Our analysis uses the BSQR likelihood from Eq.~\eqref{eq:sqr_likelihood} and the shorthand $S(\boldsymbol{\beta}; \tau, h) \coloneqq \sum_{i=1}^n L_h(e_i(\boldsymbol{\beta}); \tau)$ for the sum of smoothed losses.

Although Corollary~\ref{cor:validity_true} establishes asymptotic validity based on a consistent estimator $\hat{\theta}$, finite-sample Bayesian practice treats $\theta$ as a random parameter to capture scale uncertainty. Establishing posterior propriety under this specification is therefore essential to validate the joint Markov chain Monte Carlo (MCMC) sampling (Section~\ref{sec:sampling}). We structure this analysis in three progressive steps: first deriving baseline conditions under an improper uniform prior (Section~\ref{subsec:improper_uniform_beta}), extending them to a standard Gaussian prior (Theorem~\ref{thm:propriety2}), and finally validating the hierarchical Gaussian specification (Corollary~\ref{cor:1}) used in our implementation.

\subsection{Propriety with an improper uniform prior for \texorpdfstring{$\boldsymbol{\beta}$}{beta}}\label{subsec:improper_uniform_beta}

We begin by analyzing posterior propriety under an improper uniform prior, $\pi(\boldsymbol{\beta}) \propto 1$. This setting is particularly insightful, as it isolates the likelihood's contribution to integrability from the regularizing effects of a proper prior for $\boldsymbol{\beta}$. We first establish baseline conditions for a fixed $\theta$ and then generalize to settings where $\theta$ has its own prior. Theorem~\ref{thm:propriety1}, proved in \hyperref[pf:thm_propriety1]{Appendix A.3}, summarizes these results.

\begin{theorem}[Propriety under improper uniform prior for $\boldsymbol{\beta}$]\label{thm:propriety1}
Let the kernel function $K(\cdot)$ be non-negative, integrate to unity, and possess a finite first absolute moment (i.e., $\int_{-\infty}^{\infty} |u|K(u)\,\mathrm{d} u < \infty$). Separately, assume the $n \times d$ design matrix $\boldsymbol{\mathcal{X}}$ has full column rank $d$ and $\pi(\boldsymbol{\beta}) \propto 1$.
\begin{enumerate}
	\item[(i)] (Fixed $\theta$) If the scale parameter $\theta > 0$ is fixed, then the posterior distribution of $\boldsymbol{\beta}$, $\pi(\boldsymbol{\beta} \mid \boldsymbol{y}, \boldsymbol{\mathcal{X}}, \theta)$, is proper. Equivalently,
\begin{align*}
    0 < \int_{\mathbb{R}^d} L(\boldsymbol{y} \mid \boldsymbol{\mathcal{X}}, \boldsymbol{\beta}, \theta; \tau, h) \pi(\boldsymbol{\beta}) \,\mathrm{d}\boldsymbol{\beta} < \infty.
\end{align*}

	\item[(ii)] (Prior for $\theta$) If the scale parameter $\theta$ is assigned a prior distribution $\pi(\theta)$, then for a certain constant $C_S \ge 0$, the joint posterior distribution $\pi(\boldsymbol{\beta}, \theta \mid \boldsymbol{y}, \boldsymbol{\mathcal{X}})$ is proper if the integral
\begin{align}\label{eq:integral_condition_theta_prior}
    \int_0^\infty \frac{\pi(\theta) e^{\theta C_S}}{(Z(\theta, \tau, h))^n \theta^d} \,\mathrm{d}\theta
\end{align}
converges to a finite positive value. As a specific instance, if $\pi(\theta)\sim \text{Gamma}(a,b)$ with $a,b > 0$, and further assuming that $L_h(\cdot; \tau)$ is twice continuously differentiable and attains its positive global minimum $L_{\min} = \min_u L_h(u; \tau) > 0$ at a unique point $u_{\min}$ where $L_h''(u_{\min};\tau)>0$, the posterior is proper if $b > C_S + n L_{\min}$ and $a + k_Z > d$, where $k_Z \ge 0$ is a constant such that $(Z(\theta, \tau, h))^{-n} = O(\theta^{k_Z})$ as $\theta \to 0$.
\end{enumerate}
\end{theorem}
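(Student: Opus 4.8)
The plan is to prove propriety by establishing a linear coercivity bound for the summed smoothed loss $S(\boldsymbol{\beta}; \tau, h) = \sum_{i=1}^n L_h(e_i(\boldsymbol{\beta}); \tau)$, which forces the likelihood to decay exponentially in $\|\boldsymbol{\beta}\|$, and then to control the remaining $\theta$-integral through the tail behavior of the normalizing constant $Z(\theta, \tau, h)$. For part (i), I would first derive a pointwise linear lower bound on the smoothed loss. Since $\rho_\tau(a) \ge \min(\tau, 1-\tau)\,|a|$ and $K_h$ is a probability density with finite first absolute moment, writing $L_h(e; \tau) = \int_{-\infty}^{\infty}\rho_\tau(e + v)K_h(v)\,\diff v$ and applying $|e+v| \ge |e| - |v|$ yields $L_h(e; \tau) \ge c_1 |e| - c_2$ with $c_1 = \min(\tau, 1-\tau) > 0$ and $c_2 = c_1 h\mu_1 \ge 0$, where $\mu_1 \coloneqq \int_{-\infty}^{\infty}|u|K(u)\,\diff u$. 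I would then exploit the full column rank of $\boldsymbol{\mathcal{X}}$: the map $\boldsymbol{\beta}\mapsto\boldsymbol{\mathcal{X}}\boldsymbol{\beta}$ is injective, so $\kappa \coloneqq \inf_{\|\boldsymbol{\beta}\|=1}\|\boldsymbol{\mathcal{X}}\boldsymbol{\beta}\|_1 > 0$ by compactness of the unit sphere, giving $\sum_{i=1}^n|e_i(\boldsymbol{\beta})| = \|\boldsymbol{y} - \boldsymbol{\mathcal{X}}\boldsymbol{\beta}\|_1 \ge \kappa\|\boldsymbol{\beta}\| - \|\boldsymbol{y}\|_1$ by the reverse triangle inequality. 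Combining the two bounds yields the coercivity estimate $S(\boldsymbol{\beta}; \tau, h) \ge A\|\boldsymbol{\beta}\| - B$ with $A = c_1\kappa > 0$ and $B = c_1\|\boldsymbol{y}\|_1 + nc_2 \ge 0$.

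Consequently $\int_{\mathbb{R}^d}\exp(-\theta S)\,\diff\boldsymbol{\beta} \le e^{\theta B}\int_{\mathbb{R}^d}e^{-\theta A\|\boldsymbol{\beta}\|}\,\diff\boldsymbol{\beta}$, and a passage to spherical coordinates evaluates the latter as $\omega_{d-1}\Gamma(d)(\theta A)^{-d} < \infty$; strict positivity is immediate because the integrand never vanishes. This settles the fixed-$\theta$ claim and, crucially, identifies the constant $C_S \coloneqq B$ that appears in part (ii).

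For part (ii) I would integrate out $\boldsymbol{\beta}$ first. The bound just obtained gives $\int_{\mathbb{R}^d}\exp(-\theta S)\,\diff\boldsymbol{\beta} \le \text{const}\cdot e^{\theta C_S}\theta^{-d}$ uniformly in $\theta$, so the joint posterior mass is at most a constant multiple of $\int_0^\infty \pi(\theta)(Z(\theta,\tau,h))^{-n}e^{\theta C_S}\theta^{-d}\,\diff\theta$, and convergence of this integral implies propriety. For the Gamma instance I would substitute $\pi(\theta)\propto\theta^{a-1}e^{-b\theta}$ and examine both endpoints through the tail behavior of $Z$. As $\theta\to 0^+$, the assumed order $(Z(\theta,\tau,h))^{-n} = \mathcal{O}(\theta^{k_Z})$ makes the integrand $\mathcal{O}(\theta^{a-1+k_Z-d})$, which is integrable near zero exactly when $a + k_Z > d$. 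As $\theta\to\infty$, a Laplace-type lower bound on $Z$ is required: restricting the defining integral to a neighborhood of the unique minimizer $u_{\min}$ and using $L_h(u;\tau) \le L_{\min} + \tfrac{1}{2}(L_h''(u_{\min})+\epsilon)(u-u_{\min})^2$ there gives $Z(\theta,\tau,h) \ge \text{const}\cdot e^{-\theta L_{\min}}\theta^{-1/2}$, so that $(Z)^{-n} = \mathcal{O}(e^{n\theta L_{\min}}\theta^{n/2})$ and the integrand is $\mathcal{O}\!\big(\theta^{a-1+n/2-d}e^{\theta(C_S - b + nL_{\min})}\big)$. This decays exponentially precisely when $b > C_S + nL_{\min}$, which renders the polynomial factor harmless; the two endpoint conditions together yield the stated sufficient conditions.

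I expect the main obstacle to be the large-$\theta$ Laplace lower bound on $Z(\theta,\tau,h)$. This is exactly where the hypotheses that $L_h(\cdot;\tau)$ is twice continuously differentiable with a unique global minimizer and strictly positive curvature $L_h''(u_{\min}) > 0$ are indispensable: they guarantee that $Z$ does not decay faster than $e^{-\theta L_{\min}}\theta^{-1/2}$, which is what pins the exponential-growth threshold at $b > C_S + nL_{\min}$. By comparison, the coercivity argument of part (i) is routine once the linear lower bound on $L_h$ is in hand, as is the behavior of $Z$ near the origin.
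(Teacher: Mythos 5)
Your proposal is correct and follows essentially the same route as the paper: the linear coercivity bound $L_h(e;\tau)\ge c_1|e|-c_2$ from the finite first absolute moment, the full-rank/compactness argument to get $S(\boldsymbol{\beta};\tau,h)\ge A\|\boldsymbol{\beta}\|-B$, spherical coordinates for the $\boldsymbol{\beta}$-integral, and the two-endpoint analysis of the $\theta$-integral with a Laplace estimate at infinity and the $\mathcal{O}(\theta^{k_Z})$ hypothesis at zero. The only (cosmetic) differences are that you work directly with the $\ell_1$ norm instead of passing through $\ell_1$--$\ell_2$ equivalence, and you correctly isolate that only a one-sided Laplace lower bound on $Z$ is needed, where the paper quotes the full two-sided asymptotic.
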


\begin{remark}[Interplay between prior and sample size]
For Theorem~\ref{thm:propriety1}(ii), the condition $b > C_S + n L_{\min}$ on the Gamma prior hyperparameter $b$ is a significant result. Unlike standard sample-size-independent conditions, it reveals a crucial interplay between the prior specification ($b$), the sample size ($n$), and the properties of the smoothed loss function ($L_{\min}$). This mandates that the prior on $\theta$ must become increasingly informative (i.e., have a larger rate $b$) as $n$ grows, to ensure propriety under an improper prior for $\boldsymbol{\beta}$.

This requirement stems from the powerful influence of the likelihood, where $n$ appears as an exponent in the normalizing constant term $(Z(\theta, \tau, h))^{-n}$. Its asymptotic behavior for large $\theta$, derived via Laplace's method, is $(Z(\theta, \tau, h))^{-n} = O\!\left(\theta^{n/2} e^{n\theta L_{\min}}\right)$.
For the posterior to be integrable, the exponential decay of the Gamma prior ($e^{-b\theta}$) must dominate this exponential growth, which directly yields the condition $b > C_S + n L_{\min}$. The second condition, $a+k_Z > d$, is a more conventional safeguard against non-identifiability as $\theta \to 0$.
\end{remark}

\subsection{Propriety with proper priors for \texorpdfstring{$\boldsymbol{\beta}$}{beta}}

We now transition to proper priors for $\boldsymbol{\beta}$, which can enhance regularization and simplify propriety arguments. A conventional choice, seen in many BQR methodologies \shortcite{Kozumi2011,Li2010}, is a Gaussian prior, often with a large variance to create a weakly informative yet mathematically convenient specification. Theorem~\ref{thm:propriety2}, proved in \hyperref[pf:thm_propriety2]{Appendix A.3}, establishes the propriety conditions under this prior.

\begin{theorem}[Propriety under Gaussian prior for $\boldsymbol{\beta}$]\label{thm:propriety2}
Let the kernel function $K(\cdot)$ be non-negative and integrate to unity, and the prior for $\boldsymbol{\beta}$ be Gaussian, $\pi(\boldsymbol{\beta} \mid \sigma_{\boldsymbol{\beta}}^2) = (2\pi \sigma_{\boldsymbol{\beta}}^2)^{-d/2} \exp\left( -\frac{\|\boldsymbol{\beta}\|_2^2}{2\sigma_{\boldsymbol{\beta}}^2} \right)$, with a fixed prior variance $\sigma_{\boldsymbol{\beta}}^2 > 0$.

\begin{itemize}
    \item[(i)] (Fixed $\theta$) If the scale parameter $\theta > 0$ is fixed, the posterior distribution $\pi(\boldsymbol{\beta} \mid \boldsymbol{y}, \boldsymbol{\mathcal{X}}, \theta, \sigma_{\boldsymbol{\beta}}^2)$ is proper. That is,
        \begin{align*}
          0 < \int_{\mathbb{R}^d} L(\boldsymbol{y} \mid \boldsymbol{\mathcal{X}}, \boldsymbol{\beta}, \theta; \tau, h) \pi(\boldsymbol{\beta} \mid \sigma_{\boldsymbol{\beta}}^2) \,\mathrm{d}\boldsymbol{\beta} < \infty.
        \end{align*}

	\item[(ii)] (Prior for $\theta$) If the scale parameter $\theta$ is assigned a proper prior distribution $\pi(\theta)$, then the joint posterior distribution $\pi(\boldsymbol{\beta}, \theta \mid \boldsymbol{y}, \boldsymbol{\mathcal{X}}, \sigma_{\boldsymbol{\beta}}^2)$ is proper if the integral
        \begin{align*}
          \int_0^\infty (Z(\theta, \tau, h))^{-n} \pi(\theta) \,\mathrm{d}\theta
        \end{align*}
        converges to a finite positive value. As a specific instance, if $\pi(\theta) \sim \text{Gamma}(a_\theta, b_\theta)$ with $a_\theta > 0$ and $b_\theta > 0$, the joint posterior is proper if $b_\theta > n L_{\min}$ and $a_\theta + k_Z > 0$, where $L_{\min} = \min_u L_h(u; \tau)$ and $k_Z$ is defined as in Theorem~\ref{thm:propriety1}.
\end{itemize}
\end{theorem}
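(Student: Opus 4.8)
The plan is to exploit the single structural fact that decouples the two integrations: because the check loss $\rho_\tau$ is non-negative and the kernel $K$ is a non-negative density, the smoothed loss $L_h(e;\tau) = (\rho_\tau * K_h)(e)$ is non-negative for every $e$, so the exponential factor in the likelihood \eqref{eq:sqr_likelihood} satisfies $\exp(-\theta \sum_{i=1}^n L_h(e_i(\boldsymbol{\beta});\tau)) \le 1$ for all $\boldsymbol{\beta}$ and all $\theta > 0$. Unlike the improper-prior case of Theorem~\ref{thm:propriety1}, where integrability in $\boldsymbol{\beta}$ had to be extracted from the design matrix and produced the $\theta^{-d} e^{\theta C_S}$ factors, here the Gaussian prior already supplies all the integrability we need, which is why its full rank is not even hypothesized.

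For Part (i), fix $\theta > 0$. I would first note that $0 < Z(\theta,\tau,h) < \infty$: positivity is immediate from the positive integrand, while finiteness follows because $L_h(u;\tau)$ inherits the linear growth of $\rho_\tau$ as $|u| \to \infty$, forcing $\exp(-\theta L_h(u;\tau))$ to decay exponentially. Then, bounding the exponential by $1$ and using that the Gaussian prior integrates to unity,
\[
\int_{\mathbb{R}^d} L(\boldsymbol{y}\mid\boldsymbol{\mathcal{X}},\boldsymbol{\beta},\theta;\tau,h)\,\pi(\boldsymbol{\beta}\mid\sigma_{\boldsymbol{\beta}}^2)\,\diff\boldsymbol{\beta} \le (Z(\theta,\tau,h))^{-n}\int_{\mathbb{R}^d}\pi(\boldsymbol{\beta}\mid\sigma_{\boldsymbol{\beta}}^2)\,\diff\boldsymbol{\beta} = (Z(\theta,\tau,h))^{-n} < \infty,
\]
which gives finiteness; strict positivity of the integral is clear since the integrand is everywhere positive.

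For Part (ii), the joint posterior's normalizing constant is the double integral over $(\boldsymbol{\beta},\theta)$. Since the integrand is non-negative, I would apply Tonelli's theorem to integrate out $\boldsymbol{\beta}$ first, reuse the Part (i) bound to get the inner integral $\le (Z(\theta,\tau,h))^{-n}$, and conclude
\[
\int_0^\infty\!\!\int_{\mathbb{R}^d} L\,\pi(\boldsymbol{\beta}\mid\sigma_{\boldsymbol{\beta}}^2)\,\pi(\theta)\,\diff\boldsymbol{\beta}\,\diff\theta \le \int_0^\infty (Z(\theta,\tau,h))^{-n}\pi(\theta)\,\diff\theta.
\]
Hence convergence of the right-hand side to a finite value (its positivity being automatic from the positive integrand and a proper $\pi(\theta)$) is sufficient for joint propriety. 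The Gamma instance is then a direct substitution of $\pi(\theta) \propto \theta^{a_\theta-1} e^{-b_\theta\theta}$, reducing the criterion to convergence of $\int_0^\infty (Z(\theta,\tau,h))^{-n}\theta^{a_\theta-1} e^{-b_\theta\theta}\,\diff\theta$.

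The proof has no serious obstacle, which is itself the point: the key conceptual step is recognizing that the non-negativity of the smoothed loss caps the likelihood by $(Z(\theta,\tau,h))^{-n}$ \emph{uniformly} in $\boldsymbol{\beta}$, so the Gaussian prior's own integrability carries Part (i) and, via Tonelli, reduces Part (ii) to a purely one-dimensional integrability question in $\theta$. The only point demanding a little care is verifying $0 < Z(\theta,\tau,h) < \infty$ for all $\theta > 0$; and if one wished to characterize \emph{when} the final $\theta$-integral converges rather than merely assume it, one would have to control $(Z(\theta,\tau,h))^{-n}$ as $\theta \to 0$ (where $Z \to \infty$) and as $\theta \to \infty$ (where Laplace's method yields polynomial growth of $(Z)^{-n}$ pitted against the exponential decay $e^{-b_\theta\theta}$ of the Gamma prior), but the theorem as stated offloads this onto its hypothesis.
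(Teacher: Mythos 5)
Your proposal is correct and follows essentially the same route as the paper's proof: both rest on the observation that $S(\boldsymbol{\beta};\tau,h)\ge 0$ makes the exponential likelihood factor bounded by $1$, so the Gaussian prior's own integrability yields Part (i), and Tonelli's theorem reduces Part (ii) to the stated one-dimensional condition in $\theta$. The only cosmetic difference is that the paper bounds the unnormalized Gaussian integral by $(2\pi\sigma_{\boldsymbol{\beta}}^2)^{d/2}$ explicitly rather than using that the normalized prior integrates to unity.
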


A natural extension treats the prior variance $\sigma_{\boldsymbol{\beta}}^2$ as random, allowing the data to inform its scale while increasing flexibility. Assigning $\sigma_{\boldsymbol{\beta}}^2$ an Inverse-Gamma hyperprior, favored for its positive support and tractability, leads to the propriety result in the following corollary. The detailed proof is provided in \hyperref[pf:cor_1]{Appendix A.3}.

\begin{corollary}[Propriety under hierarchical Gaussian prior for $\boldsymbol{\beta}$]\label{cor:1}
Let the kernel $K(\cdot)$ be non-negative and integrate to unity.
The prior for $\boldsymbol{\beta}$ is conditionally Gaussian: $\pi(\boldsymbol{\beta} \mid \sigma_{\boldsymbol{\beta}}^2) = (2\pi \sigma_{\boldsymbol{\beta}}^2)^{-d/2} \exp\left( -\frac{\|\boldsymbol{\beta}\|_2^2}{2\sigma_{\boldsymbol{\beta}}^2} \right)$, and the hyperprior for $\sigma_{\boldsymbol{\beta}}^2$ is Inverse-Gamma: $\pi(\sigma_{\boldsymbol{\beta}}^2 \mid a_0, b_0) = \text{IG}(\sigma_{\boldsymbol{\beta}}^2 \mid a_0, b_0)$ with $a_0 > 0$ and $b_0 > 0$.

\begin{itemize}
    \item[(i)] (Fixed $\theta$) If $\theta > 0$ is fixed, the marginal posterior distribution $\pi(\boldsymbol{\beta}, \sigma_{\boldsymbol{\beta}}^2 \mid \boldsymbol{y}, \boldsymbol{\mathcal{X}}, \theta, a_0, b_0)$ is proper. Equivalently,
        \begin{align*}
          0 < \int_0^\infty \int_{\mathbb{R}^d} L(\boldsymbol{y} \mid \boldsymbol{\mathcal{X}}, \boldsymbol{\beta}, \theta; \tau, h) \pi(\boldsymbol{\beta}\mid \sigma_{\boldsymbol{\beta}}^2) \pi(\sigma_{\boldsymbol{\beta}}^2\mid a_0, b_0) \,\mathrm{d}\boldsymbol{\beta} \,\mathrm{d}\sigma_{\boldsymbol{\beta}}^2 < \infty.
        \end{align*}
	
	\item[(ii)] (Prior for $\theta$) If the scale parameter $\theta$ is assigned a proper prior distribution $\pi(\theta)$, then the joint posterior distribution $\pi(\boldsymbol{\beta}, \sigma_{\boldsymbol{\beta}}^2, \theta \mid \boldsymbol{y}, \boldsymbol{\mathcal{X}}, a_0, b_0)$ is proper if the integral
        \begin{align*}
          \int_0^\infty (Z(\theta, \tau, h))^{-n} \pi(\theta) \,\mathrm{d}\theta
        \end{align*}
        converges to a finite positive value.
        As a specific instance, if $\pi(\theta) \sim \text{Gamma}(a_\theta, b_\theta)$ with $a_\theta > 0$ and $b_\theta > 0$, the joint posterior is proper if $b_\theta > n L_{\min}$ and $a_\theta + k_Z > 0$, where $L_{\min}$ and $k_Z$ are defined as in Theorem~\ref{thm:propriety1}.
\end{itemize}
\end{corollary}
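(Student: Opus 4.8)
The plan is to exploit the conjugate structure of the hierarchical prior to reduce both claims to the Gaussian-prior case already settled in Theorem~\ref{thm:propriety2}. Since every factor in the integrand is non-negative, Tonelli's theorem permits integrating out the hyperparameter $\sigma_{\boldsymbol{\beta}}^2$ first. The conditional Gaussian $\pi(\boldsymbol{\beta}\mid\sigma_{\boldsymbol{\beta}}^2)$ and the Inverse-Gamma hyperprior $\pi(\sigma_{\boldsymbol{\beta}}^2\mid a_0,b_0)$ together constitute a proper joint prior on $(\boldsymbol{\beta},\sigma_{\boldsymbol{\beta}}^2)$, so the marginal
\[
\pi(\boldsymbol{\beta}) = \int_0^\infty \pi(\boldsymbol{\beta}\mid\sigma_{\boldsymbol{\beta}}^2)\,\pi(\sigma_{\boldsymbol{\beta}}^2\mid a_0,b_0)\,\diff\sigma_{\boldsymbol{\beta}}^2
\]
is automatically a proper probability density on $\mathbb{R}^d$. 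Carrying out the standard Gaussian--Inverse-Gamma integral identifies $\pi(\boldsymbol{\beta})$ as a multivariate Student-$t$ kernel, $\pi(\boldsymbol{\beta}) \propto \bigl(b_0 + \tfrac{1}{2}\|\boldsymbol{\beta}\|_2^2\bigr)^{-(a_0+d/2)}$, which is bounded and integrates to unity; for the argument, however, only its properness is needed.

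For part (i), after swapping the order of integration the target reduces to $\int_{\mathbb{R}^d} L(\boldsymbol{y}\mid\boldsymbol{\mathcal{X}},\boldsymbol{\beta},\theta;\tau,h)\,\pi(\boldsymbol{\beta})\,\diff\boldsymbol{\beta}$, precisely the quantity controlled in Theorem~\ref{thm:propriety2}(i). Because $L_h\ge 0$, the likelihood in Eq.~\eqref{eq:sqr_likelihood} is bounded above by $(Z(\theta,\tau,h))^{-n}$ for fixed $\theta$, with $Z(\theta,\tau,h)\in(0,\infty)$ exactly as in Theorem~\ref{thm:propriety2}. Hence the integral is dominated by $(Z(\theta,\tau,h))^{-n}\int_{\mathbb{R}^d}\pi(\boldsymbol{\beta})\,\diff\boldsymbol{\beta} = (Z(\theta,\tau,h))^{-n}<\infty$, and strict positivity follows since the integrand is continuous and strictly positive on a set of positive Lebesgue measure.

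For part (ii), the same Tonelli reduction collapses the $\sigma_{\boldsymbol{\beta}}^2$-layer to the proper marginal $\pi(\boldsymbol{\beta})$, leaving
\[
\int_0^\infty\!\!\int_{\mathbb{R}^d} L(\boldsymbol{y}\mid\boldsymbol{\mathcal{X}},\boldsymbol{\beta},\theta;\tau,h)\,\pi(\boldsymbol{\beta})\,\pi(\theta)\,\diff\boldsymbol{\beta}\,\diff\theta.
\]
Integrating out $\boldsymbol{\beta}$ and using $\exp(-\theta S(\boldsymbol{\beta};\tau,h))\le 1$ bounds the inner integral by $(Z(\theta,\tau,h))^{-n}$, so the whole expression is dominated by $\int_0^\infty (Z(\theta,\tau,h))^{-n}\pi(\theta)\,\diff\theta$, which is exactly the stated convergence condition. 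The Gamma specialization then follows by substituting $\pi(\theta)\propto\theta^{a_\theta-1}e^{-b_\theta\theta}$.

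The calculation is largely bookkeeping, and I expect the only genuinely delicate points to be the justification of the integration-order interchange and the verification that $\pi(\boldsymbol{\beta})$ is a bona fide proper density\,---\,both of which rest on non-negativity of the integrand (Tonelli) and on the properness of the Inverse-Gamma hyperprior. Notably, in contrast to the improper-uniform case of Theorem~\ref{thm:propriety1}, no sample-size-dependent constraint on the hyperparameters arises here, because the proper marginal prior already supplies integrability in $\boldsymbol{\beta}$; the entire burden for part (ii) is thereby transferred onto the behavior of $(Z(\theta,\tau,h))^{-n}$ as $\theta\to 0$ and $\theta\to\infty$, identical to Theorem~\ref{thm:propriety2}(ii).
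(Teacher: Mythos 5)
Your proposal is correct and follows essentially the route the paper intends: by Tonelli the hierarchical layer collapses to a proper (Student-$t$) marginal prior on $\boldsymbol{\beta}$, the likelihood is bounded by $(Z(\theta,\tau,h))^{-n}$ since $\exp(-\theta S(\boldsymbol{\beta};\tau,h))\le 1$, and the argument reduces to the bounds already established in the proof of Theorem~\ref{thm:propriety2}. Whether one integrates out $\sigma_{\boldsymbol{\beta}}^2$ first (as you do) or $\boldsymbol{\beta}$ first (as the paper's Theorem~\ref{thm:propriety2} proof suggests) is immaterial, so this is the same proof.
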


\begin{remark}[Relaxation of conditions via proper priors]
A proper prior for $\boldsymbol{\beta}$ (even if conditional) significantly simplifies propriety arguments. Since the likelihood term $\exp(-\theta S(\boldsymbol{\beta}; \tau, h))$ is bounded, integrability is largely governed by the proper priors on $\boldsymbol{\beta}$ and its variance. Consequently, the posterior is typically proper without requiring stringent conditions on the design matrix $\boldsymbol{\mathcal{X}}$ or on relationships between prior hyperparameters and the data dimension $d$.
\end{remark}

\section{Theoretical analysis of kernel selection}\label{sec:analysis_kernel_selection_sqr}

Having established the conditions for posterior propriety in Section~\ref{sec:prior_selection_and_propriety_actual_concise}, we now investigate how the choice of the kernel function $K(\cdot)$ shapes the specific geometry of the BSQR posterior. We first establish that for any compact support kernel, the BSQR posterior for $\boldsymbol{\beta}$ is equivalent in its tail behavior to that of standard ALD-based BQR. Second, we show that more ``peaked'' kernels yield a more concentrated posterior. These results provide a theoretical justification for our method and a principle for kernel selection.

\subsection{Implications of kernel selection}\label{subsec:kernel_selection_implications}

Building on Section~\ref{sec:prior_selection_and_propriety_actual_concise}, we examine how the choice of kernel $K(\cdot)$ influences the BSQR posterior. Two properties are central: (i) the behavior of the normalizing constant $Z(\theta, \tau, h)$, which governs both the likelihood and MCMC sampling of $\theta$; and (ii) the relationship between the BSQR and standard ALD posteriors for compact support kernels, clarifying the role of smoothing in inference on $\boldsymbol{\beta}$.

\begin{proposition}[Properties of $Z(\theta, \tau, h)$]\label{prop:Z_properties}
The function $Z(\theta, \tau, h)$, as defined in Eq.~\eqref{eq:normalizing_constant_Z}, is based on the smoothed loss $L_h(u; \tau)$ which is non-negative for all $u \in \mathbb{R}$ and not identically zero. Consequently, for fixed $\tau$ and $h$:
\begin{enumerate}
    \item[(i)] $Z(\theta, \tau, h)$ is a strictly decreasing function of $\theta > 0$.
    \item[(ii)] $\log Z(\theta, \tau, h)$ is a convex function of $\theta > 0$.
\end{enumerate}
\end{proposition}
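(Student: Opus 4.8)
The plan is to treat $Z(\theta) \coloneqq Z(\theta,\tau,h) = \int_{-\infty}^{\infty} e^{-\theta L_h(u;\tau)}\,\diff u$ as a one-parameter family of integrals and to exploit two elementary but robust tools: pointwise monotonicity of the integrand for (i), and Hölder's inequality for (ii), both of which sidestep any differentiation-under-the-integral bookkeeping. Before either step I would record the standing facts. Positivity $Z(\theta)>0$ is immediate since the integrand is strictly positive. Finiteness $Z(\theta)<\infty$ for every $\theta>0$ follows from the linear growth of $L_h$: using $\rho_\tau(x)\ge\min(\tau,1-\tau)|x|$ and Jensen's inequality for the convex map $x\mapsto|x|$ against the probability density $K_h$, one gets $L_h(u;\tau)\ge\min(\tau,1-\tau)\,|u+\mu_K|$ with $\mu_K=\int v\,K_h(v)\,\diff v$, so $e^{-\theta L_h}$ decays exponentially and the integral converges. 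Finally, $L_h=\rho_\tau*K_h$ is continuous and, being non-negative and not identically zero, is strictly positive on a nonempty open set; hence $\{u:L_h(u;\tau)>0\}$ has positive Lebesgue measure, a fact I will use for strictness.

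For part (i), fix $0<\theta_1<\theta_2$. Since $L_h(u;\tau)\ge0$, the map $\theta\mapsto e^{-\theta L_h(u;\tau)}$ is nonincreasing for each $u$, with $e^{-\theta_2 L_h(u;\tau)}\le e^{-\theta_1 L_h(u;\tau)}$ and strict inequality precisely on $\{L_h(\cdot;\tau)>0\}$. Integrating this pointwise inequality over $\mathbb{R}$ and invoking the positive measure of that set yields $Z(\theta_2)<Z(\theta_1)$, so $Z$ is strictly decreasing. An equivalent route differentiates under the integral sign to obtain $Z'(\theta)=-\int_{-\infty}^{\infty} L_h(u;\tau)\,e^{-\theta L_h(u;\tau)}\,\diff u<0$; the interchange is justified on any compact $[\theta_0,\theta_1]\subset(0,\infty)$ by the dominating function $L_h\,e^{-\theta_0 L_h}$, which is integrable by the linear growth of $L_h$.

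For part (ii), I would establish convexity of $\log Z$ directly via Hölder's inequality rather than through a second derivative. For $\theta_1,\theta_2>0$ and $\lambda\in(0,1)$, writing $e^{-(\lambda\theta_1+(1-\lambda)\theta_2)L_h}=\bigl(e^{-\theta_1 L_h}\bigr)^{\lambda}\bigl(e^{-\theta_2 L_h}\bigr)^{1-\lambda}$ and applying Hölder with conjugate exponents $p=1/\lambda$ and $q=1/(1-\lambda)$ gives $Z(\lambda\theta_1+(1-\lambda)\theta_2)\le Z(\theta_1)^{\lambda}Z(\theta_2)^{1-\lambda}$; taking logarithms is exactly the convexity inequality $\log Z(\lambda\theta_1+(1-\lambda)\theta_2)\le\lambda\log Z(\theta_1)+(1-\lambda)\log Z(\theta_2)$. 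For interpretive value I would also note the differential identity $(\log Z)''(\theta)=\operatorname{Var}_{p_\theta}(L_h)\ge0$, where $p_\theta(u)\propto e^{-\theta L_h(u;\tau)}$ is the Gibbs tilt; this recasts convexity as the non-negativity of a variance and connects $Z$ to a cumulant-generating-function reading, but the Hölder argument is self-contained.

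The only genuinely technical point is integrability: everything rests on $0<Z(\theta)<\infty$, and the Hölder bound silently requires its right-hand factors to be finite. I therefore expect the main (though routine) obstacle to be pinning down the uniform linear lower bound $L_h(u;\tau)\ge c|u|-C$ with $c>0$, which simultaneously guarantees finiteness of $Z$ and integrability of the dominating functions used in the differentiation-based variant; this follows from $\rho_\tau(x)\ge\min(\tau,1-\tau)|x|$ together with $K_h$ being a probability density with finite first absolute moment. Strictness in (i) is the other place requiring care, and I would secure it through the positive Lebesgue measure of $\{L_h>0\}$ noted above.
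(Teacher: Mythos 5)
Your proof is correct, but it reaches both conclusions by a genuinely different route from the paper. For part (i) the paper differentiates under the integral sign (justifying the interchange via a dominating function $L_h(u;\tau)e^{-aL_h(u;\tau)}$ on compact $\theta$-intervals) and concludes $\partial Z/\partial\theta<0$; you instead integrate the pointwise inequality $e^{-\theta_2 L_h}\le e^{-\theta_1 L_h}$ and obtain strictness from the positive Lebesgue measure of $\{L_h>0\}$, which avoids the Leibniz-rule bookkeeping entirely. For part (ii) the paper differentiates twice and identifies $(\log Z)''=\operatorname{Var}_{p_u}(L_h)\ge 0$ under the Gibbs tilt; your H\"older argument $Z(\lambda\theta_1+(1-\lambda)\theta_2)\le Z(\theta_1)^{\lambda}Z(\theta_2)^{1-\lambda}$ delivers log-convexity with no smoothness input at all, and you correctly note the variance identity as the interpretive counterpart. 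Your version also does something the paper's proof of this proposition leaves implicit: it verifies $Z(\theta)<\infty$ via the linear lower bound $L_h(u;\tau)\ge c_\tau|u+\mu_K|$ (the paper establishes essentially this bound only later, inside the proof of Theorem~2), and finiteness is indeed needed both for $\log Z$ to be defined and for the right-hand side of the H\"older inequality to be meaningful. The trade-off is symmetric: the paper's derivative computations are reused downstream (the score $-\mathbb{E}_{p_u}[L_h]$ and the variance appear in the discussion of the conditional posterior of $\theta$), whereas your argument is more elementary, more self-contained, and robust to any failure of differentiability in $\theta$.
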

\noindent The proof of this proposition is provided in \hyperref[pf:prop1]{Appendix A.4}.

\begin{remark}[Computational implications of log-concavity]
The convexity of $\log Z(\theta, \tau, h)$ is consequential. It implies that the $\theta$-dependent part of the log-likelihood (Eq.~\eqref{eq:sqr_log_likelihood}) is concave. If the prior $\pi(\theta)$ is also log-concave, the conditional log-posterior $\log \pi(\theta \mid \boldsymbol{\beta}, \boldsymbol{y}, \boldsymbol{\mathcal{X}};\tau,h)$ is also concave. Such log-concavity is highly beneficial for MCMC sampling, as it typically ensures unimodal and well-behaved posteriors, improving sampling efficiency and convergence for $\theta$.
\end{remark}

A pertinent question is how the BSQR posterior, $\pi_{\mathrm{BSQR}}(\boldsymbol{\beta} \mid \theta, \boldsymbol{y}, \boldsymbol{\mathcal{X}};\tau,h)$, relates to the ALD-based posterior, $\pi_{\mathrm{ALD}}(\boldsymbol{\beta} \mid \theta, \boldsymbol{y}, \boldsymbol{\mathcal{X}}; \tau)$, particularly in their tail behaviors. For compact-support kernels (e.g., Uniform, Epanechnikov, Triangular, with support normalized to $[-1,1]$), the smoothed loss $L_h(e; \tau)$ coincides with the check loss $\rho_\tau(e)$\,---\,up to an additive constant independent of $e$\,---\,whenever $|e/h| > 1$, thereby localizing smoothing to residuals near zero. This motivates a formal investigation into whether the two posteriors are \emph{equivalent} under such kernels. 

We define $L_{\mathrm{ALD}}(\boldsymbol{y} \mid \boldsymbol{\mathcal{X}}, \boldsymbol{\beta}, \theta; \tau) \propto \exp\left( -\theta \sum_{i=1}^n \rho_\tau(e_i(\boldsymbol{\beta})) \right)$, with a normalizing constant for the ALD PDF independent of $\boldsymbol{\beta}$. The corresponding posterior is $\pi_{\mathrm{ALD}}(\boldsymbol{\beta} \mid \theta, \boldsymbol{y}, \boldsymbol{\mathcal{X}};\tau) \propto L_{\mathrm{ALD}}(\boldsymbol{y} \mid \boldsymbol{\mathcal{X}}, \boldsymbol{\beta}, \theta; \tau) \pi(\boldsymbol{\beta})$. The following theorem, proved in \hyperref[pf:thm4]{Appendix A.4}, establishes this equivalence.

\begin{theorem}[Posterior equivalence for BSQR with compact kernels]\label{thm:4}
Assume that the kernel $K(\cdot)$ has compact support, say $[a,b]$ for finite $a < b$, and is symmetric around its mean.\footnote{Specifically, $K(\cdot)$ is symmetric with respect to its mean $\mu_1(K) = \int_{-\infty}^{\infty} u \, K(u) \, \mathrm{d}u$, meaning that $K(\mu_1 + v) = K(\mu_1 - v)$ for all $v$ in the support. In the proof, we re-center the kernel to have zero mean for simplicity.} Then, there exist positive constants $M_1$ and $M_2$, independent of $\boldsymbol{\beta}$, such that for all $\boldsymbol{\beta} \in \mathbb{R}^d$:
$$ M_1 \cdot \pi_{\mathrm{ALD}}(\boldsymbol{\beta} \mid \theta, \boldsymbol{y}, \boldsymbol{\mathcal{X}};\tau) \le \pi_{\mathrm{BSQR}}(\boldsymbol{\beta} \mid \theta, \boldsymbol{y}, \boldsymbol{\mathcal{X}};\tau,h) \le M_2 \cdot \pi_{\mathrm{ALD}}(\boldsymbol{\beta} \mid \theta, \boldsymbol{y}, \boldsymbol{\mathcal{X}};\tau). $$
\end{theorem}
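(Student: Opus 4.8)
The plan is to reduce the distributional equivalence to a purely pointwise comparison between the smoothed loss $L_h(\cdot;\tau)$ and the check loss $\rho_\tau(\cdot)$. Since both posteriors are formed by multiplying their respective likelihoods by the \emph{same} prior $\pi(\boldsymbol{\beta})$, and since the $\boldsymbol{\beta}$-independent factors (the term $(Z(\theta,\tau,h))^{-n}$ for BSQR and the ALD normalizer for ALD) cancel upon normalization, it suffices to control the ratio of the $\boldsymbol{\beta}$-dependent shapes $g(\boldsymbol{\beta})\coloneqq\exp(-\theta\sum_{i=1}^n L_h(e_i(\boldsymbol{\beta});\tau))$ and $r(\boldsymbol{\beta})\coloneqq\exp(-\theta\sum_{i=1}^n \rho_\tau(e_i(\boldsymbol{\beta})))$ uniformly over $\boldsymbol{\beta}\in\mathbb{R}^d$.

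The crux is a uniform, residual-independent bound $\sup_{e\in\mathbb{R}}|L_h(e;\tau)-\rho_\tau(e)|\le D$. I would obtain this directly from the Lipschitz continuity of the check loss: writing $L_h(e;\tau)-\rho_\tau(e)=\int_{-\infty}^\infty[\rho_\tau(e-w)-\rho_\tau(e)]K_h(w)\,\diff w$ and using $|\rho_\tau(u)-\rho_\tau(u')|\le\max(\tau,1-\tau)|u-u'|$ yields $|L_h(e;\tau)-\rho_\tau(e)|\le\max(\tau,1-\tau)\int|w|K_h(w)\,\diff w=\max(\tau,1-\tau)\,h\int|u|K(u)\,\diff u=:D$, which is finite precisely because the compact support of $K$ guarantees a finite first absolute moment, and is manifestly independent of $e$. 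Alternatively---and this is the geometric content highlighted before the theorem---after re-centering $K$ to zero mean (legitimate by the assumed symmetry), one checks that for $|e|$ exceeding the scaled support radius the integrand $\rho_\tau(e-w)$ is affine in $w$ over the support, so $L_h(e;\tau)=\rho_\tau(e)$ exactly; on the remaining compact transition region the continuous difference $L_h-\rho_\tau$ is bounded. Either route delivers the same uniform constant $D$.

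Summing over the $n$ observations gives $\big|\sum_{i=1}^n L_h(e_i(\boldsymbol{\beta});\tau)-\sum_{i=1}^n\rho_\tau(e_i(\boldsymbol{\beta}))\big|\le nD$ for every $\boldsymbol{\beta}$, and exponentiating yields the pointwise sandwich $A\,r(\boldsymbol{\beta})\le g(\boldsymbol{\beta})\le B\,r(\boldsymbol{\beta})$ with $A=e^{-\theta nD}$ and $B=e^{\theta nD}$. Multiplying through by the common nonnegative prior $\pi(\boldsymbol{\beta})$ and integrating sandwiches the normalizers, $A\int_{\mathbb{R}^d} r\pi \le \int_{\mathbb{R}^d} g\pi \le B\int_{\mathbb{R}^d} r\pi$. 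Since $\pi_{\mathrm{BSQR}}(\boldsymbol{\beta})=g(\boldsymbol{\beta})\pi(\boldsymbol{\beta})/\int_{\mathbb{R}^d} g\pi$ and $\pi_{\mathrm{ALD}}(\boldsymbol{\beta})=r(\boldsymbol{\beta})\pi(\boldsymbol{\beta})/\int_{\mathbb{R}^d} r\pi$, combining the pointwise bound on $g\pi$ with the oppositely directed bound on the denominator $\int_{\mathbb{R}^d} g\pi$ gives $\tfrac{A}{B}\,\pi_{\mathrm{ALD}}(\boldsymbol{\beta})\le\pi_{\mathrm{BSQR}}(\boldsymbol{\beta})\le\tfrac{B}{A}\,\pi_{\mathrm{ALD}}(\boldsymbol{\beta})$, which is the claim with $M_1=e^{-2\theta nD}$ and $M_2=e^{2\theta nD}$. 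As a by-product, the same sandwich shows the two normalizing integrals are simultaneously finite and positive, so propriety of one posterior transfers automatically to the other.

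The argument is short, and the only genuinely load-bearing step is the uniform bound of the second paragraph; everything downstream is bookkeeping. The one place to be careful is the normalization step---one must bound the normalizing integral $\int_{\mathbb{R}^d} g\pi$ in the direction \emph{opposite} to the pointwise bound on $g$ so that the two inequalities compose into a correct two-sided bound on the ratio---together with verifying at the outset that both posteriors are well-defined proper densities, which, as noted, is itself a consequence of the sandwich combined with the propriety results of Section~\ref{sec:prior_selection_and_propriety_actual_concise}. I do not anticipate a serious obstacle, since the compact-support hypothesis hands us exactly the finite first moment that makes $D$ finite and free of $e$.
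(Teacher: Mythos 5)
Your proof is correct, and its overall skeleton\,---\,reduce the claim to a uniform bound on $\sum_{i=1}^n\bigl(L_h(e_i(\boldsymbol{\beta});\tau)-\rho_\tau(e_i(\boldsymbol{\beta}))\bigr)$, exponentiate, and handle the normalizers\,---\,matches the paper's. Where you differ is in the load-bearing step. The paper re-centers the kernel, shows by a derivative-plus-direct-evaluation argument that $L_h(e;\tau)-\rho_\tau(e)$ vanishes \emph{exactly} for $|e/h|>c$ (this is where the symmetry, i.e.\ zero first moment, is used), and then invokes the extreme value theorem on the remaining compact transition region to get constants $m_L\le\Delta_i\le M_U$. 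You instead get the uniform bound in one line from the Lipschitz property of $\rho_\tau$, obtaining $|L_h(e;\tau)-\rho_\tau(e)|\le\max(\tau,1-\tau)\,h\int|u|K(u)\,\mathrm{d}u$. Your route is cleaner and strictly more general: it needs only a finite first absolute moment of $K$, so neither compactness of the support nor symmetry is actually required for the conclusion (the hypotheses of the theorem then merely guarantee that moment is finite). What the paper's route buys in exchange is the sharper structural fact that the two losses agree identically outside a band of width $2hc$ around zero\,---\,the ``localization of smoothing'' emphasized in the discussion preceding the theorem\,---\,and, in principle, tighter constants since your $D$ does not exploit the exact cancellation in the tails. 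Your explicit treatment of the normalization (bounding $\int g\pi$ in the direction opposite to the pointwise bound on $g$, at the cost of a factor of $2$ in the exponent of $M_1,M_2$) and the observation that propriety transfers between the two posteriors are both correct and are handled more implicitly in the paper, which simply absorbs $Z_{\mathrm{ALD}}/Z_{\mathrm{BSQR}}$ into an unevaluated constant $C_0$.
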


\begin{remark}[Posterior equivalence with standard BQR]
Theorem~\ref{thm:4} establishes that, for compact support kernels, the BSQR posterior for $\boldsymbol{\beta}$ shares the same overall shape and tail behavior as the standard ALD posterior, up to a scaling factor absorbed by the normalizing constants. This theoretically justifies using BSQR as a computationally convenient alternative (e.g., for HMC) without fundamentally altering inferential conclusions about $\boldsymbol{\beta}$. This equivalence is distinct from the case of non-compact kernels (e.g., Gaussian), where the difference $L_h(\cdot; \tau) - \rho_\tau(\cdot)$ may be unbounded, potentially leading to different tail behaviors.
\end{remark}

\subsection{Kernel effects on posterior concentration}\label{subsec:kernel_effects_concentration}

Beyond tail behavior equivalence under compact-support kernels, the choice of kernel $K(\cdot)$ also influences the posterior's concentration. Intuitively, a more ``peaked'' kernel assigns greater weight to small residuals (scaled by $h$), penalizing deviations from zero more sharply and thus yielding a more concentrated posterior for $\boldsymbol{\beta}$. We formalize this by examining the Hessian of the negative log-likelihood component of the posterior.

Let $U_L(\boldsymbol{\beta}; \theta, h, K) = \theta \sum_{i=1}^n L_h(e_i(\boldsymbol{\beta}); \tau)$ be the primary component of the negative log-posterior (or potential energy function) that depends on $\boldsymbol{\beta}$ through the sum of smoothed losses, with its argument $K$ signifying dependence on the kernel function $K(\cdot)$. The full negative log-posterior is $U(\boldsymbol{\beta}) = U_L(\boldsymbol{\beta}; \theta, h, K) - \log \pi(\boldsymbol{\beta}) + C_{\theta,h}$, where $C_{\theta,h}$ collects terms not dependent on $\boldsymbol{\beta}$.

We focus on the Hessian of the likelihood component $U_L$, denoted as $\boldsymbol{H}_L(\boldsymbol{\beta}; K)$. Utilizing the rigorous derivation based on the Dirac delta function established in Section~\ref{subsec:technical_foundation_smoothing} (Eq.~\eqref{eq:hessian_derivation_base} and Eq.~\eqref{eq:hessian_matrix_structure}), we directly obtain:
\begin{align}\label{eq:hessian_likelihood_part}
    \boldsymbol{H}_L(\boldsymbol{\beta}; K) &\coloneqq \nabla^2_{\boldsymbol{\beta}} U_L(\boldsymbol{\beta}; \theta, h, K) = \theta \boldsymbol{H}_{\text{SQR}}(\boldsymbol{\beta}) = \frac{\theta}{h} \sum_{i=1}^n K\left(\frac{e_i(\boldsymbol{\beta})}{h}\right) \boldsymbol{x}_i \boldsymbol{x}_i^\top.
\end{align}
A larger Hessian (in the positive definite sense) at the posterior mode $\check{\boldsymbol{\beta}}(\tau)$ suggests a more sharply peaked posterior and, via Laplace approximation, a smaller posterior covariance.

As discussed in Section \ref{sec:consistency}, the true errors $\varepsilon_{0i} = y_i - \boldsymbol{x}_i^\top\boldsymbol{\beta}_0(\tau)$ are assumed to be independent and identically distributed following a common density $f_{\varepsilon_0}(\cdot)$, where $\boldsymbol{\beta}_0(\tau)$ represents the true value of $\boldsymbol{\beta}(\tau)$. Define $s_K(h) \coloneqq \mathbb{E}_{\varepsilon_0 \sim f_{\varepsilon_0}}\left[K\left(\frac{\varepsilon_0}{h}\right)\right]$.
Let $\mathcal{H}_L(\boldsymbol{\beta}_0(\tau); K) \coloneqq \mathbb{E}_{\varepsilon_{0i}, \boldsymbol{x}_i} [\boldsymbol{H}_L(\boldsymbol{\beta}_0(\tau); K)]$ denote the expected Hessian of the negative log-likelihood component, evaluated at the true parameter $\boldsymbol{\beta}_0(\tau)$. Note that under the i.i.d.\ assumption, this relates to the population Hessian of the loss defined in Eq.~\eqref{eq:hessian_H_pop} by a scaling factor: $\mathcal{H}_L(\boldsymbol{\beta}_0(\tau); K) = n \theta \mathcal{H}(\boldsymbol{\beta}_0(\tau))$. This leads to the following theorem with the proof provided in \hyperref[pf:thm5]{Appendix A.4}.

\begin{theorem}[Kernel peakedness and expected local curvature]\label{thm:5}
Let $K_A(v)$ and $K_B(v)$ be two distinct kernel functions that are non-negative, symmetric about zero, and share a common compact support. Assume that $s_{K_A}(h) > s_{K_B}(h)$, and that the covariates follow $\boldsymbol{x}_i \stackrel{i.i.d.}{\sim} P_{\boldsymbol{X}}$ where $P_{\boldsymbol{X}}$ has a positive definite covariance matrix $\Sigma_X \coloneqq \mathbb{E}[\boldsymbol{x}_i \boldsymbol{x}_i^\top]$ and finite second moments. Then, for the expected Hessian:
$$ \mathcal{H}_L(\boldsymbol{\beta}_0(\tau); K_A) \succ \mathcal{H}_L(\boldsymbol{\beta}_0(\tau); K_B), $$
where $\succ$ denotes the strict Loewner order (i.e., $A \succ B$ implies $A - B$ is positive definite). Furthermore, for the sample Hessian, the corresponding inequality
$$ \boldsymbol{H}_L(\boldsymbol{\beta}_0(\tau); K_A) \succ \boldsymbol{H}_L(\boldsymbol{\beta}_0(\tau); K_B) $$
holds with probability approaching 1 as $n \to \infty$.
\end{theorem}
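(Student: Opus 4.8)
The plan is to build directly on the explicit Hessian formula derived in Eq.~\eqref{eq:hessian_likelihood_part}, namely $\boldsymbol{H}_L(\boldsymbol{\beta}_0(\tau); K) = \frac{\theta}{h}\sum_{i=1}^n K(\varepsilon_{0i}/h)\,\boldsymbol{x}_i\boldsymbol{x}_i^\top$, where I have used that $e_i(\boldsymbol{\beta}_0(\tau)) = \varepsilon_{0i}$, and to treat the expected-Hessian and sample-Hessian claims in turn. For the expected Hessian I would take the expectation over both $\varepsilon_{0i}$ and $\boldsymbol{x}_i$; invoking the standard quantile-regression assumption that the errors are independent of the covariates, the expectation factorizes as $\mathcal{H}_L(\boldsymbol{\beta}_0(\tau); K) = \frac{n\theta}{h}\,s_K(h)\,\Sigma_x$, with $s_K(h) = \mathbb{E}[K(\varepsilon_0/h)]$ and $\Sigma_x = \mathbb{E}[\boldsymbol{x}_i\boldsymbol{x}_i^\top]$. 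Differencing the two expected Hessians then yields $\mathcal{H}_L(\boldsymbol{\beta}_0(\tau);K_A) - \mathcal{H}_L(\boldsymbol{\beta}_0(\tau);K_B) = \frac{n\theta}{h}\bigl(s_{K_A}(h) - s_{K_B}(h)\bigr)\,\Sigma_x$. Since $\theta, h, n > 0$, the hypothesis $s_{K_A}(h) > s_{K_B}(h)$ makes the scalar coefficient strictly positive, and $\Sigma_x \succ 0$ by assumption; a positive scalar multiple of a positive definite matrix is positive definite, establishing the strict Loewner inequality for the expected Hessian.

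For the sample Hessian the key difficulty is that the difference $\frac{\theta}{h}\sum_i \bigl[K_A(\varepsilon_{0i}/h) - K_B(\varepsilon_{0i}/h)\bigr]\boldsymbol{x}_i\boldsymbol{x}_i^\top$ cannot be shown positive definite term-by-term, because $K_A - K_B$ may be negative for individual residuals, so no monotone, per-observation comparison is available. Instead I would normalize by $n$ and appeal to the weak law of large numbers: the summands $\frac{\theta}{h}K(\varepsilon_{0i}/h)\boldsymbol{x}_i\boldsymbol{x}_i^\top$ are i.i.d. with finite first moment, since the kernel is bounded and $\mathbb{E}[\|\boldsymbol{x}_i\|^2] < \infty$, whence $n^{-1}\boldsymbol{H}_L(\boldsymbol{\beta}_0(\tau); K) \xrightarrow{P} \frac{\theta}{h}s_K(h)\Sigma_x$ entrywise and hence in any matrix norm. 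Consequently the normalized sample difference converges in probability to $D \coloneqq \frac{\theta}{h}\bigl(s_{K_A}(h) - s_{K_B}(h)\bigr)\Sigma_x$, which is strictly positive definite by the first part of the argument.

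I would close with a continuity argument on eigenvalues. The minimum eigenvalue $\lambda_{\min}(\cdot)$ is a continuous function of a symmetric matrix (by Weyl's inequality), so $\lambda_{\min}\bigl(n^{-1}[\boldsymbol{H}_L(\boldsymbol{\beta}_0(\tau);K_A)-\boldsymbol{H}_L(\boldsymbol{\beta}_0(\tau);K_B)]\bigr) \xrightarrow{P} \lambda_{\min}(D) > 0$. Therefore $P\bigl(\lambda_{\min}(n^{-1}[\boldsymbol{H}_L(K_A)-\boldsymbol{H}_L(K_B)]) > 0\bigr) \to 1$, which is exactly $P\bigl(\boldsymbol{H}_L(\boldsymbol{\beta}_0(\tau);K_A) \succ \boldsymbol{H}_L(\boldsymbol{\beta}_0(\tau);K_B)\bigr) \to 1$, since positive definiteness is invariant under the positive scaling by $n$.

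I expect the sample-level step to be the main obstacle: because the per-observation increments $K_A(\varepsilon_{0i}/h) - K_B(\varepsilon_{0i}/h)$ are not sign-definite, the positive definiteness of the difference is an inherently aggregate, large-sample phenomenon rather than a pointwise one. The argument therefore must route through convergence of the full matrix under the law of large numbers together with continuity of the spectrum, and care is needed only to verify the integrability conditions (boundedness of the kernel and finiteness of the second moments of $\boldsymbol{x}_i$) that license both the factorization of the expected Hessian and the applicability of the weak law.
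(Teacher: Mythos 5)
Your proposal is correct and follows essentially the same route as the paper's proof: factorize the expected Hessian as $\tfrac{\theta}{h} n\, s_K(h)\,\Sigma_x$ via the independence of errors and covariates, difference the two expressions, and then obtain the sample-level claim from the law of large numbers applied to the normalized difference. The only addition is your explicit eigenvalue-continuity step for converting convergence in probability of the matrix into $P(\succ 0)\to 1$, which the paper leaves implicit but which is a worthwhile clarification.
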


\begin{remark}[Implications for posterior contraction]
Theorem~\ref{thm:5} establishes that a kernel $K_A$ satisfying $s_{K_A}(h) > s_{K_B}(h)$ yields a larger expected Hessian at the true parameter. Via Laplace approximation, this increased local curvature implies a more concentrated posterior for $\boldsymbol{\beta}$ and a smaller covariance matrix (i.e., $\mathcal{H}_L(\cdot; K_A)^{-1} \preceq \mathcal{H}_L(\cdot; K_B)^{-1}$), assuming the prior's Hessian is negligible. \end{remark}

\begin{remark}[Geometric intuition of kernel peakedness]
The condition $s_{K_A}(h) > s_{K_B}(h)$ means that kernel $K_A$ gives more average weight to the true scaled errors. This is likely to occur when a more ``peaked'' kernel $K_A$ aligns with a concentrated error density $f_{\varepsilon_0}$, as more error values will fall where $K_A(v) > K_B(v)$. For instance, a Triangular kernel will likely satisfy this condition over a Uniform kernel on $[-1,1]$ if most true errors are small, as it places more mass near zero.
\end{remark}

\begin{remark}[Finite-sample implications]
While Theorem~\ref{thm:5} is an asymptotic result at the true parameter $\boldsymbol{\beta}_0(\tau)$, its principle extends to the posterior mode $\check{\boldsymbol{\beta}}(\tau)$. A more peaked kernel may yield a more concentrated posterior if it produces a consistently larger sum $\sum_{i=1}^n K(e_i(\check{\boldsymbol{\beta}}(\tau))/h)$ than competing kernels, based on the empirical distribution of residuals.
\end{remark}

\section{Bayesian inference via MCMC}\label{sec:sampling}

Bayesian inference for the BSQR parameters\,---\,regression coefficients $\boldsymbol{\beta}$ and scale $\theta$\,---\,is challenging, particularly when targeting uncertainty quantification. Standard BQR exploits the ALD’s scale-mixture-of-normals form \shortcite{Kozumi2011, Yu2001}, enabling efficient Gibbs updates. However, the smoothed loss $L_h(\cdot; \tau)$ in our framework disrupts conditional conjugacy, rendering closed-form Gibbs sampling intractable. While Metropolis-Hastings within Gibbs is a viable alternative, it often suffers from slow mixing and requires careful tuning.

To overcome these limitations, we adopt the quasi-Bayesian framework of \shortciteA{Chernozhukov2003} for defining the posterior, and implement computation via Hamiltonian Monte Carlo (HMC) \shortcite{Duane1987, Neal2011} using the No-U-Turn sampler (NUTS) \shortcite{Hoffman2014} as implemented in \textsf{Stan} \shortcite{Carpenter2017}. Unlike component-wise sampling schemes, NUTS updates the joint parameter space $(\boldsymbol{\beta}, \theta)$ simultaneously using Hamiltonian dynamics, leveraging gradient information to explore the high-dimensional posterior robustly.

\subsection{Joint Hamiltonian Monte Carlo sampling}\label{sec:joint_hmc}
We target the joint posterior distribution $\pi(\boldsymbol{\beta}, \theta \mid \boldsymbol{y}, \boldsymbol{\mathcal{X}};\tau,h)$. The complete sampling procedure is summarized in Algorithm~\ref{Al1_Part1} of \hyperref[sec:AppendixB]{Appendix B}. The HMC sampler augments the parameter space with auxiliary momentum variables. 

To map the positively constrained scale parameter $\theta$ onto the unconstrained Euclidean space required by HMC, we utilize the logarithmic transformation (consistent with \textsf{Stan}'s automatic parameter unconstraining mechanism). Let $\boldsymbol{q} = (\boldsymbol{\beta}^\top, \log\theta)^\top$ denote the extended state vector and $\boldsymbol{p}$ denote the corresponding momentum vector. The system's dynamics are governed by the Hamiltonian function:
\[
H(\boldsymbol{q}, \boldsymbol{p}) = U(\boldsymbol{q} \mid \boldsymbol{y}, \boldsymbol{\mathcal{X}}) + T(\boldsymbol{p}),
\]
where $T(\boldsymbol{p}) = \frac{1}{2}\boldsymbol{p}^\top \boldsymbol{M}^{-1} \boldsymbol{p}$ is the kinetic energy with mass matrix $\boldsymbol{M}$ (typically diagonal and adapted during warmup), and $U(\boldsymbol{q} \mid \boldsymbol{y}, \boldsymbol{\mathcal{X}})$ is the potential energy. Since the sampling is performed on the unconstrained scale $\boldsymbol{q}$ (involving $\log \theta$), $U(\boldsymbol{q}\mid \boldsymbol{y}, \boldsymbol{\mathcal{X}})$ must include the Jacobian adjustment to ensure the correct marginal posterior density for $\theta$. Thus, the potential energy is defined as:
\begin{align}
    U(\boldsymbol{q} \mid \boldsymbol{y}, \boldsymbol{\mathcal{X}}) &= - \log \left( \pi(\boldsymbol{\beta}, \theta \mid \boldsymbol{y}, \boldsymbol{\mathcal{X}}) \cdot \left| \frac{\mathrm{d} \theta}{\mathrm{d} \log \theta} \right| \right) \nonumber \\
    &= - \left[ \ell(\boldsymbol{y} \mid \boldsymbol{\mathcal{X}}, \boldsymbol{\beta}, \theta; \tau, h) + \log \pi(\boldsymbol{\beta}) + \log \pi(\theta) + \log \theta \right] \nonumber \\
    &= \theta \sum_{i=1}^n L_h(e_i(\boldsymbol{\beta}); \tau) + n \log Z(\theta, \tau, h) - \log \pi(\boldsymbol{\beta}) - \log \pi(\theta) - \log \theta. \label{eq:joint_potential}
\end{align}

The evolution of the system over fictitious time $s$ follows Hamilton’s equations of motion \shortcite{Duane1987, Neal2011}:
\begin{align*}
\frac{\mathrm{d} \boldsymbol{q}(s)}{\mathrm{d} s} &= \frac{\partial H}{\partial \boldsymbol{p}(s)} = \boldsymbol{M}^{-1} \boldsymbol{p}(s), \\
\frac{\mathrm{d} \boldsymbol{p}(s)}{\mathrm{d} s} &= -\frac{\partial H}{\partial \boldsymbol{q}(s)} = -\nabla_{\boldsymbol{q}} U(\boldsymbol{q}(s)\mid \boldsymbol{y}, \boldsymbol{\mathcal{X}}).
\end{align*}
These equations preserve phase-space volume and are reversible. In practice, they are integrated using the leapfrog method (or its variant in NUTS). A single step of size $\epsilon$ updates the joint state $(\boldsymbol{q}, \boldsymbol{p})$ via:
\begin{align*}
    \boldsymbol{p}_{\text{half}} &= \boldsymbol{p} - \frac{\epsilon}{2} \nabla_{\boldsymbol{q}} U(\boldsymbol{q}\mid \boldsymbol{y}, \boldsymbol{\mathcal{X}}),\\
    \boldsymbol{q}' &= \boldsymbol{q} + \epsilon \boldsymbol{M}^{-1} \boldsymbol{p}_{\text{half}}, \\
    \boldsymbol{p}' &= \boldsymbol{p}_{\text{half}} - \frac{\epsilon}{2} \nabla_{\boldsymbol{q}} U(\boldsymbol{q}'\mid \boldsymbol{y}, \boldsymbol{\mathcal{X}}).
\end{align*}

Crucially, simulating these dynamics requires computing the gradients of the potential energy $\nabla_{\boldsymbol{q}} U(\boldsymbol{q} \mid \boldsymbol{y}, \boldsymbol{\mathcal{X}})$, which involve partial derivatives with respect to both $\boldsymbol{\beta}$ and $\theta$. For notational simplicity, we express these gradients in terms of the constrained parameters $(\boldsymbol{\beta}, \theta)$; however, in the actual HMC implementation, sampling is performed on the unconstrained state $\boldsymbol{q}=(\boldsymbol{\beta}^\top, \log\theta)^\top$, with the log-Jacobian determinant absorbed into $U(\boldsymbol{q} \mid \boldsymbol{y}, \boldsymbol{\mathcal{X}})$. For the regression coefficients $\boldsymbol{\beta}$, the gradient is derived analytically using the smoothed loss properties established in Section~\ref{subsec:technical_foundation_smoothing}:
\[
\nabla_{\boldsymbol{\beta}} U(\boldsymbol{\beta}, \theta \mid \boldsymbol{y}, \boldsymbol{\mathcal{X}}) = \theta \sum_{i=1}^n \nabla_{\boldsymbol{\beta}} L_h(e_i(\boldsymbol{\beta}); \tau) - \nabla_{\boldsymbol{\beta}} \log \pi(\boldsymbol{\beta}),
\]
where $\nabla_{\boldsymbol{\beta}} L_h(e_i;\tau) = -\Psi_h(e_i;\tau)\,\boldsymbol{x}_i$. Using the closed-form expression derived in Eq.~\eqref{eq:Psi_h_cdf_form}, the scalar score component is simply:
\begin{align*}
\Psi_h(e_i; \tau) = F_K\left(\frac{e_i}{h}\right) - (1-\tau),
\end{align*}
where $K(\cdot)$ is the kernel density and $F_K$ its CDF. We provide the specific forms of $\Psi_h(\cdot;\tau)$ and $L_h(\cdot;\tau)$ for various kernels in Sections~\ref{sec:6.1.1}--\ref{sec:6.1.4}.

For the scale parameter $\theta$, the gradient involves the derivative of the log-normalizing constant $\nabla_\theta \log Z(\theta, \tau, h)$. While this term involves an intractable integral that typically necessitates gradient-free methods like Metropolis-Hastings, we leverage \textsf{Stan}'s automatic differentiation capability. By implementing $Z(\theta, \tau, h)$ via numerical quadrature (e.g., \texttt{integrate\_1d}), the system automatically computes exact gradients through the integration operator, enabling fully gradient-based NUTS sampling for $\theta$ as well.

\begin{remark}[Choice of inference strategy]\label{rem:inference_strategy}
    While Corollary~\ref{cor:validity_true} relies on a consistent point estimator of $\hat{\theta}$ (i.e., the plug-in approach) for asymptotic validity, such calibration depends on sample estimates of the Hessian and score covariance, which can be numerically sensitive in finite samples \shortcite{Koenker2005}. We instead adopt a fully Bayesian approach, integrating $\theta$ out via a hierarchical prior. This strategy avoids the sensitivity of plug-in estimates and, consistent with the general quasi-Bayesian theory of \shortciteA{Chernozhukov2003}, yields robust empirical uncertainty quantification by averaging over the nuisance parameter's uncertainty (Section~\ref{sec:simulation}).
\end{remark}

\subsection{Kernel-specific smoothed gradients}\label{sec:kernel_derivations}

Efficient HMC sampling requires evaluating the smoothed loss $L_h(\cdot;\tau)$ and score $\Psi_h(\cdot;\tau)$ at each leapfrog step. Below, we present the closed-form expressions derived for common kernels; for brevity, detailed step-by-step derivations are provided in \hyperref[sec:AppendixC]{Appendix C}. These derivations confirm that bounded-support kernels (Uniform, Triangular, Epanechnikov) yield piecewise polynomial potentials. Since HMC dynamics rely primarily on gradient information (requiring only $C^1$ continuity), these kernels remain computationally efficient and robust despite minor discontinuities in higher-order derivatives.

\subsubsection{Gaussian kernel}\label{sec:6.1.1}
The standard Gaussian kernel is $K(v) = \phi(v) = (2\pi)^{-1/2} \exp(-v^2/2)$, with CDF $F_K(u) = \Phi(u)$.
Applying Eq.~\eqref{eq:Psi_h_cdf_form} yields:
\begin{equation}
\Psi_h(e; \tau) = \Phi\left(\frac{e}{h}\right) - (1-\tau). \label{eq:psi_h_gaussian_formal}
\end{equation}
The corresponding smoothed loss function $L_h(\cdot; \tau)$, obtained by integrating $\Psi_h(\cdot; \tau)$ and consistent with $\rho_\tau(\cdot)$ for small $h$, is given by (e.g., \shortciteNP{Horowitz1998, Koenker2005}):
\begin{equation*}
L_h(e; \tau) = e \left( \Phi\left(\frac{e}{h}\right) - (1-\tau) \right) + h \phi\left(\frac{e}{h}\right).
\end{equation*}

\subsubsection{Uniform kernel}\label{sec:6.1.2}
The standard Uniform kernel is $K(v) = \frac{1}{2}$ for $v \in [-1, 1]$ and $0$ otherwise. Its CDF is $F_K(u) = 0$ for $u < -1$, $F_K(u) = \frac{u+1}{2}$ for $-1 \le u \le 1$, and $F_K(u) = 1$ for $u > 1$.
From Eq.~\eqref{eq:Psi_h_cdf_form}, $\Psi_h(e; \tau)$ is:
\begin{equation*}
\Psi_h(e; \tau) = \begin{cases} -(1-\tau) & \text{if } e/h < -1, \\ \frac{e}{2h} + \tau - \frac{1}{2} & \text{if } -1 \le e/h \le 1, \\ \tau & \text{if } e/h > 1. \end{cases}
\end{equation*}
The function $L_h(e; \tau)$ is derived by integrating $\Psi_h(e; \tau)$ and imposing continuity with $\rho_\tau(e)$ for $|e/h| \ge 1$. The resulting expression is:
\begin{equation*}
L_h(e; \tau) = \begin{cases} e(\tau-1) & \text{if } e/h \le -1, \\ \frac{e^2}{4h} + e\left(\tau - \frac{1}{2}\right) + \frac{h}{4} & \text{if } -1 < e/h < 1, \\ e\tau & \text{if } e/h \ge 1. \end{cases}
\end{equation*}

\begin{remark}[Regularity at boundaries]
    Strictly speaking, the Uniform kernel is discontinuous at the boundaries, implying that the smoothed loss $L_h(\cdot; \tau)$ is only $C^1$ rather than $C^2$. This violates the continuity condition in \textbf{Assumption (A4)} required for the rigorous BvM theory (Theorem \ref{thm:bvm}). However, we include it here for completeness and empirical comparison, as HMC samplers can typically handle such discontinuities in the potential energy gradient without significant issues.
\end{remark}

\subsubsection{Epanechnikov kernel}\label{sec:6.1.3}
The standard Epanechnikov kernel is $K(v) = \frac{3}{4}(1-v^2)$ for $v \in [-1, 1]$ and $0$ otherwise.
Its CDF, $F_K(u)$, is $0$ for $u < -1$, $\frac{3}{4}u - \frac{1}{4}u^3 + \frac{1}{2}$ for $-1 \le u \le 1$, and $1$ for $u > 1$.
This yields $\Psi_h(\cdot; \tau)$ as:
\begin{equation*}
\Psi_h(e; \tau) = \begin{cases} -(1-\tau) & \text{if } e/h < -1, \\ \frac{3}{4}\frac{e}{h} - \frac{1}{4}\left(\frac{e}{h}\right)^3 + \tau - \frac{1}{2} & \text{if } -1 \le e/h \le 1, \\ \tau & \text{if } e/h > 1. \end{cases}
\end{equation*}
Integration of $\Psi_h(e; \tau)$ and matching boundary conditions with $\rho_\tau(e)$ for $|e/h| \ge 1$ leads to $L_h(\cdot; \tau)$:
\begin{equation*}
L_h(e; \tau) = \begin{cases} e(\tau-1) & \text{if } e/h \le -1, \\ \frac{3e^2}{8h} - \frac{e^4}{16h^3} + e\left(\tau - \frac{1}{2}\right) + \frac{3h}{16} & \text{if } -1 < e/h < 1, \\ e\tau & \text{if } e/h \ge 1. \end{cases}
\end{equation*}

\subsubsection{Triangular kernel}\label{sec:6.1.4}
The standard Triangular kernel is $K(v) = 1-|v|$ for $v \in [-1, 1]$ and $0$ otherwise.
The CDF, $F_K(u)$, is $0$ for $u < -1$; $\frac{1}{2}(1+u)^2$ for $-1 \le u < 0$; $1 - \frac{1}{2}(1-u)^2$ for $0 \le u \le 1$; and $1$ for $u > 1$.
Consequently, $\Psi_h(\cdot; \tau)$ is:
\begin{equation*}
\Psi_h(e; \tau) = \begin{cases} -(1-\tau) & \text{if } e/h < -1, \\
\frac{1}{2}\left(1+\frac{e}{h}\right)^2 - (1-\tau) & \text{if } -1 \le e/h < 0, \\
\tau - \frac{1}{2}\left(1-\frac{e}{h}\right)^2 & \text{if } 0 \le e/h \le 1, \\
\tau & \text{if } e/h > 1.
\end{cases}
\end{equation*}
The corresponding $L_h(e; \tau)$ is obtained by piecewise integration of $\Psi_h(e; \tau)$, ensuring continuity with $\rho_\tau(e)$ for $|e/h| \ge 1$ and at $e=0$:
\begin{equation*}
L_h(e; \tau) = \begin{cases} e(\tau-1) & \text{if } e/h \le -1, \\
\frac{h}{6}\left(1+\frac{e}{h}\right)^3 - e(1-\tau) & \text{if } -1 < e/h < 0, \\
e\tau + \frac{h}{6}\left(1-\frac{e}{h}\right)^3 & \text{if } 0 \le e/h < 1, \\
e\tau & \text{if } e/h \ge 1.
\end{cases}
\end{equation*}

\subsection{Computational implementation details}\label{sec:hmc_details}

The algorithm is implemented in \textsf{Stan} (code available in supplementary materials). To ensure numerical stability and computational efficiency in the NUTS, we employ two key strategies corresponding to specific blocks in our Stan code:

\begin{enumerate}
    \item \textbf{Hybrid Normalizing Constant Calculation:} The integral for $Z(\theta, \tau, h)$ can become numerically unstable when $\theta$ is large, as the integrand becomes sharply peaked. To address this, we implement a hybrid strategy: we switch to an asymptotic approximation for $\log Z(\theta, \tau, h)$ based on Laplace's method (for smooth kernels) or specific asymptotic expansions when $\theta$ exceeds a threshold, while maintaining robust numerical quadrature for standard ranges. This prevents numerical overflow/underflow during the gradient computation.
    
	\item \textbf{Non-Centered Parameterization:} In high-dimensional settings, the posterior geometry of the regression coefficients can exhibit pathological shapes (e.g., Neal's funnels) that hinder HMC exploration. We mitigate this by employing a non-centered parameterization \shortcite{Betancourt2013}: we sample auxiliary standard normal variables $\boldsymbol{\beta}_{\text{raw}} \sim \mathcal{N}(\boldsymbol{0}, \boldsymbol{I})$ and transform them via $\boldsymbol{\beta} = \boldsymbol{\mu}_{\beta} + \sigma_{\boldsymbol{\beta}} \boldsymbol{\beta}_{\text{raw}}$, where $\sigma_{\boldsymbol{\beta}}$ is the scalar prior scale. This explicitly separates the prior mean and scale from the geometry, significantly improving the effective sample size (ESS) by preconditioning the posterior.
\end{enumerate}

This fully gradient-based approach, combined with these numerical stabilizations, eliminates the need for manual tuning of proposal variances inherent in Metropolis-Hastings steps and ensures robust mixing.

\section{Simulation}\label{sec:simulation}

We evaluate the estimation accuracy, inferential validity, and scalability of the BSQR framework across three regimes: (1) general performance in standard settings (Section \ref{subsec:sim_general}); (2) inferential validity under extreme sparsity (Section \ref{subsec:sim_stress}); and (3) high-dimensional scalability (Section \ref{subsec:sim_scalability}). Unless otherwise noted in the specialized regimes of Sections \ref{subsec:sim_stress} and \ref{subsec:sim_scalability}, synthetic data are generated via the linear specification $y_i = \boldsymbol{x}_i^\top\boldsymbol{\beta} + u_i$, utilizing four error distributions to assess robustness: (1) $\mathcal{N}(0, 1)$; (2) $t(3)$; (3) a mixture $0.2\mathcal{N}(0, 3) + 0.8\mathcal{N}(0, 4)$; and (4) heteroscedastic normal errors $u_i \mid \boldsymbol{x}_i \sim \mathcal{N}(0, \sigma_i^2)$ with $\sigma_i = \exp(-0.25 + 0.5 x_{i1})$, which violates the linear conditional quantile assumption.

Benchmarking against standard quantile regression (StdQR) (\texttt{quantreg}) and BQR-ALD (\texttt{brms}), we implement BSQR in \textsf{Stan} \shortcite{Carpenter2017} using the NUTS \shortcite{Hoffman2014}. We run two parallel chains of 4000 iterations (2000 warmup) to update the joint space $(\boldsymbol{\beta}, \theta)$, leveraging \textsf{Stan}'s automatic differentiation with a hybrid strategy for the normalizing constant $Z(\theta, \tau, h)$: employing asymptotic approximation for sharply peaked integrands (large $\theta$) and robust numerical quadrature (\texttt{integrate\_1d}) otherwise. The smoothing bandwidth $h$, balancing check loss fidelity and posterior stability, is selected via 5-fold cross-validation from a set of candidates scaled from Silverman's rule of thumb by factors $\{0.5, 0.75, 1.0, 1.5, 2.0\}$; fully Bayesian learning of $h$ is computationally prohibitive due to the intractability of $Z(\theta, \tau, h)$, though sensitivity is analyzed in Section~\ref{sec:robustness_analysis}.

Priors are set to $\boldsymbol{\beta} \sim \mathcal{N}(\boldsymbol{0}, 1000\boldsymbol{I})$ and $\theta \sim \text{Gamma}(0.01, 0.01)$. While Corollary~\ref{cor:validity_true} implies fixing $\theta$ for asymptotic coverage, we treat it as random to integrate out scale uncertainty, avoiding the noise of finite-sample Hessian/Jacobian estimation while empirically satisfying the generalized information equality. Point estimation accuracy is measured by MSE: $\|\hat{\boldsymbol{\beta}}(\tau) - \boldsymbol{\beta}_0(\tau)\|_2^2$; MAE: $\|\hat{\boldsymbol{\beta}}(\tau) - \boldsymbol{\beta}_0(\tau)\|_1 / d$; WMSE: $(\hat{\boldsymbol{\beta}}(\tau) - \boldsymbol{\beta}_0(\tau))^\top \boldsymbol{\Sigma}_X (\hat{\boldsymbol{\beta}}(\tau) - \boldsymbol{\beta}_0(\tau))$, with $\hat{\boldsymbol{\beta}}(\tau)$ the posterior mean. Predictive accuracy is the average check loss on the test set: $N_{\text{test}}^{-1} \sum_{i=1}^{N_{\text{test}}} \rho_{\tau}(y_i^{\text{test}} - \boldsymbol{x}_i^{\text{test}^\top} \hat{\boldsymbol{\beta}}(\tau))$. For Bayesian methods, we also report the empirical coverage and average width of 95\% credible intervals, computation time (seconds), and MCMC diagnostics, including the maximum potential scale reduction factor ($\widehat{R}_{\max}$) and the minimum bulk effective sample size (ESS$_{\min}$) across all $\boldsymbol{\beta}$ coefficients. 

\subsection{General performance in standard settings}\label{subsec:sim_general}

We first evaluate performance in standard sample sizes with $N_{\text{train}}=200$ and $N_{\text{test}}=1000$. Covariate vectors $\boldsymbol{x}_i \sim \mathcal{N}(\boldsymbol{0}, \boldsymbol{\Sigma}_X)$ have an autoregressive structure $(\boldsymbol{\Sigma}_X)_{jk} = \rho^{|j-k|}$ with $\rho=0.5$, as in \shortciteA{Fan2001}. We consider two specific designs: a sparse, high-dimensional setting ($d=20$) with $\boldsymbol{\beta}_0 = (3, 1.5, 0, 0, 2, 0, \dots, 0)^\top$, and a dense, lower-dimensional setting ($d=8$) with $\boldsymbol{\beta}_0 = (0.85, \dots, 0.85)^\top$. Simulations are run for $\tau \in \{0.25, 0.5, 0.75\}$ over $M=200$ independent replications.

Results (detailed in \hyperref[tab:sim_results_all_kernels]{Appendix D.1}) validate the proposed BSQR framework, showing consistent superiority over BQR-ALD across multiple evaluation dimensions. BSQR generally excels in estimation accuracy and prediction, achieving lower MSE, MAE, and WMSE for coefficient estimates in most scenarios across error distributions, though BQR-ALD is occasionally comparable or slightly better in specific cases. Critically, BSQR resolves BQR-ALD's predictive bias: out-of-sample check loss is reduced by 40--50\% at $\tau=0.25$ and $\tau=0.75$, matching or even slightly surpassing the frequentist StdQR benchmark, and performances are comparable at $\tau=0.5$, confirming that smoothing restores the link between model parameters and the true conditional quantiles. For inference, BSQR delivers more reliable uncertainty quantification, with credible interval coverage often closer to 0.95; BQR-ALD's narrower intervals frequently under-cover, while Uniform and Triangular kernels balance coverage and width effectively. Computationally, the smoothed posterior yields 20--40\% higher minimum bulk ESS (e.g., 2700--3200 vs. 2100--2400 for BQR-ALD), ensuring reliable estimates, and bounded-support kernels (Uniform, Triangular, Epanechnikov) maintain near-zero divergent transitions (consistently $<2$); Uniform (BSQR-U) is the fastest for large datasets\,---\,often surpassing BQR-ALD\,---\,while Triangular (BSQR-T) offers a favorable trade-off between speed, accuracy, and inference, Epanechnikov (BSQR-E) is slower with wider intervals, and Gaussian (BSQR-G) is computationally intensive and least stable. BSQR improves BQR by eliminating bias, improving computational efficiency, and favoring simple bounded-support \emph{Uniform} and \emph{Triangular kernels} for optimal trade-offs in speed, accuracy, and inference, making them the preferred choice for applied research.

\subsection{Inferential validity under extreme sparsity}\label{subsec:sim_stress}

To rigorously assess finite-sample robustness, we conduct a ``stress test'' characterized by extreme data sparsity ($n=30, p=6$) and heteroscedasticity. This regime severely challenges the asymptotic justifications of semi-parametric estimators. Diverging from the general simulation settings in Section \ref{subsec:sim_general}, we adopt a specialized data generating process to isolate the effects of sparsity. Specifically, covariates are generated from independent standard normal distributions, and errors follow a linear heteroscedastic specification $u_i \sim \mathcal{N}(0, (1 + 0.5|x_{i1}|)^2)$, with the true coefficient vector set to $\boldsymbol{\beta}_0 = (1, 1, 0, 0, 0, 0)^\top$.

Furthermore, regarding computational implementation, given the extremely small sample size, cross-validation is prone to instability. Thus, distinct from the grid-search strategy employed in Section \ref{subsec:sim_general}, here we employ a fixed bandwidth determined by Silverman's rule of thumb for all BSQR variants. This constraint ensures that performance differences are driven solely by the kernel's geometric properties rather than bandwidth variability. We benchmark BSQR against the PETEL method of \shortciteA{Tang2022}, focusing on the critical trade-off between estimation precision and inferential validity at extreme quantiles ($\tau \in \{0.05, 0.95\}$) versus the median. The detailed numerical results are provided in Table \ref{tab:stress_test_n30} in \hyperref[sec:AppendixD]{Appendix D.2}.

Table \ref{tab:stress_test_n30} reveals a stark divergence in inferential reliability. While all methods successfully converged numerically in all replications (0\% failure rate), the seemingly lower MSE of PETEL appears to be driven by an artificial reduction in variance at the cost of significant bias and under-coverage. Specifically, PETEL's squared bias is 3--8 times larger than that of BSQR with compact kernels (e.g., 91.4 vs. 27.0 at $\tau=0.05$). Consequently, its frequentist coverage drops to 83--84\%, significantly below the nominal 95\% level. This pattern suggests that the empirical likelihood surface becomes overly constrained in data-sparse regions, yielding estimates that are precise but inaccurate (``false precision'').

Conversely, BSQR offers two distinct advantages depending on the kernel choice. First, compact kernels (Uniform, Triangular, Epanechnikov) act as effective regularizers, reducing the squared bias by nearly an order of magnitude compared to PETEL (e.g., 6.26 vs. 49.7 at the median). Second, for valid uncertainty quantification, the Gaussian kernel functions as a robust inferential safety net. The infinite support of the Gaussian kernel provides non-zero gradient signals that guide the sampler even where empirical likelihoods face ``empty set'' constraints \shortcite{Koenker2005}. Despite incurring a higher MSE and Interval Score (IS) due to the larger bandwidth required to bridge sparse data points, it is the only method that successfully maintains valid coverage ($>97\%$) at the tails. This reflects a theoretically justified trade-off: in regimes of critical information scarcity, ensuring conservative validity via smoothing is prioritized over the risk of over-confident, biased inference.

\subsection{Scalability in high dimensions}\label{subsec:sim_scalability}

To probe the computational boundaries of the competing methods, we simulate a high-dimensional regime ($n=300, p=50$). This setting exacerbates the ``curse of dimensionality,'' creating a parameter space where likelihood-free or moment-based methods often struggle due to complex posterior geometry. To explicitly isolate the computational cost of high dimensionality from statistical complications (such as multicollinearity or hyperparameter tuning), we adopt a simplified simulation protocols distinct from Section \ref{subsec:sim_general}. Specifically, regressors are generated from independent standard normal distributions, errors follow $\mathcal{N}(0,1)$, and the true coefficient vector is sparse with $\boldsymbol{\beta}_0 = (1, 1, 1, 1, 1, 0, \dots, 0)^\top$. Furthermore, we employ a fixed bandwidth derived from Silverman's rule for BSQR to strictly measure the sampling efficiency (ESS/sec) of the HMC kernel itself, excluding the overhead of cross-validation. We contrast the sampling efficiency of BSQR (utilizing the Gaussian kernel and HMC) with PETEL (utilizing random walk Metropolis), summarizing the results in Table \ref{tab:high_dim_efficiency}.

\begin{table}[htbp]
\centering
\caption{High-dimensional computational scalability ($n=300$, $p=50$).}
\label{tab:high_dim_efficiency}
\begin{threeparttable}
\begin{tabular}{lccccc}
\toprule
\textbf{Method} & \textbf{Time (s)} & \textbf{Min ESS} & \textbf{ESS / Sec} & \textbf{$\widehat{R}_{\max}$} & \textbf{MSE} \\
 & (Avg.) & (Avg.) & (Avg.) & (Avg.) & ($\times 10^{-3}$) \\
\midrule
PETEL & 10.20 & 0.0 & 0.0 & ---$^{\dag}$ & 6.44 \\
BSQR & \textbf{6.15} & \textbf{795.0} & \textbf{129.3} & \textbf{1.01} & \textbf{4.18} \\
\bottomrule
\end{tabular}
\begin{tablenotes}
\small
\item \textit{Note:} Evaluation of computational efficiency and convergence diagnostics. \textbf{Min ESS} denotes the minimum bulk effective sample size across all parameters. $\widehat{R}_{\max}$ is the Gelman-Rubin convergence diagnostic. Values for \textbf{MSE} are scaled by $10^3$ for readability. Bold values indicate superior performance.
\item $^{\dag}$ For PETEL, the $\widehat{R}_{\max}$ statistic was undefined (NaN) due to zero variance within chains. This indicates a complete failure of the Random Walk Metropolis sampler to mix, resulting in the chain remaining stuck at the initialization point (the frequentist estimator) throughout the simulation.
\end{tablenotes}
\end{threeparttable}
\end{table}

Table \ref{tab:high_dim_efficiency} documents a catastrophic breakdown in the efficacy of the projected empirical likelihood approach. The PETEL algorithm exhibits a phenomenon of complete mixing collapse: despite successful numerical execution, the ESS is indistinguishable from zero, and the convergence diagnostics ($\widehat{R}_{\max}$) are undefined due to vanishing chain variance. This stagnation confirms that the Random Walk Metropolis sampler is ill-suited for the complex, constrained geometry of the empirical likelihood surface in 50 dimensions. Lacking gradient guidance, the blind proposal mechanism yields a near-zero acceptance probability, leaving the chain trapped at its starting value. Crucially, PETEL incurs a higher computational cost (10.20s vs. 6.15s) merely to reject every proposal. Consequently, the seemingly comparable MSE of PETEL ($6.44\times 10^{-3}$) is solely an artifact of the informative initialization provided by the frequentist estimator, rather than evidence of successful Bayesian learning.

In sharp contrast, the BSQR framework demonstrates HMC dominance in high-dimensional exploration. By smoothing the check loss, BSQR provides the necessary gradient information that allows the Hamiltonian Monte Carlo sampler to traverse the parameter space efficiently, overcoming the geometric bottlenecks identified by \shortciteA{Betancourt2017}. The sampler achieves near-perfect mixing ($\text{ESS} \approx 795$ out of 1000 iterations) and rapid convergence ($\widehat{R}_{\max} = 1.01$). This results in an effectively infinite efficiency gap: while BSQR generates approximately 129 independent posterior samples per second, PETEL generates none. These findings empirically establish that the smoothed likelihood formulation is not merely a theoretical convenience but a computational prerequisite for scalable Bayesian inference in high-dimensional quantile regression.

\section{Empirical analysis: Asymmetric systemic risk exposure in the post-COVID era}
\label{sec:empirical_application_final_para}

We apply the BSQR framework to assess the asymmetric dynamics of systemic risk for a globally systemically important financial institution (G-SIFI). Quantile-based dependence measures, such as the quantilogram \shortcite{Linton2007}, have highlighted the critical importance of tail dependency in financial econometrics. Building on this perspective, our analysis models the asymmetric link between daily stock returns of JPMorgan Chase \& Co. (JPM), the largest U.S. bank by assets, and the S\&P 500 Index (SP500),\footnote{Data for JPM and the S\&P 500 are from \url{https://finance.yahoo.com/quote/JPM/} and \url{https://finance.yahoo.com/quote/^GSPC/}, respectively.} using daily log-returns from January 1, 2017, to January 1, 2025. Prices are converted to continuously compounded returns ($r_t = \ln(P_t/P_{t-1})$), yielding a rich time series whose summary statistics appear in Table \ref{tab:empirical_descriptive_stats_final_para}. The pronounced leptokurtosis in both series strongly motivates a quantile-based approach.

\begin{table}[htbp]
\caption{Descriptive statistics for daily log-returns.}
\label{tab:empirical_descriptive_stats_final_para}
\centering
\begin{tabular}[t]{lcc}
\toprule
\textbf{Statistic} & \textbf{JPM} & \textbf{SP500}\\
\midrule
Observations & 2011 & 2011\\
Mean & 0.0006 & 0.0005\\
Std. Dev. & 0.0178 & 0.0118\\
Skewness & -0.0240 & -0.8614\\
Kurtosis & 14.33 & 16.30\\
\bottomrule
\end{tabular}
\end{table}

We specify a dynamic capital asset pricing model for quantile level $\tau$. Let $r_{jpm, t}$ and $r_{m, t}$ denote the log-returns of JPM and the SP500 at time $t$, respectively. The conditional quantile function is specified as:
\begin{align*}
    Q_{r_{jpm, t}}(\tau \mid r_{m, t}) = \alpha(\tau) + \beta(\tau) r_{m, t},
\end{align*}
where $\beta(\tau)$ measures systemic risk exposure. To capture asymmetry, we examine the downside beta ($\beta(0.05)$), quantifying exposure during severe downturns, and the upside beta ($\beta(0.95)$), measuring participation in market rallies. Using a rolling-window estimation with a one-year window (252 trading days) advanced monthly (21 days), we compare BSQR (Uniform and Triangular kernels) with the BQR-ALD benchmark across inferential stability, predictive accuracy, and sampler efficiency.

\subsection{Inferential stability and economic insights from asymmetric betas}

Methodologically, Figure \ref{fig:rolling_betas_combined} demonstrates that BSQR provides substantially more stable and interpretable parameter estimates. In both panels, the BQR-ALD benchmark produces visibly wider and more erratic credible intervals (light gray shaded area) plagued by high-frequency noise that can lead to over-interpretation of statistically insignificant fluctuations. In contrast, BSQR’s inherent kernel smoothing attenuates this noise, yielding tighter credible intervals (orange and blue shaded areas) and smoother posterior mean paths. This enables a more reliable quantification of uncertainty, clearly distinguishing genuine shifts in the risk profile from mere sampling variation.

Economically, BSQR’s inferential clarity reveals a multi-phase and profoundly asymmetric response of JPM’s systemic risk to the COVID-19 shock. After an initial, transient symmetric spike in both betas during the early-2020 market panic, a more dominant and divergent dynamic quickly emerged. The downside beta ($\beta(0.05)$), previously stable around a pre-crisis baseline of roughly 0.97, decoupled from market downturns, plummeting to a trough of 0.53 by mid-2021 in a sustained ``defensive decoupling'' effect before settling into a new, lower regime averaging 0.92. This distinct regime shift highlights the model's capacity to capture rapid structural changes in systemic risk exposure, consistent with the structural break phenomena analyzed in quantile change-point literature \shortcite{Lee2018}. In stark contrast, the upside beta ($\beta(0.95)$), originating from a similar 0.96 baseline, surged to a peak of 1.38 during the stimulus-fueled recovery, followed by a structural re-pricing that established a new, markedly lower equilibrium averaging 0.75. The superior stability of BSQR is what makes this nuanced dynamic sequence\,---\,a transient symmetric spike followed by a powerful asymmetric divergence\,---\,unambiguously clear.

\begin{figure}[htbp]
    \centering
    
    \begin{subfigure}{\textwidth}
        \centering
        \includegraphics[width=0.8\textwidth]{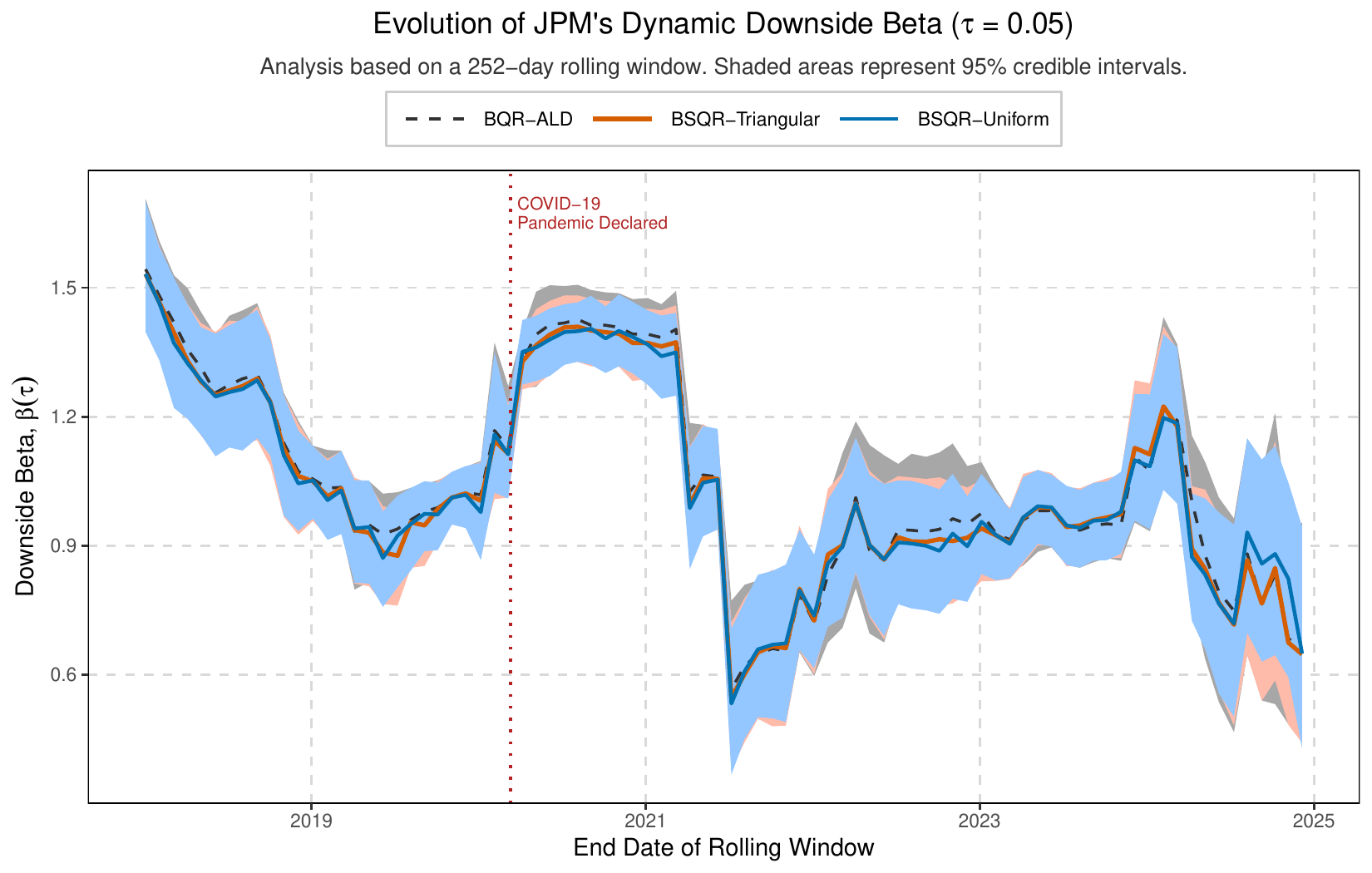} 
        \caption{Evolution of the dynamic downside beta ($\beta(0.05)$) for JPM.}
        \label{fig:rolling_beta_0.05}
    \end{subfigure}
    
    \vspace{1cm}
    
    \begin{subfigure}{\textwidth}
        \centering
        \includegraphics[width=0.8\textwidth]{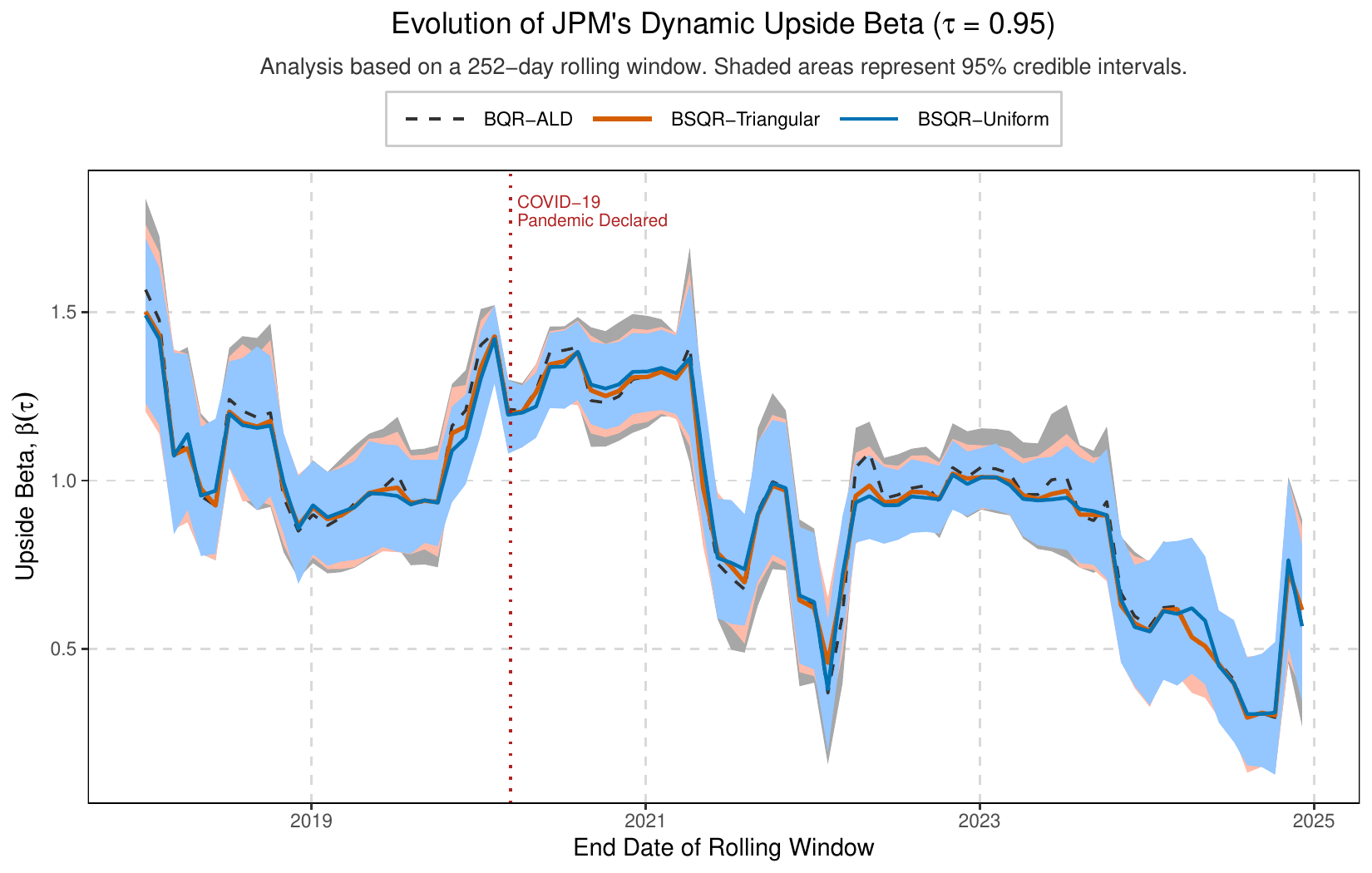} 
        \caption{Evolution of the dynamic upside beta ($\beta(0.95)$) for JPM.}
        \label{fig:rolling_beta_0.95}
    \end{subfigure}
    
    \captionsetup{justification=justified, font=small}
    \caption{Evolution of dynamic downside and upside systemic risk betas for JPM.}
    \label{fig:rolling_betas_combined} 
    
    \par\medskip
    
    \begin{minipage}{\textwidth}
        \captionsetup{width=0.9\textwidth, justification=justified, font=small}
        \textit{Note:} Comparison of the benchmark (BQR-ALD) with BSQR methods (uniform and triangular kernels). The figures plot the posterior mean and 95\% credible intervals for the (a) downside beta ($\tau=0.05$) and (b) upside beta ($\tau=0.95$), estimated using a 252-day rolling window. The BQR-ALD (benchmark) estimates exhibit wider and more volatile credible intervals compared to the smoother and tighter intervals from the BSQR methods.
    \end{minipage}
\end{figure}

\subsection{Predictive accuracy and sampler efficiency}

Beyond inferential quality, we assess BSQR's out-of-sample predictive accuracy and computational efficiency. In terms of predictive accuracy, measured by the average one-day-ahead check loss, Table \ref{tab:forecast_results_asymmetric} shows that BSQR is highly competitive. For the downside quantile level ($\tau=0.05$), the check loss of BSQR-Uniform (0.002998) and BSQR-Triangular (0.003031) is economically indistinguishable from that of BQR-ALD (0.003027). Notably, for the upside quantile level ($\tau=0.95$), both BSQR methods (0.001338 and 0.001349) outperform the BQR-ALD benchmark (0.001363). This is a powerful result: BSQR’s substantial gains in inferential stability are achieved without sacrificing\,---\,and in the upside case, even enhancing\,---\,predictive accuracy.

\begin{table}[htbp]
\centering
\caption{Out-of-sample forecasting performance (check loss).}
\label{tab:forecast_results_asymmetric}
\begin{threeparttable}
\begin{tabular}{lcc}
\toprule
\textbf{Methods} & \textbf{Downside ($\tau=0.05$)} & \textbf{Upside ($\tau=0.95$)}\\
\midrule
BQR-ALD & 0.003027 & 0.001363\\
BSQR-Uniform & \textbf{0.002998} & \textbf{0.001338}\\
BSQR-Triangular & 0.003031 & 0.001349\\
\bottomrule
\end{tabular}
\begin{tablenotes}
\small
\item \textit{Note:} Average one-day-ahead check loss. Lower values indicate better predictive accuracy.
\end{tablenotes}
\end{threeparttable}
\end{table}

Computationally, BSQR demonstrates superior sampler efficiency, with the Uniform kernel emerging as the dominant specification (Table \ref{tab:mcmc_efficiency_asymmetric_detailed}). The average minimum bulk effective sample size (Avg. Min. ESS) for BSQR-Uniform (2554.72) is approximately 83\% higher than that of BQR-ALD (1399.66). Crucially, this improvement is achieved with a lower computational time (0.69s vs. 0.77s), indicating that the combination of the Uniform kernel and HMC is strictly more efficient than the Gibbs sampling benchmark. While the BSQR-Triangular method incurs a higher computational cost (4.23s) due to the complexity of its smooth approximation, it similarly delivers a 79\% increase in ESS, offering a robust alternative. Overall, BSQR-Uniform provides the optimal balance, delivering higher predictive accuracy and nearly double the sampling efficiency in less time.

\begin{table}[htbp]
\centering
\caption{MCMC sampler efficiency in rolling-window analysis.}
\label{tab:mcmc_efficiency_asymmetric_detailed}
\begin{threeparttable}
\begin{tabular}{lcccc}
\toprule
\textbf{Methods} & \textbf{Avg. Min. ESS} & \textbf{Avg. Time (s)} & \textbf{Avg. Divergences} & \textbf{Avg. Selected $h$}\\
\midrule
BQR-ALD & 1399.66 & 0.77 & 0.00 & ---\\
BSQR-Uniform & \textbf{2554.72} & \textbf{0.69} & 0.00 & 0.60\\
BSQR-Triangular & 2506.05 & 4.23 & 0.01 & 0.66\\
\bottomrule
\end{tabular}
\begin{tablenotes}
\small
\item \textit{Note:} Efficiency metrics averaged over rolling windows. \textbf{Avg. Min. ESS} denotes the minimum bulk effective sample size across parameters. \textbf{Avg. Divergences} refers to the number of divergent transitions in NUTS.
\end{tablenotes}
\end{threeparttable}
\end{table}

\FloatBarrier

\subsection{Robustness to bandwidth selection}\label{sec:robustness_analysis}

To examine the influence of smoothing bandwidth (Section~\ref{sec:simulation}), we conducted a sensitivity analysis of the downside beta ($\beta(0.05)$) for a representative 252-day rolling window ending July 7, 2020, during significant market turmoil. Using the BSQR-Uniform method, we compared three bandwidths: the cross-validated choice ($h_{CV} \approx 1.08$), a smaller undersmoothed value ($h \approx 0.54$), and a larger oversmoothed one ($h \approx 2.16$). As shown in Figure~\ref{fig:sensitivity_analysis}, the posterior location of the systemic risk parameter is sensitive to $h$, with the mean shifting from about 1.36 to 1.42 as smoothing increases, underscoring the need for a principled, data-driven bandwidth choice. Yet, despite this point-estimate sensitivity, the economic conclusion holds: all posterior distributions concentrate on values well above 1, confirming high systemic risk in the crisis period regardless of bandwidth. This analysis both quantifies model sensitivity and highlights the value of cross-validation as an objective, replicable criterion for balancing fit and smoothing.

\begin{figure}[htbp]
    \centering
    \includegraphics[width=0.7\textwidth]{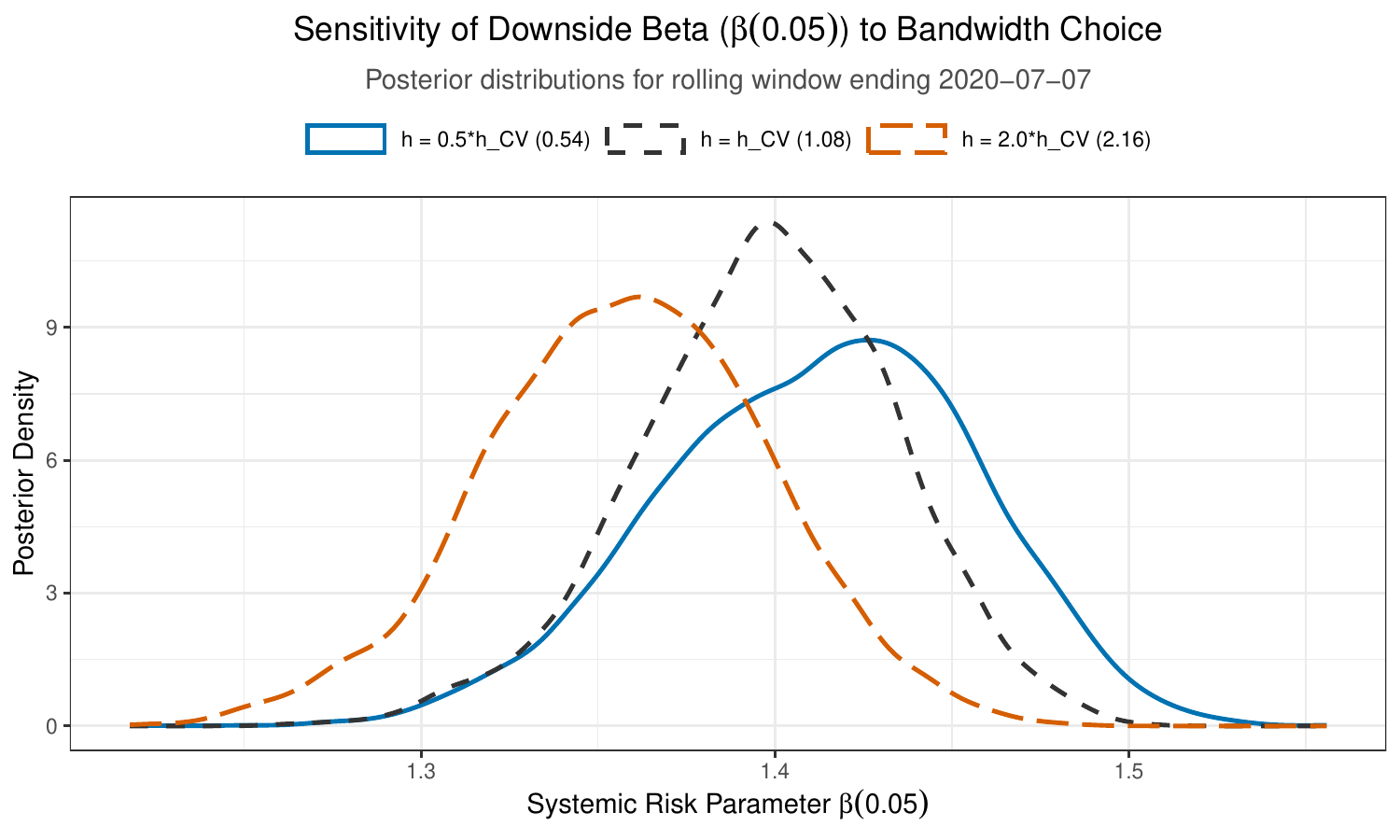}
\captionsetup{width=0.9\textwidth, justification=justified, font=small}
\caption{Sensitivity analysis of the downside beta ($\beta(0.05)$) to bandwidth choice. Posterior distributions for JPM are estimated on the 252-day window ending July 7, 2020. The figure compares the cross-validated bandwidth ($h_{\text{CV}}$, black dashed) with undersmoothed ($0.5h_{\text{CV}}$, orange long-dashed) and oversmoothed ($2.0h_{\text{CV}}$, blue solid) alternatives. While increasing bandwidth systematically shifts the posterior location to the right, the qualitative conclusion of high systemic risk ($\beta > 1$) remains robust.}
    \label{fig:sensitivity_analysis}
\end{figure}

Our empirical application shows BSQR to be a substantial methodological advance: it delivers more reliable and stable parameter estimates while maintaining competitive predictive performance. This enhanced inferential fidelity supports a more credible, nuanced understanding of economic phenomena, offering a powerful tool for researchers and practitioners prioritizing robustness.

\FloatBarrier
\section{Discussion}\label{sec:discussion}

This paper introduces Bayesian smoothed quantile regression (BSQR), a principled framework that reconciles the long-standing tension between computational efficiency and inferential validity in Bayesian quantile analysis. By constructing a fully differentiable pseudo-likelihood via kernel smoothing\,---\,a classical device for managing the bias--variance trade-off \shortcite{Gozalo2000}\,---\,BSQR delivers decision-theoretically aligned inference for conditional quantiles, rectifies the finite-sample bias documented for BQR-ALD \shortcite{Gneiting2011}, and enables efficient gradient-based MCMC such as HMC. Our theoretical analysis further establishes posterior consistency, Bernstein--von Mises (BvM) limits, and global posterior propriety, and it derives a generalized calibration condition for the scale parameter that guarantees valid frequentist coverage.

A distinct methodological advantage of BSQR over empirical-likelihood-based frameworks \shortcite{Tang2022} lies in its modularity and tractability. Whereas empirical likelihood methods often face severe geometric and computational hurdles in the presence of latent variables, many moment conditions, or high-dimensional constraints, BSQR provides an explicit, globally supported pseudo-likelihood. This ``plug-and-play'' structure facilitates seamless embedding into hierarchical and latent-variable architectures, while preserving a clear decision-theoretic interpretation. In particular, the differentiable formulation is well suited for extensions to endogeneity via control-function approaches \shortcite{Lee2007} and to semi- and nonparametric specifications (e.g., splines or Gaussian processes), directly addressing the limitations of purely linear quantile models.

Future research avenues are rich. Beyond a fully Bayesian treatment of bandwidth selection \shortcite{Silverman1986}, the BSQR framework naturally extends to multivariate quantiles \shortcite{Hallin2010} and to dynamic models with time-varying parameters \shortcite{Gerlach2011}, where smoothing-based pseudo-likelihoods can again regularize non-smooth objectives without sacrificing interpretability. Moreover, the coherent posterior quantification delivered by BSQR is directly applicable to probabilistic assessments of stochastic dominance, a central theme in modern econometrics \shortcite{Linton2005}, allowing researchers to test dominance orders without resorting to complex resampling schemes. In this sense, BSQR is not merely a computational refinement but a structured way of combining frequentist smoothing precision with Bayesian probabilistic richness, opening a flexible platform for future developments in distributional and risk-oriented econometrics.

\phantomsection
\section*{Declaration of competing interest}
\sectionbookmark{Declaration of competing interest}
The authors declare that they have no conflict of interest.

\phantomsection
\section*{Author contributions}
\sectionbookmark{Author contributions}

Bingqi Liu conceived the study, developed the BSQR framework, derived its theoretical properties, wrote the software, and conducted all numerical experiments. He also wrote the initial draft of the manuscript. Kangqiang Li provided assistance with the proofs of asymptotic posterior consistency and kernel effects on posterior concentration. Tianxiao Pang supervised the research and provided critical feedback on the manuscript. All authors reviewed and approved the final manuscript.

\phantomsection
\section*{Data and code availability}
\sectionbookmark{Data and code availability}
The source code and data required to replicate all numerical results in this paper are publicly available on GitHub at the following repository: \url{https://github.com/BeauquinLau/BSQR}.

\phantomsection
\begin{singlespace}
\bibliography{Bayesian_Smoothed_Quantile_Regression}
\end{singlespace}

\newpage
\appendix
\phantomsection
\addcontentsline{toc}{section}{Appendix A: Proofs}
\section*{Appendix A: Proofs}
\label{sec:AppendixA}
\setcounter{equation}{0}
\renewcommand{\theequation}{A.\arabic{equation}}
\setcounter{subsection}{0}
\renewcommand{\thesubsection}{A.\arabic{subsection}}
\renewcommand{\theHsubsection}{A.\arabic{subsection}}

\subsection{Proofs for posterior consistency}
\label{app:proofs_consistency}

Before proving the main consistency theorem, we establish the following technical lemma regarding the bias of the smoothed score function.

\begin{techlemma}[Bias of the smoothed score]\label{lem:score_bias}
    Under \textbf{Assumption} \ref{as:consistency_conditions} (specifically \textbf{A2} and \textbf{A4}), the expected value of the smoothed score function evaluated at the true parameter $\boldsymbol{\beta}_0(\tau)$ satisfies:
    \begin{align}
        \mathbb{E}[\Psi_h(\varepsilon_{0i}; \tau)] = -\frac{1}{2} f'_{\varepsilon}(0) \mu_2(K) h^2 + o(h^2) = O(h^2),
    \end{align}
    where $\mu_2(K) = \int_{-\infty}^{\infty} u^2 K(u) \,\mathrm{d}u$. This result implies that the smoothing introduces a deterministic bias of order $O(h^2)$.
\end{techlemma}

\phantomsection
\begin{proof}[Proof of Lemma~\ref{lem:score_bias}]
    Recall the definition $\Psi_h(e; \tau) = \int_{-\infty}^{\infty} (\tau - \mathbb{I}(e < v)) K_h(v) \,\mathrm{d}v$. The expectation with respect to the error $\varepsilon_{0i} \sim F_\varepsilon$ is:
    \begin{align*}
        \mathbb{E}[\Psi_h(\varepsilon_{0i}; \tau)] &= \int_{-\infty}^{\infty} \mathbb{E}[\tau - \mathbb{I}(\varepsilon_{0i} < v)] K_h(v) \,\mathrm{d}v = \int_{-\infty}^{\infty} (\tau - F_\varepsilon(v)) K_h(v) \,\mathrm{d}v.
    \end{align*}
    Using the change of variables $v = uh$ and noting that $F_\varepsilon(0) = \tau$, we perform a second-order Taylor expansion of $F_\varepsilon(uh)$ around 0 since $F_\varepsilon(\cdot)$ is twice differentiable around 0 (\textbf{Assumption A2a}):
    \begin{align*}
        F_\varepsilon(uh) = \tau + (uh) f_\varepsilon(0) + \frac{(uh)^2}{2} f'_\varepsilon(0) + o((uh)^2).
    \end{align*}
    Substituting this expansion back into the integral yields:
    \begin{align*}
        \mathbb{E}[\Psi_h(\varepsilon_{0i}; \tau)] &= \int_{-\infty}^{\infty} \left( \tau - \left[ \tau + uh f_\varepsilon(0) + \frac{u^2 h^2}{2} f'_\varepsilon(0) + o(u^2 h^2) \right] \right) K(u) \,\mathrm{d}u \\
        &= -h f_\varepsilon(0) \underbrace{\int_{-\infty}^{\infty} u K(u) \,\mathrm{d}u}_{=0} - \frac{h^2}{2} f'_\varepsilon(0) \underbrace{\int_{-\infty}^{\infty} u^2 K(u) \,\mathrm{d}u}_{=\mu_2(K)} + \int_{-\infty}^{\infty} o(u^2 h^2) K(u) \,\mathrm{d}u.
    \end{align*}
    The first-order term vanishes due to the symmetry of the kernel $K(\cdot)$ (\textbf{Assumption A4}). Since $\int_{-\infty}^{\infty} u^2 K(u) \mathrm{d}u < \infty$, the remainder term is of order $o(h^2)$. Thus, the dominant term is $O(h^2)$.
\end{proof}

\phantomsection
\begin{proof}[Proof of Theorem~\ref{thm:consistency}]
\label{pf:consistency}
The proof follows a standard strategy for demonstrating Bayesian posterior consistency (e.g., \shortciteNP{Chernozhukov2003, Ghosal2000}). The approach consists of two key stages: first, we establish the consistency of the frequentist M-estimator, $\hat{\boldsymbol{\beta}}_h(\tau)$, which minimizes the smoothed objective function $\widehat{R}_h(\boldsymbol{b}; \tau, h)$. Second, we leverage this result to show that the full Bayesian posterior concentrates around the true parameter value.

The M-estimator $\hat{\boldsymbol{\beta}}_h(\tau)$ is the minimizer of the sample objective function Eq.~\eqref{sqr_obj}. To prove its consistency, we analyze its population analogue, $R_h(\boldsymbol{b}; \tau, h) \coloneqq \mathbb{E}[\widehat{R}_h(\boldsymbol{b}; \tau, h)]$, and show that it is uniquely minimized at the true parameter $\boldsymbol{\beta}_0(\tau)$ as $n \to \infty$. First, we analyze the gradient of the population objective function with respect to $\boldsymbol{b}$, which is $\nabla_{\boldsymbol{b}} R_h(\boldsymbol{b}; \tau, h) = \mathbb{E}[-\boldsymbol{x}_i \Psi_h(e_i(\boldsymbol{b}); \tau)]$ by using the definition Eq.~\eqref{eq:Psi_h_definition}. Evaluating this at the true parameter $\boldsymbol{b} = \boldsymbol{\beta}_0(\tau)$ yields:
\begin{align*}
    \nabla_{\boldsymbol{\beta}_0(\tau)} R_h(\boldsymbol{\beta}_0(\tau); \tau, h) = \mathbb{E}[-\boldsymbol{x}_i \Psi_h(\varepsilon_{0i}; \tau)] = -\mathbb{E}_{\boldsymbol{x}}[\boldsymbol{x}_i] \cdot \mathbb{E}_{\varepsilon}[\Psi_h(\varepsilon_{0i}; \tau)],
\end{align*}
where the final equality uses the independence of $\boldsymbol{x}_i$ and $\varepsilon_{0i}$. The central task is to evaluate the expectation of the score function, $\mathbb{E}_{\varepsilon}[\Psi_h(\varepsilon_{0i}; \tau)]$. By Lemma~\ref{lem:score_bias}, we directly have:
\begin{align*}
    \mathbb{E}_{\varepsilon}[\Psi_h(\varepsilon_{0i}; \tau)] = O(h^2).
\end{align*}
Thus, $\nabla_{\boldsymbol{\beta}_0(\tau)} R_h(\boldsymbol{\beta}_0(\tau); \tau, h) = O(h^2)$, which converges to $\boldsymbol{0}$ as $n \to \infty$ due to $h \to 0$.

Next, we verify the second-order condition by examining the Hessian matrix of $R_h(\boldsymbol{b}; \tau, h)$ at $\boldsymbol{\beta}_0(\tau)$, denoted as $\boldsymbol{H}_h(\boldsymbol{\beta}_0(\tau))$. Note that
\begin{align*}
    \boldsymbol{H}_h(\boldsymbol{b}) \coloneqq \nabla^2_{\boldsymbol{b}} R_h(\boldsymbol{b};\tau,h) = \mathbb{E}[ \boldsymbol{x}_i \boldsymbol{x}_i^\top \cdot \Psi'_{h}(e_i(\boldsymbol{b}); \tau) ].
\end{align*}
At $\boldsymbol{b}=\boldsymbol{\beta}_0(\tau)$, we have $\Psi'_{h}(\varepsilon_{0i};\tau) = K_{h}(\varepsilon_{0i})$. As $h\to 0$, the expectation $\mathbb{E}[\Psi'_{h}(\varepsilon_{0i};\tau)] = \int_{-\infty}^{\infty} K_{h}(u) f_\varepsilon(u)\,\mathrm{d}u$ converges to $f_\varepsilon(0)$. Consequently, the Hessian matrix:
\begin{align*}
    \boldsymbol{H}_h(\boldsymbol{\beta}_0(\tau)) \to \mathbb{E}[\boldsymbol{x}_i \boldsymbol{x}_i^\top f_\varepsilon(0)] = f_\varepsilon(0) \mathbb{E}[\boldsymbol{x}_i \boldsymbol{x}_i^\top] = f_\varepsilon(0) \Sigma_X.
\end{align*}
This limiting matrix is positive definite, since $f_\varepsilon(0) > 0$ by \textbf{Assumption (A2a)} and $\Sigma_X$ is positive definite by \textbf{Assumption (A3)}. Given the convexity of the smoothed objective function, this ensures that $\boldsymbol{\beta}_0(\tau)$ is the unique global minimizer.

Since the population objective function $R_h(\boldsymbol{b};\tau,h)$ has a gradient that converges to zero and a positive definite Hessian at $\boldsymbol{\beta}_0(\tau)$, $\boldsymbol{\beta}_0(\tau)$ is the unique minimizer of $R_h(\boldsymbol{b};\tau,h)$ in the limit. To establish that $\hat{\boldsymbol{\beta}}_h(\tau) \xrightarrow{P} \boldsymbol{\beta}_0(\tau)$, M-estimation theory requires showing that $\sup_{\boldsymbol{b} \in \mathcal{B}} |\widehat{R}_h(\boldsymbol{b};\tau,h) - R_h(\boldsymbol{b};\tau,h)| \xrightarrow{P} 0$. This uniform convergence is guaranteed by the uniform law of large numbers (ULLN), for which we must verify the existence of a dominating function for $L_{h}(e_i(\boldsymbol{b}); \tau)$ with a finite expectation. The argument of the loss function is $e_i(\boldsymbol{b}) = y_i - \boldsymbol{x}_i^\top\boldsymbol{b} = \varepsilon_{0i} - \boldsymbol{x}_i^\top(\boldsymbol{b} - \boldsymbol{\beta}_0(\tau))$. Since the magnitude of $L_h(\cdot; \tau)$ is bounded by a multiple of its argument, the task reduces to uniformly bounding $|e_i(\boldsymbol{b})|$ over the compact set $\mathcal{B}$:
\begin{align*}
    \sup_{\boldsymbol{b} \in \mathcal{B}} |e_i(\boldsymbol{b})| \le |\varepsilon_{0i}| + \sup_{\boldsymbol{b} \in \mathcal{B}} |\boldsymbol{x}_i^\top(\boldsymbol{b} - \boldsymbol{\beta}_0(\tau))| \le |\varepsilon_{0i}| + \|\boldsymbol{x}_i\| \sup_{\boldsymbol{b} \in \mathcal{B}} \|\boldsymbol{b} - \boldsymbol{\beta}_0(\tau)\|.
\end{align*}

The inequality above shows that this bound depends on two key terms, both of which are controlled by our assumptions. First, the term $\sup_{\boldsymbol{b} \in \mathcal{B}} \|\boldsymbol{b} - \boldsymbol{\beta}_0(\tau)\|$ is bounded by the diameter of the parameter space, $D_{\mathcal{B}}$, because $\mathcal{B}$ is assumed to be compact (\textbf{Assumption A1}). Second, the covariate norm $\|\boldsymbol{x}_i\|$ is uniformly bounded by a constant $M_X$ (\textbf{Assumption A3}). These assumptions jointly allow us to construct a dominating function $D_i = C \cdot (|\varepsilon_{0i}| + M_X D_{\mathcal{B}})$ for some constant $C>0$. The final condition is to ensure its expectation, $\mathbb{E}[D_i]$, is finite. This is guaranteed by \textbf{Assumption (A2b)}, which explicitly requires that the error term $\varepsilon_{0i}$ has a finite first moment. With all conditions for the ULLN satisfied, the uniform convergence of the objective function is established. The consistency of the M-estimator, $\hat{\boldsymbol{\beta}}_h(\tau) \xrightarrow{P} \boldsymbol{\beta}_0(\tau)$, then follows directly from standard large-sample theory \shortcite{Newey1994}.

With these frequentist properties established, the proof of posterior consistency follows a standard argument. The Bayesian analysis employs a quasi-likelihood function constructed from the M-estimation objective: $L(\boldsymbol{\beta}, \theta \mid \boldsymbol{y}, \boldsymbol{\mathcal{X}}) \propto \exp(-n\theta \widehat{R}_h(\boldsymbol{\beta}; \tau, h))$. The properties we have just demonstrated are precisely the conditions that ensure this quasi-likelihood concentrates its mass in a shrinking neighborhood of the true parameter. Combined with a prior distribution $\pi(\boldsymbol{\beta})$ that assigns positive mass to any such neighborhood (\textbf{Assumption A5}), established theorems in Bayesian asymptotics (e.g., Theorem 1 in \shortciteNP{Chernozhukov2003}; see also Theorem 2.1 in \shortciteNP{Ghosal2000}) confirm that the posterior distribution inherits this concentration property. Consequently, the posterior distribution $\pi(\boldsymbol{\beta} \mid \boldsymbol{y}, \boldsymbol{\mathcal{X}}, \theta)$ contracts to a point mass at the true parameter $\boldsymbol{\beta}_0(\tau)$.
\end{proof}

\subsection{Proofs for asymptotic normality}
\label{app:proofs_normality}

\phantomsection
\begin{proof}[Proof of Proposition~\ref{prop:existence}]\label{pf:existence}
    The BSQR posterior density is proportional to the product of the prior and the quasi-likelihood:
    \[
    \pi(\boldsymbol{\beta} \mid \boldsymbol{y}, \boldsymbol{\mathcal{X}}) \propto \pi(\boldsymbol{\beta}) \exp\left( -\theta \sum_{i=1}^n L_h(y_i - \boldsymbol{x}_i^\top \boldsymbol{\beta}; \tau) \right).
    \]
    By definition, the smoothed loss function is a convolution: $L_h(u; \tau) = \int_{-\infty}^{\infty} \rho_\tau(u - v) K_h(v) \,\mathrm{d}v$. For any valid kernel PDF $K(\cdot)$ (which integrates to 1) and bandwidth $h>0$, this integral $L_h(u; \tau)$ is finite for all $u \in \mathbb{R}$. Since the exponential function $f(z) = \exp(-z)$ maps any finite real number to a strictly positive value, the term $\exp\left( -\theta \sum_{i=1}^n L_h(\cdot;\tau) \right)$ is strictly positive for all $\boldsymbol{\beta} \in \mathbb{R}^d$. Consequently, as long as the prior $\pi(\boldsymbol{\beta})$ is proper and positive over $\mathbb{R}^d$, the posterior density is strictly positive and well-defined everywhere. This contrasts with standard empirical likelihood approaches where the likelihood can be zero (or undefined) if the moment conditions cannot be satisfied within the convex hull of the data.
\end{proof}

\phantomsection
\begin{proof}[Proof of Proposition~\ref{prop:rate}]\label{pf:rate}
    
    Let $S_n(\boldsymbol{\beta}) \coloneqq \nabla_{\boldsymbol{\beta}} \widehat{R}_h(\boldsymbol{\beta}; \tau, h) = -\sum_{i=1}^n \Psi_h(y_i - \boldsymbol{x}_i^\top \boldsymbol{\beta}; \tau) \boldsymbol{x}_i$ be the score function of the smoothed objective. The estimator $\hat{\boldsymbol{\beta}}_h(\tau)$ satisfies the first-order condition $S_n(\hat{\boldsymbol{\beta}}_h(\tau)) = \boldsymbol{0}$.
    
    We verify the rate of convergence using a Taylor expansion of $S_n(\hat{\boldsymbol{\beta}}_h(\tau))$ around the true parameter $\boldsymbol{\beta}_0(\tau)$. Since $L_h(\cdot; \tau)$ is twice continuously differentiable (guaranteed by the continuity of $K(\cdot)$ in \textbf{Assumption A4}), we have:
    \begin{align*}
        \boldsymbol{0} = S_n(\hat{\boldsymbol{\beta}}_h(\tau)) = S_n(\boldsymbol{\beta}_0(\tau)) + \nabla_{\boldsymbol{\beta}} S_n(\tilde{\boldsymbol{\beta}}(\tau)) (\hat{\boldsymbol{\beta}}_h(\tau) - \boldsymbol{\beta}_0(\tau)),
    \end{align*}
    where $\tilde{\boldsymbol{\beta}}(\tau)$ lies between $\hat{\boldsymbol{\beta}}_h(\tau)$ and $\boldsymbol{\beta}_0(\tau)$. Rearranging the terms yields:
    \begin{align}\label{eq:rate_expansion}
        \hat{\boldsymbol{\beta}}_h(\tau) - \boldsymbol{\beta}_0(\tau) = - \left[ \nabla_{\boldsymbol{\beta}} S_n(\tilde{\boldsymbol{\beta}}(\tau)) \right]^{-1} S_n(\boldsymbol{\beta}_0(\tau)).
    \end{align}
Using the chain rule, the Hessian matrix is derived as follows:
    \begin{align*}
        \nabla_{\boldsymbol{\beta}} S_n(\boldsymbol{\beta}) &= \nabla_{\boldsymbol{\beta}} \left( -\sum_{i=1}^n \Psi_h(e_i(\boldsymbol{\beta}); \tau) \boldsymbol{x}_i \right) = -\sum_{i=1}^n \boldsymbol{x}_i \cdot \nabla_{\boldsymbol{\beta}} \left[ \Psi_h(e_i(\boldsymbol{\beta}); \tau) \right]^\top \\
        &= -\sum_{i=1}^n \boldsymbol{x}_i \left[ \Psi'_h(e_i(\boldsymbol{\beta}); \tau) \cdot \nabla_{\boldsymbol{\beta}} e_i(\boldsymbol{\beta})^\top \right] = -\sum_{i=1}^n \boldsymbol{x}_i \left[ \frac{1}{h} K\left(\frac{e_i(\boldsymbol{\beta})}{h}\right) \cdot (-\boldsymbol{x}_i^\top) \right] \\
        &= \sum_{i=1}^n \frac{1}{h} K\left(\frac{e_i(\boldsymbol{\beta})}{h}\right) \boldsymbol{x}_i \boldsymbol{x}_i^\top.
    \end{align*}
    Under the consistency result (Theorem \ref{thm:consistency}), we have $\tilde{\boldsymbol{\beta}}(\tau) \xrightarrow{P} \boldsymbol{\beta}_0(\tau)$. By the law of large numbers and the continuous mapping theorem, the normalized Hessian evaluated at the intermediate point converges to the population Hessian at the truth:
    \begin{align*}
        \frac{1}{n} \nabla_{\boldsymbol{\beta}} S_n(\tilde{\boldsymbol{\beta}}(\tau)) &= \frac{1}{n} \nabla_{\boldsymbol{\beta}} S_n(\boldsymbol{\beta}_0(\tau)) + o_p(1) \\
        &= \frac{1}{n} \sum_{i=1}^n \frac{1}{h} K\left(\frac{e_i(\boldsymbol{\beta}_0(\tau))}{h}\right) \boldsymbol{x}_i \boldsymbol{x}_i^\top + o_p(1) \\
        &\xrightarrow{P} \mathbb{E}_{\boldsymbol{x}} \left[ \boldsymbol{x}_i \boldsymbol{x}_i^\top f_{\varepsilon}(0) \right] \\
        &= f_{\varepsilon}(0) \mathbb{E}[\boldsymbol{x}_i \boldsymbol{x}_i^\top] = f_{\varepsilon}(0)\Sigma_X = \mathcal{H}(\boldsymbol{\beta}_0(\tau)).
    \end{align*}
    Since $\mathcal{H}(\boldsymbol{\beta}_0(\tau))$ is positive definite and invertible, its inverse is $O_p(1)$.
    Substituting this into Eq.~\eqref{eq:rate_expansion}, we obtain:
    \begin{align}\label{eq:rate_simplified}
        \hat{\boldsymbol{\beta}}_h(\tau) - \boldsymbol{\beta}_0(\tau) = O_p(1) \cdot \frac{1}{n} S_n(\boldsymbol{\beta}_0(\tau)).
    \end{align}
    
    Next, we decompose the normalized score vector at the truth into a deterministic mean term and a zero-mean stochastic term:
    \begin{align*}
        \frac{1}{n} S_n(\boldsymbol{\beta}_0(\tau)) = \underbrace{\mathbb{E}\left[ \frac{1}{n} S_n(\boldsymbol{\beta}_0(\tau)) \right]}_{\text{Deterministic Component}} + \underbrace{\left( \frac{1}{n} S_n(\boldsymbol{\beta}_0(\tau)) - \mathbb{E}\left[ \frac{1}{n} S_n(\boldsymbol{\beta}_0(\tau)) \right] \right)}_{\text{Stochastic Component}}.
    \end{align*}
    Inheriting the i.i.d. structure of the observations $(y_i, \boldsymbol{x}_i)$ (Section \ref{subsec:motivation_and_background}), the stochastic term is an average of i.i.d. zero-mean random vectors with finite second moments (guaranteed by the boundedness of $\Psi_h$ and covariates in \textbf{Assumption A3}). By the central limit theorem, $\sqrt{n} \left( \frac{1}{n} S_n(\boldsymbol{\beta}_0(\tau)) - \mathbb{E}\left[ \frac{1}{n} S_n(\boldsymbol{\beta}_0(\tau)) \right] \right)$ converges in distribution to a normal distribution. Therefore:
    \begin{align*}
        \frac{1}{n} S_n(\boldsymbol{\beta}_0(\tau)) - \mathbb{E}\left[ \frac{1}{n} S_n(\boldsymbol{\beta}_0(\tau)) \right] = O_p(n^{-1/2}).
    \end{align*}
Furthermore, we analyze the expectation $\mathbb{E}[n^{-1} S_n(\boldsymbol{\beta}_0(\tau))] = -\mathbb{E}[\boldsymbol{x}_i] \mathbb{E}[\Psi_h(\varepsilon_{0i}; \tau)]$.
Invoking Lemma~\ref{lem:score_bias}, the expected smoothed score satisfies $\mathbb{E}[\Psi_h(\varepsilon_{0i}; \tau)] = O(h^2)$. 
Consequently, the bias term is simply:
\begin{align*}
    \mathbb{E}\left[ \frac{1}{n} S_n(\boldsymbol{\beta}_0(\tau)) \right] = -\mathbb{E}[\boldsymbol{x}_i] \cdot O(h^2) = O(h^2).
\end{align*}
Substituting these results back into Eq.~\eqref{eq:rate_simplified}:
    \begin{align*}
        \hat{\boldsymbol{\beta}}_h(\tau) - \boldsymbol{\beta}_0(\tau) &= O_p(1) \left[ O(h^2) + O_p(n^{-1/2}) \right] \\
        &= O_p(n^{-1/2} + h^2).
    \end{align*}
    This completes the proof.
\end{proof}

\phantomsection
\begin{proof}[Proof of Theorem~\ref{thm:bvm}]\label{pf:bvm}
    The proof relies on the local asymptotic normality (LAN) expansion of the log-posterior density. Due to the smoothing of the check loss via a continuous kernel $K(\cdot)$ (\textbf{Assumption A4}), the objective function is smooth, allowing us to employ a classical Taylor expansion approach.

    Recalling Eq.~\eqref{eq:sqr_log_likelihood}, since the normalizing term $-n \log Z(\theta, \tau, h)$ is independent of $\boldsymbol{\beta}$, we define the effective log-likelihood for $\boldsymbol{\beta}$ as $\ell_n(\boldsymbol{\beta}) \coloneqq -\theta \sum_{i=1}^n L_h(e_i(\boldsymbol{\beta}); \tau)$. We define the centered local parameter $\boldsymbol{t} = \sqrt{n}(\boldsymbol{\beta} - \hat{\boldsymbol{\beta}}_h(\tau))$, where $\hat{\boldsymbol{\beta}}_h(\tau)$ is the minimizer of $-\ell_n(\boldsymbol{\beta})$ (the sample smoothed risk minimizer). We examine the posterior density of $\boldsymbol{t}$, denoted as $\pi^*(\boldsymbol{t} \mid \boldsymbol{y})$. To characterize its asymptotic behavior, we begin by expanding the effective log-likelihood $\ell_n(\boldsymbol{\beta})$ around its minimizer $\hat{\boldsymbol{\beta}}_h(\tau)$. Utilizing the $C^2$ smoothness of $L_h(\cdot; \tau)$ guaranteed by \textbf{Assumption (A4)}, we have:
    \begin{align*}
    \ell_n(\boldsymbol{\beta}) = \ell_n(\hat{\boldsymbol{\beta}}_h(\tau)) + (\boldsymbol{\beta} - \hat{\boldsymbol{\beta}}_h(\tau))^\top \nabla_{\boldsymbol{\beta}} \ell_n(\hat{\boldsymbol{\beta}}_h(\tau)) + \frac{1}{2} (\boldsymbol{\beta} - \hat{\boldsymbol{\beta}}_h(\tau))^\top \nabla^2_{\boldsymbol{\beta}} \ell_n(\tilde{\boldsymbol{\beta}}(\tau)) (\boldsymbol{\beta} - \hat{\boldsymbol{\beta}}_h(\tau)),
	\end{align*}
    where $\tilde{\boldsymbol{\beta}}(\tau)$ lies on the line segment connecting $\boldsymbol{\beta}$ and $\hat{\boldsymbol{\beta}}_h(\tau)$.     
    Using the fact that the gradient vanishes at the minimizer ($\nabla_{\boldsymbol{\beta}} \ell_n(\hat{\boldsymbol{\beta}}_h(\tau)) = \boldsymbol{0}$) and substituting the local parameter parametrization $\boldsymbol{\beta} - \hat{\boldsymbol{\beta}}_h(\tau) = \boldsymbol{t}/\sqrt{n}$, the linear term disappears. The quadratic term simplifies as follows:
    \begin{align}\label{eq:taylor_expansion}
    \ell_n(\hat{\boldsymbol{\beta}}_h(\tau) + \boldsymbol{t}/\sqrt{n}) - \ell_n(\hat{\boldsymbol{\beta}}_h(\tau)) &= \frac{1}{2} \left(\frac{\boldsymbol{t}}{\sqrt{n}}\right)^\top \nabla^2_{\boldsymbol{\beta}} \ell_n(\tilde{\boldsymbol{\beta}}(\tau)) \left(\frac{\boldsymbol{t}}{\sqrt{n}}\right) \nonumber \\
    &= \frac{1}{2n} \boldsymbol{t}^\top \left( -\theta \sum_{i=1}^n \nabla^2_{\boldsymbol{\beta}} L_h(e_i(\tilde{\boldsymbol{\beta}}(\tau)); \tau) \right) \boldsymbol{t} \nonumber \\
    &= -\frac{1}{2} \boldsymbol{t}^\top \left( \frac{\theta}{n} \sum_{i=1}^n \nabla^2_{\boldsymbol{\beta}} L_h(e_i(\tilde{\boldsymbol{\beta}}(\tau)); \tau) \right) \boldsymbol{t}.
	\end{align}    
    The asymptotic behavior of the log-likelihood ratio in Eq.~\eqref{eq:taylor_expansion} is governed by the convergence of the scaled sample Hessian appearing in the parentheses. Recall from Section~\ref{subsec:technical_foundation_smoothing} (Eq.~\eqref{eq:hessian_matrix_structure}) that the Hessian of the smoothed objective is $\boldsymbol{H}_{\text{SQR}}(\boldsymbol{\beta}) = \sum_{i=1}^n 1/h K(e_i(\boldsymbol{\beta})/h) \boldsymbol{x}_i \boldsymbol{x}_i^\top$. Under \textbf{Assumption}~\ref{as:consistency_conditions} and general M-estimation theory (e.g., \shortciteNP{White1982}), the estimator $\hat{\boldsymbol{\beta}}_h(\tau)$ converges in probability to the pseudo-true parameter $\boldsymbol{\beta}^*_h(\tau) = \arg\min_{\boldsymbol{b}} \mathbb{E}[L_h(e_i(\boldsymbol{b}); \tau)]$. 
    Recalling that $\tilde{\boldsymbol{\beta}}(\tau)$ lies on the line segment connecting $\hat{\boldsymbol{\beta}}_h(\tau)$ and $\boldsymbol{\beta} = \hat{\boldsymbol{\beta}}_h(\tau) + \boldsymbol{t}/\sqrt{n}$, the distance $\|\tilde{\boldsymbol{\beta}}(\tau) - \hat{\boldsymbol{\beta}}_h(\tau)\|$ is bounded by $\|\boldsymbol{t}\|/\sqrt{n}$, which vanishes as $n \to \infty$. Consequently, the consistency of $\hat{\boldsymbol{\beta}}_h(\tau)$ implies $\tilde{\boldsymbol{\beta}}(\tau) \xrightarrow{P} \boldsymbol{\beta}^*_h(\tau)$.
    
    With the consistency of $\tilde{\boldsymbol{\beta}}(\tau)$ established, we explicitly establish the convergence of the sample Hessian matrix. We decompose the difference between the sample Hessian at the random intermediate point $\tilde{\boldsymbol{\beta}}(\tau)$ and the population Hessian at the fixed pseudo-truth $\boldsymbol{\beta}^*_h(\tau)$ as follows:
    \begin{align*}
         & \left\| \frac{1}{n} \sum_{i=1}^n \nabla^2_{\boldsymbol{\beta}} L_h(e_i(\tilde{\boldsymbol{\beta}}(\tau)); \tau) - \mathcal{H}(\boldsymbol{\beta}^*_h(\tau)) \right\| \\
        \le{}& \underbrace{\left\| \frac{1}{n} \sum_{i=1}^n \nabla^2_{\boldsymbol{\beta}} L_h(e_i(\tilde{\boldsymbol{\beta}}(\tau)); \tau) - \mathcal{H}(\tilde{\boldsymbol{\beta}}(\tau)) \right\|}_{\text{(I) Stochastic Equicontinuity}} + \underbrace{\left\| \mathcal{H}(\tilde{\boldsymbol{\beta}}(\tau)) - \mathcal{H}(\boldsymbol{\beta}^*_h(\tau)) \right\|}_{\text{(II) Continuity}}.
    \end{align*}
    The first term (I) converges to zero in probability by \textbf{Assumption (A7)}, which guarantees the uniform convergence of the sample Hessian over a neighborhood of $\boldsymbol{\beta}^*_h(\tau)$. The second term (II) converges to zero by the Continuous Mapping Theorem, given that the population Hessian $\mathcal{H}(\cdot)$ is continuous and $\tilde{\boldsymbol{\beta}}(\tau) \xrightarrow{P} \boldsymbol{\beta}^*_h(\tau)$.
    
    Thus, we have established that $\frac{1}{n} \sum_{i=1}^n \nabla^2_{\boldsymbol{\beta}} L_h(e_i(\tilde{\boldsymbol{\beta}}(\tau)); \tau) = \mathcal{H}(\boldsymbol{\beta}^*_h(\tau)) + o_p(1)$. Substituting this limit back into Eq.~\eqref{eq:taylor_expansion}, the log-likelihood ratio behaves asymptotically as:
	\begin{align} \label{eq:quadratic_limit}
    	\ell_n(\hat{\boldsymbol{\beta}}_h(\tau) + \boldsymbol{t}/\sqrt{n}) - \ell_n(\hat{\boldsymbol{\beta}}_h(\tau)) &= -\frac{1}{2} \boldsymbol{t}^\top \left( \theta \mathcal{H}(\boldsymbol{\beta}^*_h(\tau)) + o_p(1) \right) \boldsymbol{t} \nonumber \\
    	&= -\frac{1}{2} \boldsymbol{t}^\top (\theta \mathcal{H}(\boldsymbol{\beta}^*_h(\tau))) \boldsymbol{t} + o_p(1).
	\end{align}
        
    With the quadratic approximation of the likelihood established in Eq.~\eqref{eq:quadratic_limit}, we now derive the limiting form of the posterior distribution for the local parameter $\boldsymbol{t} = \sqrt{n}(\boldsymbol{\beta} - \hat{\boldsymbol{\beta}}_h(\tau))$. By the standard change of variable formula, the posterior density of $\boldsymbol{t}$, denoted as $\pi^*(\boldsymbol{t} \mid \boldsymbol{y})$, relates to the posterior of $\boldsymbol{\beta}$ via the Jacobian determinant of the transformation $\boldsymbol{\beta} = \hat{\boldsymbol{\beta}}_h(\tau) + n^{-1/2}\boldsymbol{t}$:
    \begin{align*}
        \pi^*(\boldsymbol{t} \mid \boldsymbol{y}) &= \pi\left( \hat{\boldsymbol{\beta}}_h(\tau) + \frac{\boldsymbol{t}}{\sqrt{n}} \bigg| \boldsymbol{y} \right) \left| \det\left( \frac{\partial \boldsymbol{\beta}}{\partial \boldsymbol{t}} \right) \right| \\
        &\propto \exp\left( \ell_n(\hat{\boldsymbol{\beta}}_h(\tau) + \boldsymbol{t}/\sqrt{n}) \right) \pi\left( \hat{\boldsymbol{\beta}}_h(\tau) + \boldsymbol{t}/\sqrt{n} \right) \cdot n^{-d/2} \\
        &\propto \exp\left( \ell_n(\hat{\boldsymbol{\beta}}_h(\tau) + \boldsymbol{t}/\sqrt{n}) \right) \pi\left( \hat{\boldsymbol{\beta}}_h(\tau) + \boldsymbol{t}/\sqrt{n} \right).
    \end{align*}
    Here, the Jacobian determinant $n^{-d/2}$ is a multiplicative constant with respect to $\boldsymbol{t}$ and is absorbed into the normalization constant. To analyze the asymptotic behavior, we normalize the likelihood by subtracting the constant $\ell_n(\hat{\boldsymbol{\beta}}_h(\tau))$ (the maximum log-likelihood value) inside the exponent. This does not affect the distribution since it corresponds to multiplying by a constant independent of $\boldsymbol{t}$. Substituting the quadratic approximation Eq.~\eqref{eq:quadratic_limit} into the exponent, we obtain:
    \begin{align*}
        \pi^*(\boldsymbol{t} \mid \boldsymbol{y}) &\propto \exp\left( \ell_n(\hat{\boldsymbol{\beta}}_h(\tau) + \boldsymbol{t}/\sqrt{n}) - \ell_n(\hat{\boldsymbol{\beta}}_h(\tau)) \right) \pi\left( \hat{\boldsymbol{\beta}}_h(\tau) + \boldsymbol{t}/\sqrt{n} \right) \\
        &= \exp\left( -\frac{1}{2} \boldsymbol{t}^\top (\theta \mathcal{H}(\boldsymbol{\beta}^*_h(\tau))) \boldsymbol{t} + o_p(1) \right) \pi\left( \hat{\boldsymbol{\beta}}_h(\tau) + \boldsymbol{t}/\sqrt{n} \right).
    \end{align*}
    
    Now we consider the limit as $n \to \infty$ for any fixed $\boldsymbol{t}$.
    First, regarding the prior term, since $\hat{\boldsymbol{\beta}}_h(\tau) \xrightarrow{P} \boldsymbol{\beta}^*_h(\tau)$ and the prior density $\pi(\cdot)$ is continuous and positive at $\boldsymbol{\beta}^*_h(\tau)$, we have:
    \begin{align*}
        \pi\left( \hat{\boldsymbol{\beta}}_h(\tau) + \boldsymbol{t}/\sqrt{n} \right) \xrightarrow{P} \pi(\boldsymbol{\beta}^*_h(\tau)) > 0.
    \end{align*}
    This implies that the prior becomes asymptotically constant over the local $\sqrt{n}$-neighborhood.
    Second, the exponential term converges to the kernel of a multivariate normal distribution:
    \begin{align*}
        \exp\left( -\frac{1}{2} \boldsymbol{t}^\top (\theta \mathcal{H}(\boldsymbol{\beta}^*_h(\tau))) \boldsymbol{t} + o_p(1) \right) \xrightarrow{d} \exp\left( -\frac{1}{2} \boldsymbol{t}^\top \left[ \theta \mathcal{H}(\boldsymbol{\beta}^*_h(\tau)) \right] \boldsymbol{t} \right).
    \end{align*}
    We recognize the term in the square brackets, $\theta \mathcal{H}(\boldsymbol{\beta}^*_h(\tau))$, as the precision matrix (inverse covariance matrix). 
     
    Combining this likelihood convergence with the fact that the prior becomes asymptotically constant, the posterior density converges pointwise to the multivariate normal density $\phi\left(\boldsymbol{t}; \boldsymbol{0}, (\theta \mathcal{H}(\boldsymbol{\beta}^*_h(\tau)))^{-1}\right)$. To extend this pointwise convergence to the total variation distance convergence, we invoke standard Bayesian asymptotic arguments (e.g., Theorem 2.1 in \shortciteNP{Ghosal2000}). This theorem ensures that the posterior mass outside any compact neighborhood of $\hat{\boldsymbol{\beta}}_h(\tau)$ vanishes exponentially fast, thereby preventing the integral of the absolute difference from diverging. Consequently, by Scheff\'{e}'s theorem, the pointwise convergence implies convergence in $L_1$ norm. The total variation distance converges to zero in probability:
    \begin{align*}
        \int_{\mathbb{R}^d} \left| \pi^*(\boldsymbol{t} \mid \boldsymbol{y}) - \phi\left(\boldsymbol{t}; \boldsymbol{0}, (\theta \mathcal{H}(\boldsymbol{\beta}^*_h(\tau)))^{-1}\right) \right| \mathrm{d} \boldsymbol{t} 
        &= \left\| \pi^*(\boldsymbol{t} \mid \boldsymbol{y}) - \phi\left(\boldsymbol{t}; \boldsymbol{0}, (\theta \mathcal{H}(\boldsymbol{\beta}^*_h(\tau)))^{-1}\right) \right\|_{TV} \\
        &= \left\| \pi(\boldsymbol{\vartheta} \mid \boldsymbol{y}, \boldsymbol{\mathcal{X}}, \theta) - \phi\left(\boldsymbol{\vartheta}; \boldsymbol{0}, \big(\theta \mathcal{H}(\boldsymbol{\beta}^*_h(\tau))\big)^{-1}\right) \right\|_{TV} \\
        &\xrightarrow{P} 0.
    \end{align*}
    This completes the proof.
\end{proof}

\phantomsection
\begin{proof}[Proof of Corollary~\ref{cor:validity_true}]\label{pf:cor_validity_true}
    We examine the asymptotic behavior of the quadratic form defining the credible set, evaluated at the true parameter $\boldsymbol{\beta}_0(\tau)$:
    \[
    Q_n(\boldsymbol{\beta}_0(\tau)) \coloneqq n(\boldsymbol{\beta}_0(\tau) - \hat{\boldsymbol{\beta}}_h(\tau))^\top (\hat{\theta} \mathcal{H}(\boldsymbol{\beta}^*_h(\tau))) (\boldsymbol{\beta}_0(\tau) - \hat{\boldsymbol{\beta}}_h(\tau)).
    \]
    To derive its limit, we first establish the asymptotic representation of the smoothed estimator $\hat{\boldsymbol{\beta}}_h(\tau)$. Recall from Eq.~\eqref{sqr_obj} that $\hat{\boldsymbol{\beta}}_h(\tau)$ minimizes the empirical smoothed risk $\widehat{R}_h(\boldsymbol{\beta})$. Thus, it satisfies the first-order condition $\nabla_{\boldsymbol{\beta}} \widehat{R}_h(\hat{\boldsymbol{\beta}}_h(\tau)) = \boldsymbol{0}$. A Taylor expansion of the gradient around the pseudo-truth $\boldsymbol{\beta}^*_h(\tau)$ yields:
    \[
    \boldsymbol{0} = \nabla_{\boldsymbol{\beta}} \widehat{R}_h(\hat{\boldsymbol{\beta}}_h(\tau)) = \nabla_{\boldsymbol{\beta}} \widehat{R}_h(\boldsymbol{\beta}^*_h(\tau)) + \nabla^2_{\boldsymbol{\beta}} \widehat{R}_h(\bar{\boldsymbol{\beta}}(\tau)) (\hat{\boldsymbol{\beta}}_h(\tau) - \boldsymbol{\beta}^*_h(\tau)),
    \]
    where $\bar{\boldsymbol{\beta}}(\tau)$ lies on the line segment connecting $\hat{\boldsymbol{\beta}}_h(\tau)$ and $\boldsymbol{\beta}^*_h(\tau)$. Note that the Hessian of the objective function, $\nabla^2_{\boldsymbol{\beta}} \widehat{R}_h(\bar{\boldsymbol{\beta}}(\tau))$, corresponds to the sample Hessian $\widehat{\boldsymbol{H}}_n(\bar{\boldsymbol{\beta}}(\tau))$ defined in Footnote~\ref{eq:sample_Hessian}. Inverting this Hessian, we obtain the stochastic component:
    \[
    \sqrt{n}(\hat{\boldsymbol{\beta}}_h(\tau) - \boldsymbol{\beta}^*_h(\tau)) = - \left[ \widehat{\boldsymbol{H}}_n(\bar{\boldsymbol{\beta}}(\tau)) \right]^{-1} \left( \sqrt{n} \nabla_{\boldsymbol{\beta}} \widehat{R}_h(\boldsymbol{\beta}^*_h(\tau)) \right) \eqqcolon \boldsymbol{\xi}_n.
    \]
    Since $\hat{\boldsymbol{\beta}}_h(\tau) \xrightarrow{P} \boldsymbol{\beta}^*_h(\tau)$ and $\bar{\boldsymbol{\beta}}(\tau)$ lies on the line segment connecting them, it follows that $\bar{\boldsymbol{\beta}}(\tau) \xrightarrow{P} \boldsymbol{\beta}^*_h(\tau)$. By \textbf{Assumption (A7)} (uniform convergence), the sample Hessian converges in probability: $\widehat{\boldsymbol{H}}_n(\bar{\boldsymbol{\beta}}(\tau)) \xrightarrow{P} \mathcal{H}(\boldsymbol{\beta}^*_h(\tau))$.
    Regarding the gradient term, the central limit theorem applied to the sum of i.i.d. zero-mean score vectors, which possess finite second moments by the boundedness of covariates in \textbf{Assumption (A3)}, implies $\sqrt{n} \nabla_{\boldsymbol{\beta}} \widehat{R}_h(\boldsymbol{\beta}^*_h(\tau)) \xrightarrow{d} \mathcal{N}(\boldsymbol{0}, \mathcal{J}(\boldsymbol{\beta}^*_h(\tau)))$.
    Therefore, by Slutsky's theorem, the product $\boldsymbol{\xi}_n$ converges in distribution to $\mathcal{N}(\boldsymbol{0}, \boldsymbol{\Sigma}_{\mathrm{sand}})$, where the sandwich covariance is defined as $\boldsymbol{\Sigma}_{\mathrm{sand}} \coloneqq \mathcal{H}(\boldsymbol{\beta}^*_h(\tau))^{-1}\mathcal{J}(\boldsymbol{\beta}^*_h(\tau))\mathcal{H}(\boldsymbol{\beta}^*_h(\tau))^{-1}$.
    
    Next, we address the deterministic bias term $\boldsymbol{\beta}^*_h(\tau) - \boldsymbol{\beta}_0(\tau)$. As established in \shortciteA{Horowitz1998a} using the second-order properties of the kernel $K(\cdot)$, the smoothing bias satisfies $\|\boldsymbol{\beta}^*_h(\tau) - \boldsymbol{\beta}_0(\tau)\| = O(h^2)$. Combining these two parts, we arrive at the bias-variance decomposition:
    \begin{align*}
        \sqrt{n}(\hat{\boldsymbol{\beta}}_h(\tau) - \boldsymbol{\beta}_0(\tau)) &= \sqrt{n}(\hat{\boldsymbol{\beta}}_h(\tau) - \boldsymbol{\beta}^*_h(\tau)) + \sqrt{n}(\boldsymbol{\beta}^*_h(\tau) - \boldsymbol{\beta}_0(\tau)) \\
        &= \underbrace{\boldsymbol{\xi}_n}_{\text{Stochastic Term}} + \underbrace{\sqrt{n} O(h^2)}_{\text{Smoothing Bias}}.
    \end{align*}
    The undersmoothing condition $n h^4 \to 0$ ensures that the bias term vanishes asymptotically ($\sqrt{n} O(h^2) \to 0$). Substituting this decomposition into $Q_n(\boldsymbol{\beta}_0(\tau))$ and noting that the cross-terms vanish asymptotically, we obtain the dominant stochastic component:
    \[
    Q_n(\boldsymbol{\beta}_0(\tau)) = \boldsymbol{\xi}_n^\top (\hat{\theta} \mathcal{H}(\boldsymbol{\beta}^*_h(\tau))) \boldsymbol{\xi}_n + o_p(1).
    \]
        
    We now derive the asymptotic distribution of the quadratic form $Q_n(\boldsymbol{\beta}_0(\tau))$. Under the calibration condition specified in Eq.~\eqref{eq:calibration_condition}, the scale estimator satisfies $\hat{\theta} \xrightarrow{P} \theta^*$. Consequently, by the continuous mapping theorem, the weighting matrix in $Q_n(\boldsymbol{\beta}_0(\tau))$ converges in probability to:
    \[
    \hat{\theta} \mathcal{H}(\boldsymbol{\beta}^*_h(\tau)) \xrightarrow{P} \theta^* \mathcal{H}(\boldsymbol{\beta}^*_h(\tau)).
    \]
    Invoking Slutsky's theorem again, $Q_n(\boldsymbol{\beta}_0(\tau))$ converges in distribution to a random variable $W$:
    \[
    Q_n(\boldsymbol{\beta}_0(\tau)) \xrightarrow{d} W \coloneqq \boldsymbol{\xi}^\top (\theta^* \mathcal{H}(\boldsymbol{\beta}^*_h(\tau))) \boldsymbol{\xi},
    \]
    where $\boldsymbol{\xi} \sim \mathcal{N}(\boldsymbol{0}, \boldsymbol{\Sigma}_{\mathrm{sand}})$. The limiting variable $W$ can be represented as a weighted sum of independent chi-squared variables:
    \[
    W = \sum_{j=1}^d \lambda_j Z_j^2, \quad Z_j \stackrel{i.i.d.}{\sim} \mathcal{N}(0,1),
    \]
    where $\lambda_1, \dots, \lambda_d$ are the deterministic eigenvalues of the matrix product $\boldsymbol{\Lambda}^* \coloneqq (\theta^* \mathcal{H}(\boldsymbol{\beta}^*_h(\tau))) \boldsymbol{\Sigma}_{\mathrm{sand}}$.
    
    While strict convergence to a standard $\chi^2_d$ distribution would require $\lambda_j = 1$ for all $j$ (implying $(\theta^* \mathcal{H}(\boldsymbol{\beta}^*_h(\tau)))^{-1} = \boldsymbol{\Sigma}_{\mathrm{sand}}$), such matrix equality generally fails under model misspecification. However, our proposed calibration imposes a strict constraint on the first moment of the limiting distribution $W$. The expectation of $W$ is strictly determined by the trace of the product matrix $\boldsymbol{\Lambda}^*$. Exploiting the cyclic invariance of the trace operator, we explicitly derive:
    \begin{align*}
    \mathbb{E}[W] &= \sum_{j=1}^d \lambda_j = \mathrm{tr}(\boldsymbol{\Lambda}^*) \\
    &= \mathrm{tr}\left( (\theta^* \mathcal{H}(\boldsymbol{\beta}^*_h(\tau))) \left[ \mathcal{H}(\boldsymbol{\beta}^*_h(\tau))^{-1} \mathcal{J}(\boldsymbol{\beta}^*_h(\tau)) \mathcal{H}(\boldsymbol{\beta}^*_h(\tau))^{-1} \right] \right) \\
    &= \theta^* \, \mathrm{tr}\left( \mathcal{J}(\boldsymbol{\beta}^*_h(\tau)) \mathcal{H}(\boldsymbol{\beta}^*_h(\tau))^{-1} \right).
    \end{align*}
    Substituting the definition of the target parameter $\theta^* = d / \mathrm{tr}(\mathcal{J}(\boldsymbol{\beta}^*_h(\tau))\mathcal{H}(\boldsymbol{\beta}^*_h(\tau))^{-1})$ from Eq.~\eqref{eq:calibration_condition}, we obtain:
    \[
    \mathbb{E}[W] = \left( \frac{d}{\mathrm{tr}(\mathcal{J}(\boldsymbol{\beta}^*_h(\tau))\mathcal{H}(\boldsymbol{\beta}^*_h(\tau))^{-1})} \right) \mathrm{tr}\left( \mathcal{J}(\boldsymbol{\beta}^*_h(\tau))\mathcal{H}(\boldsymbol{\beta}^*_h(\tau))^{-1} \right) = d.
    \]
    Consequently, although the limiting random variable $W$ is a weighted sum of chi-squares, our calibration ensures its expectation matches that of a standard $\chi^2_d$ distribution exactly ($\mathbb{E}[W] = d$). While the Satterthwaite approximation \shortcite{Satterthwaite1946} typically matches both first and second moments to a scaled chi-square distribution with fractional degrees of freedom, our strategy follows \shortciteA{Mueller2013} by enforcing the generalized information equality (first-moment matching) to approximate $W$ with a standard $\chi^2_d$ reference distribution.

    To characterize the approximation error, we consider the Edgeworth-type expansion of the CDF of $W$ around the reference $\chi^2_d$ distribution (see, e.g., \shortciteNP{Rothenberg1984}). Let $G_d(x)$ denote the CDF of the standard $\chi^2_d$ distribution, and let $q_{1-\alpha} = \chi^2_{d, 1-\alpha}$ denote the critical value. Since our calibration ensures that the first-moment mismatch is zero (i.e., $\mathbb{E}[W] = d$), the leading-order error term involving the first derivative $G_d'(q_{1-\alpha})$ vanishes exactly. The asymptotic coverage probability is thus dominated by the second-moment mismatch:
    \begin{align*}
    \lim_{n \to \infty} \mathbb{P}\left( \boldsymbol{\beta}_0(\tau) \in C_{1-\alpha} \right) &= F_W(q_{1-\alpha}) \\
    &= G_d(q_{1-\alpha}) - \underbrace{(\mathbb{E}[W] - d)}_{=0} G_d'(q_{1-\alpha}) + O\left( |\text{Var}(W) - 2d| \right) \\
    &= (1-\alpha) + O\left( |\text{Var}(W) - 2d| \right).
    \end{align*}
    The error term $O(|\text{Var}(W) - 2d|)$ reflects the residual variance mismatch inherent in scalar calibration for multidimensional parameters ($d>1$). However, as established by \shortciteA{Mueller2013}, this trace-based calibration minimizes the Kullback-Leibler divergence between the true and approximated distributions within the scalar family. Therefore, under the generalized Wilks' phenomenon, this moment-matched limit ensures that the Bayesian credible intervals possess valid frequentist coverage to the first order. This completes the proof.
\end{proof}

\subsection{Proofs for posterior propriety}
\label{app:proofs_propriety}

\phantomsection
\begin{proof}[Proof of Theorem~\ref{thm:propriety1}]
\label{pf:thm_propriety1}
\textit{Proof of Part (i).} Let $I_1 = \int_{\mathbb{R}^d} L(\boldsymbol{y} \mid \boldsymbol{\mathcal{X}}, \boldsymbol{\beta}, \theta; \tau, h) \pi(\boldsymbol{\beta}) \,\mathrm{d}\boldsymbol{\beta}$. From Eq.~\eqref{eq:sqr_likelihood}, we can express $I_1$ as:
\begin{align*}
    I_1 \propto (Z(\theta, \tau, h))^{-n} \int_{\mathbb{R}^d} \exp\left( -\theta S(\boldsymbol{\beta}; \tau, h) \right) \,\mathrm{d}\boldsymbol{\beta}.
\end{align*}
Since $L_h(e; \tau) = (\rho_{\tau} * K_h)(e) \ge 0$ (due to the non-negativity of $\rho_\tau$ and $K_h$), it follows that $S(\boldsymbol{\beta}; \tau, h) \ge 0$. The integrand $\exp(-\theta S(\boldsymbol{\beta};\tau,h))$ is continuous and strictly positive, and $(Z(\theta, \tau, h))^{-n} > 0$, which implies $I_1 > 0$.

To establish the finiteness of $I_1$, we begin by deriving a lower bound for $L_h(e; \tau)$. The check function $\rho_{\tau}(u)$ satisfies $\rho_{\tau}(u) \ge \min(\tau, 1-\tau)|u|$. Letting $c_{\tau} = \min(\tau, 1-\tau) > 0$, we obtain the following lower bound for the smoothed loss:
\begin{align*}
    L_h(e; \tau) &= \int_{-\infty}^{\infty} \rho_{\tau}(e-v) K_h(v) \,\mathrm{d} v \ge c_{\tau} \int_{-\infty}^{\infty} |e-v| K_h(v) \,\mathrm{d} v \ge c_{\tau}\int_{-\infty}^{\infty} (|e| - |v|) K_h(v) \,\mathrm{d} v \\
    &= c_{\tau}|e| \int_{-\infty}^{\infty} K_h(v) \,\mathrm{d} v - c_{\tau}\int_{-\infty}^{\infty} |v| K_h(v) \,\mathrm{d} v = c_{\tau}|e| - c_{\tau}\int_{-\infty}^{\infty} |v| \frac{1}{h} K\left(\frac{v}{h}\right) \,\mathrm{d} v \\
    &= c_{\tau}(|e| - h M_K)
\end{align*}
by applying the reverse triangle inequality $|e-v| \ge |e| - |v|$ and the substitution $u = v/h$, where $M_K = \int_{-\infty}^{\infty} |u| K(u) \,\mathrm{d} u$ is finite by assumption. More precisely, there exist constants $c_1 = c_{\tau} > 0$ and $C_1 = c_{\tau} h M_K \ge 0$ such that $L_h(e; \tau) \ge c_1 |e| - C_1$ for all $e$.

Next, we derive a lower bound for $S(\boldsymbol{\beta}; \tau, h)$. Summing over all observations gives
\begin{align*}
S(\boldsymbol{\beta}; \tau, h) &\ge \sum_{i=1}^n (c_1 |y_i - \boldsymbol{x}_i^\top\boldsymbol{\beta}| - C_1) = c_1 \|\boldsymbol{y} - \boldsymbol{\mathcal{X}}\boldsymbol{\beta}\|_1 - n C_1.
\end{align*}
By norm equivalence, there exists $c_2 > 0$ such that $\|\boldsymbol{z}\|_1 \ge c_2 \|\boldsymbol{z}\|_2$, which implies $S(\boldsymbol{\beta}; \tau, h) \ge c_1 c_2 \|\boldsymbol{y} - \boldsymbol{\mathcal{X}}\boldsymbol{\beta}\|_2 - n C_1$. Using the reverse triangle inequality, we have $\|\boldsymbol{y} - \boldsymbol{\mathcal{X}}\boldsymbol{\beta}\|_2 \ge \|\boldsymbol{\mathcal{X}}\boldsymbol{\beta}\|_2 - \|\boldsymbol{y}\|_2$. Since the design matrix $\boldsymbol{\mathcal{X}}$ has full column rank $d$, the mapping $\boldsymbol{\beta} \mapsto \boldsymbol{\mathcal{X}}\boldsymbol{\beta}$ is injective. This implies that $\|\boldsymbol{\mathcal{X}}\boldsymbol{\beta}\|_2$ defines a norm on $\mathbb{R}^d$. By norm equivalence on $\mathbb{R}^d$, there exists $c_3 > 0$ such that $\|\boldsymbol{\mathcal{X}}\boldsymbol{\beta}\|_2 \ge c_3 \|\boldsymbol{\beta}\|_2$.
Combining these, we obtain for sufficiently large $\|\boldsymbol{\beta}\|_2$:
\begin{align*}
S(\boldsymbol{\beta}; \tau, h) &\ge c_1 c_2 (c_3 \|\boldsymbol{\beta}\|_2 - \|\boldsymbol{y}\|_2) - n C_1 = (c_1 c_2 c_3) \|\boldsymbol{\beta}\|_2 - (c_1 c_2 \|\boldsymbol{y}\|_2 + n C_1).
\end{align*}

Letting $c_L = c_1 c_2 c_3 > 0$ and $C_S = c_1 c_2 \|\boldsymbol{y}\|_2 + n C_1 \ge 0$, we have the linear growth condition:
\begin{align*}
S(\boldsymbol{\beta}; \tau, h) \ge c_L \|\boldsymbol{\beta}\|_2 - C_S.
\end{align*}
This growth rate ensures the exponential decay of the integrand:
\begin{align*}
\exp(-\theta S(\boldsymbol{\beta}; \tau, h)) &\le \exp(-\theta(c_4 \|\boldsymbol{\beta}\|_2 - C_2)) = C e^{-A \|\boldsymbol{\beta}\|_2},
\end{align*}
where $A = \theta c_4 > 0$ and $C = e^{\theta C_2} > 0$.

To evaluate the integral $\int_{\mathbb{R}^d} e^{-A \|\boldsymbol{\beta}\|_2} \mathrm{d}\boldsymbol{\beta}$, we employ $d$-dimensional hyperspherical coordinates. Letting $r = \|\boldsymbol{\beta}\|_2$, the volume element becomes $\mathrm{d}\boldsymbol{\beta} = r^{d-1} \mathrm{d}r \mathrm{d}\Omega_{d-1}$, where $\mathrm{d}\Omega_{d-1}$ is the surface element on the unit $(d-1)$-sphere. Thus,
\begin{align*}
\int_{\mathbb{R}^d} e^{-A \|\boldsymbol{\beta}\|_2} \mathrm{d}\boldsymbol{\beta} &= \int_{\Omega_{d-1}} \int_0^\infty e^{-Ar} r^{d-1}\, \mathrm{d}r \, \mathrm{d}\Omega_{d-1} = \left( \int_{\Omega_{d-1}} \mathrm{d}\Omega_{d-1} \right) \left( \int_0^\infty e^{-Ar} r^{d-1} \mathrm{d}r \right) \\
&= S_{d-1} \int_0^\infty e^{-Ar} r^{d-1} \mathrm{d}r,
\end{align*}
where $S_{d-1} = 2\pi^{d/2}/\Gamma(d/2)$ is the surface area of the unit $(d-1)$-sphere. The substitution $t = Ar$ yields
\begin{align*}
\int_0^\infty e^{-Ar} r^{d-1}\, \mathrm{d}r &= \int_0^\infty e^{-t} \left(\frac{t}{A}\right)^{d-1} \frac{1}{A} \,\mathrm{d}t = \frac{1}{A^d} \int_0^\infty t^{d-1} e^{-t} \,\mathrm{d}t = \frac{\Gamma(d)}{A^d} = \frac{(d-1)!}{A^d},
\end{align*}
since $d$ is a positive integer. Therefore, the integral of the non-negative function $\exp(-\theta S(\boldsymbol{\beta}; \tau, h))$ is
\begin{align*}
\int_{\mathbb{R}^d} \exp(-\theta S(\boldsymbol{\beta}; \tau, h)) \,\mathrm{d}\boldsymbol{\beta} \le \int_{\mathbb{R}^d} C e^{-A \|\boldsymbol{\beta}\|_2} \,\mathrm{d}\boldsymbol{\beta} = C \int_{\mathbb{R}^d} e^{-A \|\boldsymbol{\beta}\|_2} \,\mathrm{d}\boldsymbol{\beta} = C S_{d-1} \frac{(d-1)!}{A^d} < \infty.
\end{align*}
Since $\exp(-\theta S(\boldsymbol{\beta}; \tau, h))$ is non-negative and dominated by the integrable function $C e^{-A \|\boldsymbol{\beta}\|_2}$, the comparison test guarantees the convergence of $I_1$. Combined with the earlier result $I_1 > 0$, we conclude that $0 < I_1 < \infty$, establishing the propriety of the posterior distribution $\pi(\boldsymbol{\beta}\mid \boldsymbol{y}, \boldsymbol{\mathcal{X}},\theta)$.

\textit{Proof of Part (ii).} The joint posterior distribution satisfies
\begin{align*}
    \pi(\boldsymbol{\beta}, \theta \mid \boldsymbol{y}, \boldsymbol{\mathcal{X}}) \propto \big(Z(\theta, \tau, h)\big)^{-n} \exp\left(-\theta S(\boldsymbol{\beta}; \tau, h)\right)\pi(\theta),
\end{align*}
with propriety requiring the normalization integral 
\begin{align}\label{eq:J_1}
    J_1 \coloneqq \int_0^\infty \left(Z(\theta, \tau, h)\right)^{-n} I_{\boldsymbol{\beta}}(\theta)\pi(\theta) \,\mathrm{d} \theta
\end{align}
to be finite and positive, where $I_{\boldsymbol{\beta}}(\theta) \coloneqq \int_{\mathbb{R}^d} \exp\left(-\theta S(\boldsymbol{\beta}; \tau, h)\right) \,\mathrm{d} \boldsymbol{\beta}$. From the proof of Theorem~\ref{thm:propriety1}(i), we know that there exist constants $c_L > 0$ and $C_S \geq 0$ such that $S(\boldsymbol{\beta}; \tau, h) \geq c_L \|\boldsymbol{\beta}\|_2 - C_S$. This yields the bound
\begin{align*}
    \exp\big(-\theta S(\boldsymbol{\beta}; \tau, h)\big) 
    &\leq \exp\big(-\theta(c_L \|\boldsymbol{\beta}\|_2 - C_S)\big) = e^{\theta C_S} \exp\big(-\theta c_L \|\boldsymbol{\beta}\|_2\big).
\end{align*}
Hence $I_{\boldsymbol{\beta}}(\theta)$ is dominated by
\begin{align}\label{I_beta}
    I_{\boldsymbol{\beta}}(\theta) \leq e^{\theta C_S} \int_{\mathbb{R}^d} \exp\big(-\theta c_L \|\boldsymbol{\beta}\|_2\big) \mathrm{d}\boldsymbol{\beta}.
\end{align}
The radial integral evaluates via hyperspherical coordinates to $\int_{\mathbb{R}^d} \exp(-\alpha \|\boldsymbol{x}\|_2) \mathrm{d}\boldsymbol{x} = \frac{2\pi^{d/2}}{\Gamma(d/2)} \Gamma(d) \alpha^{-d}$ for $\alpha > 0$. Substituting $\alpha = \theta c_L$ we obtain the upper bound for Eq.~\eqref{I_beta}:
\begin{align*}
    I_{\boldsymbol{\beta}}(\theta) \leq K_1 e^{\theta C_S} \theta^{-d}, \quad K_1 = \frac{2\pi^{d/2}\Gamma(d)}{\Gamma(d/2) c_L^d} > 0.
\end{align*}
Thus, substitution into \eqref{eq:J_1} yields the sufficient condition \eqref{eq:integral_condition_theta_prior}:
\begin{align*}
  J_1 \leq K_1 \int_0^\infty \frac{\pi(\theta) e^{\theta C_S}}{\big(Z(\theta, \tau, h)\big)^n \theta^d} \mathrm{d}\theta < \infty.
\end{align*}

Now, consider the specific case where the prior for $\theta$ is a Gamma distribution, $\pi(\theta) = \frac{b^a}{\Gamma(a)}\theta^{a-1}e^{-b\theta}$, for $a>0, b>0$. To establish the propriety of the joint posterior, it is sufficient to show that the integral of its upper bound, as derived from the bound on $I_{\boldsymbol{\beta}}(\theta)$, is finite. Let's analyze the integrand of this bounding integral:
\begin{align*}
    \frac{\pi(\theta) e^{\theta C_S}}{\big(Z(\theta, \tau, h)\big)^n \theta^d} &= \frac{b^a}{\Gamma(a)}\theta^{a-1}e^{-b\theta} \cdot \frac{e^{\theta C_S}}{\big(Z(\theta, \tau, h)\big)^n \theta^d} \\
    &= \frac{b^a}{\Gamma(a)} \left(Z(\theta, \tau, h)\right)^{-n} \theta^{a-d-1} e^{-(b - C_S)\theta}.
\end{align*}
The convergence of the integral of this expression over $(0, \infty)$ depends on its asymptotic behavior as $\theta \to \infty$ and $\theta \to 0$:
\begin{enumerate}
\item \textbf{Behavior as $\theta \to \infty$:}
To establish the convergence condition for large $\theta$, we first examine the asymptotic behavior of $(Z(\theta, \tau, h))^{-n}$. This entails applying Laplace's method to $Z(\theta, \tau, h)$, which exploits the fact that, for large $\theta$, the integrand $e^{-\theta L_h(u;\tau)}$ concentrates sharply around the global minimum of $L_h(u;\tau)$. By assumption in Theorem~\ref{thm:propriety1}(ii), this minimum is $L_{\min} > 0$ and achieved at a unique point $u_{\min}$. The Taylor expansion of $L_h(u;\tau)$ around $u_{\min}$ yields
\[
L_h(u;\tau) = L_{\min} + \frac{L_h''(u_{\min};\tau)}{2}(u - u_{\min})^2 + O\big((u - u_{\min})^3\big).
\]
The leading asymptotic contribution to $Z(\theta, \tau, h)$ arises from integrating the exponential of the constant and quadratic terms:
{\small
\begin{align*}
\int_{-\infty}^\infty \exp\left(-\theta \left[ L_{\min} + \frac{L_h''(u_{\min};\tau)}{2}(u - u_{\min})^2 \right]\right) \,\mathrm{d}u
&= e^{-\theta L_{\min}} \int_{-\infty}^\infty \exp\left(-\frac{\theta L_h''(u_{\min};\tau)}{2}(u - u_{\min})^2\right) \,\mathrm{d}u.
\end{align*}
}
The integral evaluates to the Gaussian form $\sqrt{2\pi / (\theta L_h''(u_{\min};\tau))}$. Laplace's method formalizes that the full integral $Z(\theta, \tau, h)$ equals this leading term times a factor approaching 1, incorporating higher-order corrections:
\begin{align*}
	Z(\theta, \tau, h) = \sqrt{\frac{2\pi}{L_h''(u_{\min};\tau)}} \, \theta^{-1/2} e^{-\theta L_{\min}} \left(1 + o(1)\right).
\end{align*}
Consequently,
\begin{align*}
(Z(\theta, \tau, h))^{-n} &= \left(\sqrt{\frac{2\pi}{L_h''(u_{\min};\tau)}} \, \theta^{-1/2} e^{-\theta L_{\min}} (1 + o(1)) \right)^{-n} \\
&= C\cdot \theta^{n/2} e^{n L_{\min} \theta} \left(1 + o(1)\right),
\end{align*}
where the constant $C = \left(L_h''(u_{\min};\tau) / (2\pi)\right)^{n/2} > 0$ absorbs the prefactor, and $(1 + o(1))^{-n} = 1 + o(1)$ for fixed $n$. Disregarding the constant $\frac{b^a}{\Gamma(a)}$, the relevant expression becomes
\begin{align*}
(Z(\theta, \tau, h))^{-n} \theta^{a - d - 1} e^{-(b - C_S) \theta} &= C \cdot \theta^{n/2} e^{n L_{\min} \theta} (1 + o(1)) \cdot \theta^{a - d - 1} \cdot e^{-(b - C_S) \theta} \\
&= C \cdot \theta^{a + n/2 - d - 1} e^{-\left(b - C_S - n L_{\min}\right) \theta} (1 + o(1)).
\end{align*}
For the integral over $\theta$ to converge as the upper limit tends to infinity, the integrand must decay. The exponential $e^{-\left(b - C_S - n L_{\min}\right) \theta}$ dominates the polynomial terms for large $\theta$, necessitating a positive coefficient for exponential decay:
\begin{align*}
	b - C_S - n L_{\min} > 0 \quad \implies \quad b > C_S + n L_{\min}.
\end{align*}

    \item \textbf{Behavior as $\theta \to 0$:} Near zero, we use the assumption that $(Z(\theta, \tau, h))^{-n} = O(\theta^{k_Z})$ for some constant $k_Z$. The integrand's behavior is therefore dominated by the polynomial term:
    $$ O(\theta^{k_Z}) \cdot \theta^{a-d-1} e^{-(b-C_S)\theta} \sim O(\theta^{a+k_Z-d-1}). $$
    For an integral of the form $\int_0^\epsilon \theta^p \,\mathrm{d}\theta$ to converge, the exponent must satisfy $p > -1$. Applying this to our case, we require:
    $$ a + k_Z - d - 1 > -1 \implies a + k_Z > d. $$
\end{enumerate}
Both conditions must hold simultaneously to guarantee the convergence of the bounding integral. Therefore, for a $\text{Gamma}(a, b)$ prior on $\theta$, the joint posterior distribution $\pi(\boldsymbol{\beta}, \theta \mid \boldsymbol{y}, \boldsymbol{\mathcal{X}})$ is proper if $b > C_S + n L_{\min}$ and $a + k_Z > d$.
\end{proof}

\phantomsection
\begin{proof}[Proof of Theorem~\ref{thm:propriety2}]\label{pf:thm_propriety2}
\textit{Proof of Part (i).} The integral of interest, ignoring proportionality constants from the likelihood not involving $\boldsymbol{\beta}$, is
\begin{align*}
  I_{2} \coloneqq \int_{\mathbb{R}^d} \exp\left( -\theta S(\boldsymbol{\beta}; \tau, h) \right) \exp\left( -\frac{\|\boldsymbol{\beta}\|_2^2}{2\sigma_{\boldsymbol{\beta}}^2} \right) \,\mathrm{d}\boldsymbol{\beta}.
\end{align*}
Let $g(\boldsymbol{\beta}) = \exp\left( -\theta S(\boldsymbol{\beta}; \tau, h) - \|\boldsymbol{\beta}\|_2^2 / (2\sigma_{\boldsymbol{\beta}}^2) \right)$.
Since $S(\boldsymbol{\beta}; \tau, h) \ge 0$ and $\theta > 0$, the term $\exp(-\theta S(\boldsymbol{\beta}; \tau, h))$ is strictly positive and bounded above by 1. The Gaussian prior term $\exp(-\|\boldsymbol{\beta}\|_2^2 / (2\sigma_{\boldsymbol{\beta}}^2))$ is strictly positive. Thus, $g(\boldsymbol{\beta})$ is strictly positive and continuous. Its integral over $\mathbb{R}^d$ must therefore be strictly positive, ensuring $I_{2} > 0$.

To establish finiteness, we use the comparison test. Since $-\theta S(\boldsymbol{\beta}; \tau, h) \le 0$, we have
$$ -\theta S(\boldsymbol{\beta}; \tau, h) - \frac{\|\boldsymbol{\beta}\|_2^2}{2\sigma_{\boldsymbol{\beta}}^2} \le - \frac{\|\boldsymbol{\beta}\|_2^2}{2\sigma_{\boldsymbol{\beta}}^2}. $$
Exponentiating gives $g(\boldsymbol{\beta}) \le \exp\left( - \frac{\|\boldsymbol{\beta}\|_2^2}{2\sigma_{\boldsymbol{\beta}}^2} \right)$.
The integral of this upper bound is $\int_{\mathbb{R}^d} \exp\left( - \frac{\|\boldsymbol{\beta}\|_2^2}{2\sigma_{\boldsymbol{\beta}}^2} \right) \,\mathrm{d}\boldsymbol{\beta} = (2\pi\sigma_{\boldsymbol{\beta}}^2)^{d/2}$, which is finite. By comparison, $I_{2}$ is also finite. Thus, $0 < I_{2} < \infty$, and the posterior is proper.

\textit{Proof of Part (ii).}
The joint posterior is proportional to 
\begin{align*}
  \pi(\boldsymbol{\beta}, \theta \mid \boldsymbol{y}, \boldsymbol{\mathcal{X}}, \sigma_{\boldsymbol{\beta}}^2, a_{\theta}, b_{\theta}) \propto (Z(\theta, \tau, h))^{-n} \exp(-\theta S(\boldsymbol{\beta}; \tau, h)) \pi(\boldsymbol{\beta} \mid \sigma_{\boldsymbol{\beta}}^2) \pi(\theta \mid a_\theta, b_\theta).
\end{align*}
The joint posterior density is proportional to
\begin{align*}
  \pi(\boldsymbol{\beta}, \theta \mid \boldsymbol{y}, \boldsymbol{\mathcal{X}}, \sigma_{\boldsymbol{\beta}}^2) \propto L(\boldsymbol{y} \mid \boldsymbol{\mathcal{X}}, \boldsymbol{\beta}, \theta) \pi(\boldsymbol{\beta} \mid \sigma_{\boldsymbol{\beta}}^2) \pi(\theta).
\end{align*}
Substituting the likelihood form $L(\boldsymbol{y} \mid \boldsymbol{\mathcal{X}}, \boldsymbol{\beta}, \theta; \tau, h)$ as showed in Eq.~\eqref{eq:sqr_likelihood}, we have
\begin{align*}
  \pi(\boldsymbol{\beta}, \theta \mid \boldsymbol{y}, \boldsymbol{\mathcal{X}}, \sigma_{\boldsymbol{\beta}}^2) \propto (Z(\theta, \tau, h))^{-n} \exp(-\theta S(\boldsymbol{\beta}; \tau, h)) \pi(\boldsymbol{\beta} \mid \sigma_{\boldsymbol{\beta}}^2) \pi(\theta).
\end{align*}
To show propriety, we need to demonstrate that the integral of this joint posterior over $\boldsymbol{\beta}$ and $\theta$ is finite and positive:
\begin{align*}
  J_{2} \coloneqq \int_0^\infty \int_{\mathbb{R}^d} (Z(\theta, \tau, h))^{-n} \exp(-\theta S(\boldsymbol{\beta}; \tau, h)) \pi(\boldsymbol{\beta} \mid \sigma_{\boldsymbol{\beta}}^2) \pi(\theta) \,\mathrm{d}\boldsymbol{\beta} \,\mathrm{d}\theta.
\end{align*}
Since the integrand is non-negative, we can apply Tonelli's theorem to change the order of integration:
\begin{align*}
  J_{2} = \int_0^\infty (Z(\theta, \tau, h))^{-n} \pi(\theta) \left( \int_{\mathbb{R}^d} \exp(-\theta S(\boldsymbol{\beta}; \tau, h)) \pi(\boldsymbol{\beta} \mid \sigma_{\boldsymbol{\beta}}^2) \,\mathrm{d}\boldsymbol{\beta} \right) \,\mathrm{d} \theta.
\end{align*}
Since the smoothed loss function is non-negative ($S(\boldsymbol{\beta}; \tau, h) \ge 0$), the term $\exp(-\theta S(\boldsymbol{\beta}; \tau, h))$ is bounded by 1. Therefore, the inner integral over $\boldsymbol{\beta}$ is bounded by the integral of the prior density itself:
\begin{align*}
    \int_{\mathbb{R}^d} \exp(-\theta S(\boldsymbol{\beta}; \tau, h)) \pi(\boldsymbol{\beta} \mid \sigma_{\boldsymbol{\beta}}^2) \,\mathrm{d}\boldsymbol{\beta} \le \int_{\mathbb{R}^d} 1 \cdot \pi(\boldsymbol{\beta} \mid \sigma_{\boldsymbol{\beta}}^2) \,\mathrm{d}\boldsymbol{\beta} = 1.
\end{align*}
Substituting this upper bound into the expression for $J_{2}$, we obtain:
$$ J_{2} \le \int_0^\infty (Z(\theta, \tau, h))^{-n} \pi(\theta) \cdot 1 \,\mathrm{d}\theta. $$
Therefore, if the integral $\int_0^\infty (Z(\theta, \tau, h))^{-n} \pi(\theta) \,\mathrm{d}\theta$ converges to a finite positive value, then $J_{2}$ is finite. The strict positivity of $J_{2}$ is guaranteed because $(Z(\theta, \tau, h))^{-n} > 0$ and the priors are proper. For the specific instance where $\pi(\theta) \sim \text{Gamma}(a_\theta, b_\theta)$, we analyze the convergence of
$$ \int_0^\infty (Z(\theta, \tau, h))^{-n} \theta^{a_\theta-1} e^{-b_\theta\theta} \,\mathrm{d}\theta. $$
Using the asymptotic behavior derived in the proof of Theorem~\ref{thm:propriety1}, we know $(Z(\theta, \tau, h))^{-n}$ behaves like $\theta^{n/2} e^{n L_{\min} \theta}$ as $\theta \to \infty$. Thus, the integrand is proportional to $\theta^{a_\theta + n/2 - 1} e^{-(b_\theta - n L_{\min})\theta}$. For convergence at infinity, we require the exponential decay rate to be positive: $b_\theta > n L_{\min}$. Near zero, assuming $(Z(\theta, \tau, h))^{-n} = O(\theta^{k_Z})$, convergence requires $a_\theta + k_Z > 0$. Thus, the joint posterior is proper if $b_\theta > n L_{\min}$ and $a_\theta + k_Z > 0$.
\end{proof}

\phantomsection
\begin{proof}[Proof of Corollary~\ref{cor:1}]\label{pf:cor_1}
\textit{Proof of Part (i).} 
We aim to show the finiteness of the marginal posterior integral for a fixed $\theta$. Let this integral be denoted as $J_{3}$:
\begin{align*}
    J_{3} \coloneqq \int_0^\infty \int_{\mathbb{R}^d} L(\boldsymbol{y} \mid \boldsymbol{\mathcal{X}}, \boldsymbol{\beta}, \theta; \tau, h) \pi(\boldsymbol{\beta}\mid \sigma_{\boldsymbol{\beta}}^2) \pi(\sigma_{\boldsymbol{\beta}}^2\mid a_0, b_0) \,\mathrm{d}\boldsymbol{\beta} \,\mathrm{d}\sigma_{\boldsymbol{\beta}}^2.
\end{align*}
Recall the explicit form of the likelihood function:
\begin{align*}
    L(\boldsymbol{y} \mid \boldsymbol{\mathcal{X}}, \boldsymbol{\beta}, \theta; \tau, h) = (Z(\theta, \tau, h))^{-n} \exp\left(-\theta S(\boldsymbol{\beta}; \tau, h)\right).
\end{align*}
Since the smoothed loss function is non-negative, $S(\boldsymbol{\beta}; \tau, h) \ge 0$, which implies that the exponential term is bounded: $\exp(-\theta S(\boldsymbol{\beta}; \tau, h)) \le 1$.

Since the integrand is non-negative, by Tonelli's theorem, we can evaluate the integral with respect to $\boldsymbol{\beta}$ first. For any fixed variance parameter $\sigma_{\boldsymbol{\beta}}^2 > 0$:
\begin{align*}
    &\int_{\mathbb{R}^d} L(\boldsymbol{y} \mid \boldsymbol{\mathcal{X}}, \boldsymbol{\beta}, \theta; \tau, h) \pi(\boldsymbol{\beta}\mid \sigma_{\boldsymbol{\beta}}^2) \,\mathrm{d}\boldsymbol{\beta} \\
    ={}& \int_{\mathbb{R}^d} (Z(\theta, \tau, h))^{-n} \exp\left(-\theta S(\boldsymbol{\beta}; \tau, h)\right) \pi(\boldsymbol{\beta}\mid \sigma_{\boldsymbol{\beta}}^2) \,\mathrm{d}\boldsymbol{\beta} \\
    ={}& (Z(\theta, \tau, h))^{-n} \int_{\mathbb{R}^d} \exp\left(-\theta S(\boldsymbol{\beta}; \tau, h)\right) \pi(\boldsymbol{\beta}\mid \sigma_{\boldsymbol{\beta}}^2) \,\mathrm{d}\boldsymbol{\beta} \\
    \le{}& (Z(\theta, \tau, h))^{-n} \int_{\mathbb{R}^d} 1 \cdot \pi(\boldsymbol{\beta}\mid \sigma_{\boldsymbol{\beta}}^2) \,\mathrm{d}\boldsymbol{\beta} \\
    ={}& (Z(\theta, \tau, h))^{-n}.
\end{align*}
The last equality holds because the conditional Gaussian prior density $\pi(\boldsymbol{\beta}\mid \sigma_{\boldsymbol{\beta}}^2)$ is proper and integrates to 1 over $\mathbb{R}^d$. Now, we substitute this upper bound back into the expression for $J_3$ to integrate over $\sigma_{\boldsymbol{\beta}}^2$:
\begin{align*}
    J_{3} &\le \int_0^\infty (Z(\theta, \tau, h))^{-n} \pi(\sigma_{\boldsymbol{\beta}}^2\mid a_0, b_0) \,\mathrm{d}\sigma_{\boldsymbol{\beta}}^2 \\
    &= (Z(\theta, \tau, h))^{-n} \int_0^\infty \pi(\sigma_{\boldsymbol{\beta}}^2\mid a_0, b_0) \,\mathrm{d}\sigma_{\boldsymbol{\beta}}^2 \\
    &= (Z(\theta, \tau, h))^{-n} \cdot 1 < \infty.
\end{align*}
The integral over $\sigma_{\boldsymbol{\beta}}^2$ is exactly 1 because the Inverse-Gamma hyperprior density $\pi(\sigma_{\boldsymbol{\beta}}^2\mid a_0, b_0)$ is proper. Thus, the marginal posterior is proper for any fixed $\theta$.

\textit{Proof of Part (ii).} 
The propriety of the fully joint posterior requires the finiteness of the integral over all parameters $(\boldsymbol{\beta}, \sigma_{\boldsymbol{\beta}}^2, \theta)$. Let this total integral be $J_{total}$:
\begin{align*}
    J_{total} \coloneqq \int_0^\infty \left( \int_0^\infty \int_{\mathbb{R}^d} L(\boldsymbol{y} \mid \boldsymbol{\mathcal{X}}, \boldsymbol{\beta}, \theta; \tau, h) \pi(\boldsymbol{\beta}\mid \sigma_{\boldsymbol{\beta}}^2) \pi(\sigma_{\boldsymbol{\beta}}^2\mid a_0, b_0) \,\mathrm{d}\boldsymbol{\beta} \,\mathrm{d}\sigma_{\boldsymbol{\beta}}^2 \right) \pi(\theta) \,\mathrm{d}\theta.
\end{align*}
The term inside the parentheses corresponds exactly to $J_3$ from Part (i). Using the upper bound derived in Part (i), $J_3 \le (Z(\theta, \tau, h))^{-n}$, we have:
\begin{align*}
    J_{total} \le \int_0^\infty (Z(\theta, \tau, h))^{-n} \pi(\theta) \,\mathrm{d}\theta.
\end{align*}
Therefore, the joint posterior is proper if the integral $\int_0^\infty (Z(\theta, \tau, h))^{-n} \pi(\theta) \,\mathrm{d}\theta$ converges to a finite value. This condition relies on the asymptotic behavior of $(Z(\theta, \tau, h))^{-n}$. Recall from the proof of Theorem~\ref{thm:propriety1} that $(Z(\theta, \tau, h))^{-n} \propto \theta^{n/2} e^{n L_{\min} \theta}$ as $\theta \to \infty$ and behaves like $O(\theta^{k_Z})$ as $\theta \to 0$. In the current case with a proper prior for $\boldsymbol{\beta}$, the exponential growth from the integration over $\boldsymbol{\beta}$ (previously $e^{\theta C_S}$) is absent (effectively $C_S=0$). Furthermore, the singularity factor $\theta^{-d}$ associated with the improper prior is also absent. Thus, for the Gamma prior on $\theta$ to ensure convergence, we require $b_\theta > n L_{\min}$ and $a_\theta + k_Z > 0$.
\end{proof}

\subsection{Proofs for kernel selection theory}
\label{app:proofs_kernel}

\phantomsection
\begin{proof}[Proof of Proposition~\ref{prop:Z_properties}]\label{pf:prop1}
\textit{Proof of Part (i).} The derivative of $Z(\theta, \tau, h)$ with respect to $\theta$ is found by differentiating under the integral sign. This interchange of operations is justified by the Leibniz integral rule provided the partial derivative of the integrand is bounded in magnitude by an integrable function. For any $\theta$ in a compact interval $[a, b]$ with $0 < a < b < \infty$, we can establish such a bound:
$$ \left| \frac{\partial}{\partial \theta}\exp(-\theta L_h(\cdot; \tau)) \right| = L_h(\cdot; \tau)\exp(-\theta L_h(\cdot; \tau)) \le L_h(\cdot; \tau)\exp(-a L_h(\cdot; \tau)). $$
The dominating function on the right-hand side is integrable on $\mathbb{R}$ because, due to the exponential factor, it approaches zero as $|u| \to \infty$ at a rate faster than any inverse polynomial (e.g., faster than $1/u^2$). The conditions for the Leibniz rule are therefore met, and we can write:
$$ \frac{\partial Z(\theta, \tau, h)}{\partial \theta} = \int_{-\infty}^{\infty} -L_h(u; \tau) \exp\left( -\theta L_h(u; \tau) \right) \,\mathrm{d}u. $$
Since $\exp\left( -\theta L_h(u; \tau) \right) > 0$ and $L_h(u; \tau)$ is non-negative and not identically zero, the integrand is non-positive and strictly negative on a set of non-zero measure. Therefore, $\partial Z(\theta, \tau, h)/\partial \theta < 0$, establishing strict decrease.

\textit{Proof of Part (ii).} To show convexity of $\log Z(\theta, \tau, h)$, we examine its second derivative with respect to $\theta$. The first-order partial derivative of $\log Z(\theta, \tau, h)$ with respect to $\theta$ is:
$$ \frac{\partial \log Z(\theta, \tau, h)}{\partial \theta} = \frac{1}{Z(\theta, \tau, h)} \frac{\partial Z(\theta, \tau, h)}{\partial \theta} = \frac{\int_{-\infty}^{\infty} -L_h(u; \tau) e^{-\theta L_h(u; \tau)} \,\mathrm{d} u}{\int_{-\infty}^{\infty} e^{-\theta L_h(u; \tau)} \,\mathrm{d} u} = -\mathbb{E}_{p_u}[L_h(u; \tau)], $$
where $p_u(u \mid \theta) = \frac{\exp(-\theta L_h(u; \tau))}{Z(\theta, \tau, h)}$ can be interpreted as a probability density for $u$ parameterized by $\theta$. The same justification allows for a second differentiation under the integral sign, and we have
\begin{align*}
& \frac{\partial^2 \log Z(\theta, \tau, h)}{\partial \theta^2} \\
={}& \frac{\partial}{\partial \theta} \left( \frac{\int_{-\infty}^{\infty} -L_h(u; \tau) e^{-\theta L_h(u; \tau)} \,\mathrm{d} u}{Z(\theta, \tau, h)} \right) \\
={}& \frac{(\int_{-\infty}^{\infty} L_h^2(u; \tau) e^{-\theta L_h(u; \tau)} \,\mathrm{d} u)Z(\theta, \tau, h) - (\int_{-\infty}^{\infty} -L_h(u; \tau) e^{-\theta L_h(u; \tau)} \,\mathrm{d} u)(\int_{-\infty}^{\infty} -L_h(u; \tau) e^{-\theta L_h(u; \tau)} \,\mathrm{d} u)}{Z(\theta, \tau, h)^2} \\
={}& \int_{-\infty}^{\infty} L_h(u; \tau)^2 \frac{e^{-\theta L_h(u; \tau)}}{Z(\theta, \tau, h)} \mathrm{d}u - \left( \int_{-\infty}^{\infty} -L_h(u; \tau) \frac{e^{-\theta L_h(u; \tau)}}{Z(\theta, \tau, h)} \mathrm{d}u \right)^2 \\
={}& \mathbb{E}_{p_u}[L_h(u; \tau)^2] - (\mathbb{E}_{p_u}[L_h(u; \tau)])^2 = \text{Var}_{p_u}(L_h(u; \tau)).
\end{align*}
Since variance is always non-negative, we have $\frac{\partial^2 \log Z(\theta, \tau, h)}{\partial \theta^2} \ge 0$. This confirms the convexity of $\log Z(\theta, \tau, h)$ with respect to $\theta$.
\end{proof}

\phantomsection
\begin{proof}[Proof of Theorem~\ref{thm:4}]\label{pf:thm4}
The posterior distributions are given by:
\begin{align*}
\pi_{\mathrm{BSQR}}(\boldsymbol{\beta} \mid \theta, \boldsymbol{y}, \boldsymbol{\mathcal{X}};\tau, h) &= \frac{1}{Z_{\mathrm{BSQR}}(\theta, h, \pi)} \exp\left( -\theta \sum_{i=1}^n L_h(e_i(\boldsymbol{\beta}); \tau) \right) \pi(\boldsymbol{\beta}), \\
\pi_{\mathrm{ALD}}(\boldsymbol{\beta} \mid \theta, \boldsymbol{y}, \boldsymbol{\mathcal{X}};\tau) &= \frac{1}{Z_{\mathrm{ALD}}(\theta, \pi)} \exp\left( -\theta \sum_{i=1}^n \rho_\tau(e_i(\boldsymbol{\beta})) \right) \pi(\boldsymbol{\beta}),
\end{align*}
where $e_i(\boldsymbol{\beta}) = y_i - \boldsymbol{x}_i^\top\boldsymbol{\beta}$, and $Z_{\mathrm{BSQR}}$ and $Z_{\mathrm{ALD}}$ are the respective normalizing constants.
The ratio of these posteriors is
\begin{align*}
R(\boldsymbol{\beta}) &\coloneqq \frac{\pi_{\mathrm{BSQR}}(\boldsymbol{\beta} \mid \theta, h, \boldsymbol{y}, \boldsymbol{\mathcal{X}};\tau,h)}{\pi_{\mathrm{ALD}}(\boldsymbol{\beta} \mid \theta, \boldsymbol{y}, \boldsymbol{\mathcal{X}};\tau)} = \frac{Z_{\mathrm{ALD}}(\theta, \pi)}{Z_{\mathrm{BSQR}}(\theta, h, \pi)}\cdot \frac{\exp\left( -\theta \sum_{i=1}^n L_h(e_i(\boldsymbol{\beta}); \tau) \right) \pi(\boldsymbol{\beta})}{\exp\left( -\theta \sum_{i=1}^n \rho_\tau(e_i(\boldsymbol{\beta})) \right) \pi(\boldsymbol{\beta})} \\
&= C_0 \cdot \exp\left( -\theta \sum_{i=1}^n [L_h(e_i(\boldsymbol{\beta}); \tau) - \rho_\tau(e_i(\boldsymbol{\beta}))] \right),
\end{align*}
where $C_0 = Z_{\mathrm{ALD}}(\theta, \pi) / Z_{\mathrm{BSQR}}(\theta, h, \pi)$ is a positive constant independent of $\boldsymbol{\beta}$.
Let $\Delta_i(\boldsymbol{\beta}) = L_h(e_i(\boldsymbol{\beta}); \tau) - \rho_\tau(e_i(\boldsymbol{\beta}))$. Without loss of generality, we may re-center $K(v)$ to have zero mean, which implies the symmetry $K(v) = K(-v)$. If the initial compact support is $[a,b]$, this centering implies an effective symmetric support $[-c, c]$ where $c = (b-a)/2 > 0$, with $K(v) = 0$ for $|v| > c$. Then, from Eq.~\eqref{eq:smoothed_check_loss} with a change of variables $u = v/h$, we have $L_h(e; \tau) = \int_{-c}^{c} \rho_{\tau}(hu + e) K(u) \,\mathrm{d}u$.
The derivative of $L_h(e; \tau) - \rho_\tau(e)$ with respect to $e$ is
$$ \frac{\mathrm{d}}{\mathrm{d} e} [L_h(e; \tau) - \rho_\tau(e)] = \int_{-c}^{c} \psi_{\tau}(hu + e) K(u) \,\mathrm{d}u - \psi_\tau(e), $$
where $\psi_\tau(x) = \tau - \mathbb{I}(x<0)$.

We first examine the case where $e > hc$ (equivalently, $e/h > c$). In this regime, for any $u \in [-c, c]$, the argument $hu+e$ satisfies $hu+e > -hc+e > -hc+hc = 0$. Consequently, $\psi_{\tau}(hu + e) = \tau$. Furthermore, since $e > hc > 0$, we also have $\psi_\tau(e) = \tau$.
The derivative of $L_h(e; \tau) - \rho_\tau(e)$ with respect to $e$ is therefore:
\begin{align*}
\frac{\mathrm{d}}{\mathrm{d} e} [L_h(e; \tau) - \rho_\tau(e)] = \int_{-c}^{c} \tau K(u) \,\mathrm{d}u - \tau = \tau \int_{-c}^{c} K(u) \,\mathrm{d}u - \tau = 0.
\end{align*}
This implies that $L_h(e; \tau) - \rho_\tau(e)$ remains constant for $e > hc$.
To determine this constant value, we directly evaluate the difference:
$$ L_h(e; \tau) - \rho_\tau(e) = \int_{-c}^{c} [\rho_{\tau}(hu + e) - \rho_\tau(e)] K(u) \,\mathrm{d}u. $$
Given $e > hc$, both $hu+e$ and $e$ are positive. Hence, $\rho_{\tau}(x) = \tau x$ for $x>0$.
The integrand simplifies to $\tau(hu+e) - \tau e = \tau hu$.
The integral then becomes
$$ \int_{-c}^{c} \tau hu K(u) \,\mathrm{d}u = \tau h \int_{-c}^{c} u K(u) \,\mathrm{d}u = 0, $$
since $\int_{-c}^{c} u K(u) \,\mathrm{d}u = 0$ since $K(u)$ is symmetric about 0 on its support $[-c,c]$.
Therefore, for $e > hc$, it follows that $L_h(e; \tau) = \rho_\tau(e)$.

Next, we consider the case where $e < -hc$ (equivalently, $e/h < -c$). In this regime, for any $u \in [-c, c]$, the argument $hu+e$ satisfies $hu+e < hc+e < hc-hc = 0$.
Consequently, $\psi_{\tau}(hu + e) = \tau-1$.
Furthermore, since $e < -hc < 0$, we have $\psi_\tau(e) = \tau-1$.
The derivative of $L_h(e; \tau) - \rho_\tau(e)$ with respect to $e$ is:
\begin{align*}
\frac{\mathrm{d}}{\mathrm{d} e} [L_h(e; \tau) - \rho_\tau(e)] &= \int_{-c}^{c} (\tau-1) K(u) \,\mathrm{d}u - (\tau-1) = (\tau-1) \int_{-c}^{c} K(u) \,\mathrm{d}u - (\tau-1) = 0.
\end{align*}
This implies that $L_h(e; \tau) - \rho_\tau(e)$ is also constant for $e < -hc$.
To determine this constant, we evaluate the difference:
$$ L_h(e; \tau) - \rho_\tau(e) = \int_{-c}^{c} [\rho_{\tau}(hu + e) - \rho_\tau(e)] K(u) \,\mathrm{d}u. $$
Given $e < -hc$, both $hu+e$ and $e$ are negative. Hence, $\rho_{\tau}(x) = (\tau-1)x$ for $x<0$.
The integrand simplifies to $(\tau-1)(hu+e) - (\tau-1)e = (\tau-1)hu$.
The integral then becomes
$$ \int_{-c}^{c} (\tau-1) hu K(u) \,\mathrm{d}u = (\tau-1) h \int_{-c}^{c} u K(u) \,\mathrm{d}u = 0.$$
Therefore, for $e < -hc$, it follows that $L_h(e; \tau) = \rho_\tau(e)$.

Based on the previous case analysis, we conclude that for residuals $e_i(\boldsymbol{\beta})$ satisfying $|e_i(\boldsymbol{\beta})/h| > c$, $\Delta_i(\boldsymbol{\beta}) = 0$.
For residuals $e_i(\boldsymbol{\beta})$ such that $|e_i(\boldsymbol{\beta})/h| \le c$, $e_i(\boldsymbol{\beta})$ lies in the compact interval $[-hc, hc]$.
Within this interval, $L_h(u; \tau)$ and $\rho_\tau(u)$ are continuous functions of $u$ on $[-hc, hc]$.
The continuity of $L_h(u; \tau)$ on $[-hc, hc]$ follows from the continuity of $\rho_\tau(u)$ and the properties of convolution with a well-behaved kernel $K(u)$.
Thus, $\Delta_i(\boldsymbol{\beta})$ is also a continuous function of $e_i(\boldsymbol{\beta})$ over this interval.
By the Extreme Value Theorem, there exist constants $m_{\text{in}}(h,\tau,c)$ and $M_{\text{in}}(h,\tau,c)$ such that for all $u \in [-hc, hc]$, $m_{\text{in}}(h,\tau,c) \le \Delta_i(\boldsymbol{\beta}) \le M_{\text{in}}(h,\tau,c)$. Let $m_L = \min(0, m_{\text{in}}(h,\tau,c))$ and $M_U = \max(0, M_{\text{in}}(h,\tau,c))$.
It follows that $m_L \le \Delta_i(\boldsymbol{\beta}) \le M_U$ for any $e_i(\boldsymbol{\beta})$.
Consequently, $\sum_{i=1}^n \Delta_i(\boldsymbol{\beta})$ is bounded as:
$$ n \cdot m_L \le \sum_{i=1}^n \Delta_i(\boldsymbol{\beta}) \le n \cdot M_U. $$
The exponential term $\exp\left( -\theta \sum_{i=1}^n \Delta_i(\boldsymbol{\beta}) \right)$ is therefore bounded below by $\exp(-\theta n M_U)$ and above by $\exp(-\theta n m_L)$.
Thus, the ratio $R(\boldsymbol{\beta})$ satisfies:
$$ C_0 \exp(-\theta n M_U) \le R(\boldsymbol{\beta}) \le C_0 \exp(-\theta n m_L). $$
Denoting these lower and upper bounds as $M_1 = C_0 \exp(-\theta n M_U)$ and $M_2 = C_0 \exp(-\theta n m_L)$ respectively, we observe that $M_1$ and $M_2$ are positive constants independent of $\boldsymbol{\beta}$. This establishes the desired result.
\end{proof}

\phantomsection
\begin{proof}[Proof of Theorem~\ref{thm:5}]\label{pf:thm5}
We evaluate the Hessian $\boldsymbol{H}_L(\boldsymbol{\beta}; K)$ at the true parameter $\boldsymbol{\beta}_0(\tau)$. At this point, the residuals become the true errors, $e_i(\boldsymbol{\beta}_0(\tau)) = \varepsilon_{0i}$, and from Eq.~\eqref{eq:hessian_likelihood_part}, we have
$$ \boldsymbol{H}_L(\boldsymbol{\beta}_0(\tau); K) = \frac{\theta}{h} \sum_{i=1}^n K\left(\frac{\varepsilon_{0i}}{h}\right) \boldsymbol{x}_i \boldsymbol{x}_i^\top. $$
Then, the expected Hessian is
\begin{align*}
\mathcal{H}_L(\boldsymbol{\beta}_0(\tau); K) &= \mathbb{E}_{\varepsilon_{0i} \sim f_{\varepsilon_0}, \boldsymbol{x}_i \sim P_{\boldsymbol{X}}} \left[ \frac{\theta}{h} \sum_{i=1}^n K\left(\frac{\varepsilon_{0i}}{h}\right) \boldsymbol{x}_i \boldsymbol{x}_i^\top \right] \\
&= \frac{\theta}{h} \sum_{i=1}^n \mathbb{E}_{\varepsilon_{0i} \sim f_{\varepsilon_0}, \boldsymbol{x}_i \sim P_{\boldsymbol{X}}} \left[ K\left(\frac{\varepsilon_{0i}}{h}\right) \boldsymbol{x}_i \boldsymbol{x}_i^\top \right] \\
&=  \frac{\theta}{h} \sum_{i=1}^n s_K(h) \cdot \mathbb{E}[\boldsymbol{x}_i \boldsymbol{x}_i^\top] = \frac{\theta}{h} n s_K(h) \Sigma_X
\end{align*}
by using the independence of $\varepsilon_{0i}$ and $\boldsymbol{x}_i$. The relationship between $\mathcal{H}_L(\boldsymbol{\beta}_0(\tau); K_A)$ and $\mathcal{H}_L(\boldsymbol{\beta}_0(\tau); K_B)$ is determined by examining their difference
\begin{align*}
\mathcal{H}_L(\boldsymbol{\beta}_0(\tau); K_A) - \mathcal{H}_L(\boldsymbol{\beta}_0(\tau); K_B) &= \frac{\theta}{h} n (s_{K_A}(h) - s_{K_B}(h)) \Sigma_X.
\end{align*}
The scalar term $c \coloneqq \frac{\theta}{h} n (s_{K_A}(h) - s_{K_B}(h)) > 0$, since $s_{K_A}(h) - s_{K_B}(h) > 0$ by assumption, and $\theta > 0, h > 0$. Since the matrix $\Sigma_X$ is positive definite by assumption, the difference is positive definite (strict Loewner order). 

For the sample Hessian, the difference is $\theta/h \sum_{i=1}^n \left( K_A\left(\frac{\varepsilon_{0i}}{h}\right) - K_B\left(\frac{\varepsilon_{0i}}{h}\right) \right) \boldsymbol{x}_i \boldsymbol{x}_i^\top$, which is a sum of random matrices. By the law of large numbers, the normalized version $1/n ( \boldsymbol{H}_L(\boldsymbol{\beta}_0(\tau); K_A) \allowbreak - \allowbreak \boldsymbol{H}_L(\boldsymbol{\beta}_0(\tau); \allowbreak K_B)) \allowbreak \xrightarrow{P} \theta/h (s_{K_A}(h) - s_{K_B}(h)) \Sigma_X \succ \boldsymbol{0}$ as $n \to \infty$, ensuring the strict inequality holds asymptotically with probability approaching 1.
\end{proof}

\clearpage
\phantomsection
\addcontentsline{toc}{section}{Appendix B: Algorithm}
\section*{Appendix B: Algorithm}
\label{sec:AppendixB}
\setcounter{equation}{0}
\renewcommand{\theequation}{B.\arabic{equation}}
\setcounter{subsection}{0}
\renewcommand{\thesubsection}{B.\arabic{subsection}}
\renewcommand{\theHsubsection}{B.\arabic{subsection}}
\renewcommand{\thealgocf}{B}

\begin{algorithm}[H]
    \caption{Joint Hamiltonian Monte Carlo sampling for $(\boldsymbol{\beta}, \theta)$ via NUTS \textbf{(Part I)}}
    \label{Al1_Part1}
    
    \KwIn{Data $(\boldsymbol{y}, \boldsymbol{\mathcal{X}})$, quantile level $\tau$, bandwidth $h$; 
          prior specification $\pi(\boldsymbol{\beta})$ and $\pi(\theta)$; 
          initial state $\boldsymbol{\beta}^{(0)}$ and $\theta^{(0)}$; 
          number of MCMC iterations $N$; mass matrix $\boldsymbol{M}$ (adapted during warmup).}
    \KwOut{Markov chain samples $\{(\boldsymbol{\beta}^{(1)}, \theta^{(1)}), \dots, (\boldsymbol{\beta}^{(N)}, \theta^{(N)})\}$ from the joint posterior $\pi(\boldsymbol{\beta}, \theta \mid \boldsymbol{y}, \boldsymbol{\mathcal{X}})$.}
    \BlankLine

    \For{$k=1,\ldots,N$}{
        Let current state $\boldsymbol{\beta}_c = \boldsymbol{\beta}^{(k-1)}$ and $\theta_c = \theta^{(k-1)}$\;
        Define unconstrained joint state $\boldsymbol{q}_c = (\boldsymbol{\beta}_c^\top, \log \theta_c)^\top$\;
        
        \textbf{Step 1 (Momentum Sampling):}\\
        Draw auxiliary momentum $\boldsymbol{p}_{c} \sim \mathcal{N} (\boldsymbol{0}, \boldsymbol{M})$\;
        Initialize trajectory variables: $\tilde{\boldsymbol{q}} \leftarrow \boldsymbol{q}_{c}$ and $\tilde{\boldsymbol{p}} \leftarrow \boldsymbol{p}_{c}$\;
        
        \textbf{Step 2 (Hamiltonian Dynamics via Leapfrog Integration):}\\
        \tcp{NUTS builds a trajectory using multiple Leapfrog steps. A single step of size $\epsilon$ updates $(\tilde{\boldsymbol{q}}, \tilde{\boldsymbol{p}})$ as follows:}
        
        \quad 1. \textbf{Half-step momentum update:}\\
        \quad \quad Compute gradient $\nabla_{\boldsymbol{q}} U(\tilde{\boldsymbol{q}} \mid \boldsymbol{y}, \boldsymbol{\mathcal{X}})$ (see details in Part II)\;
        \quad \quad $\tilde{\boldsymbol{p}} \leftarrow \tilde{\boldsymbol{p}} - \frac{\epsilon}{2} \nabla_{\boldsymbol{q}} U(\tilde{\boldsymbol{q}} \mid \boldsymbol{y}, \boldsymbol{\mathcal{X}})$\;
        
        \quad 2. \textbf{Full-step position update:}\\
        \quad \quad $\tilde{\boldsymbol{q}} \leftarrow \tilde{\boldsymbol{q}} + \epsilon \boldsymbol{M}^{-1} \tilde{\boldsymbol{p}}$\;
        \quad \quad \emph{Implicitly updates parameters:} $\tilde{\boldsymbol{\beta}} \leftarrow \tilde{\boldsymbol{q}}_{1:d}, \tilde{\theta} \leftarrow \exp(\tilde{\boldsymbol{q}}_{d+1})$\;
        
        \tcp{\textbf{... Step 2 continues in Part II ...}}
    }
\end{algorithm}

\begin{algorithm}[H]
    \addtocounter{algocf}{-1} 
    \caption{Joint Hamiltonian Monte Carlo sampling for $(\boldsymbol{\beta}, \theta)$ via NUTS \textbf{(Part II)}}
    \label{Al1_Part2}

    \For{$k=1,\ldots,N$ \textbf{(continued)}}{
        \textbf{Step 2 (continued):}
        
        \quad 3. \textbf{Half-step momentum update:}\\
        \quad \quad Compute gradient $\nabla_{\boldsymbol{q}} U(\tilde{\boldsymbol{q}} \mid \boldsymbol{y}, \boldsymbol{\mathcal{X}})$ at new position\;
        \quad \quad $\tilde{\boldsymbol{p}} \leftarrow \tilde{\boldsymbol{p}} - \frac{\epsilon}{2} \nabla_{\boldsymbol{q}} U(\tilde{\boldsymbol{q}} \mid \boldsymbol{y}, \boldsymbol{\mathcal{X}})$\;
        
        \vspace{0.2em}
        \tcp{Gradient Computation Details:}
        
        \quad \textbf{-- For $\boldsymbol{\beta}$ block (Analytical Score):}
        \begin{equation*}
            \nabla_{\boldsymbol{\beta}} U(\boldsymbol{\beta}, \theta \mid \boldsymbol{y}, \boldsymbol{\mathcal{X}}) = -\theta \sum_{i=1}^n \Psi_h(y_i - \boldsymbol{x}_i^\top \boldsymbol{\beta}; \tau) \boldsymbol{x}_i - \nabla_{\boldsymbol{\beta}} \log \pi(\boldsymbol{\beta})
        \end{equation*}
        
        \quad \textbf{-- For $\log\theta$ block (Chain Rule \& Auto-Diff):}
        \begin{equation*}
            \nabla_{\log\theta} U(\boldsymbol{q} \mid \boldsymbol{y}, \boldsymbol{\mathcal{X}}) = \theta \cdot \left( \sum_{i=1}^n L_h(y_i - \boldsymbol{x}_i^\top \boldsymbol{\beta}; \tau) + n \nabla_{\theta} \log Z(\theta, \tau, h) - \nabla_{\theta} \log \pi(\theta) \right) - 1
        \end{equation*}        
        \vspace{0.5em}
        \textbf{Step 3 (NUTS Trajectory Selection):}\\
        \quad Extend the trajectory forward and backward (doubling) until the No-U-Turn condition is met\;
        \quad Select a candidate state $\boldsymbol{q}' = (\boldsymbol{\beta}'^\top, \log \theta')^\top$ from the trajectory via multinomial sampling weighted by $\exp(-H(\boldsymbol{q}, \boldsymbol{p}))$\;
        
        \quad Set $\boldsymbol{q}^{(k)} \leftarrow \boldsymbol{q}'$ and extract: $\boldsymbol{\beta}^{(k)} = \boldsymbol{q}'_{1:d}$, $\theta^{(k)} = \exp(\boldsymbol{q}'_{d+1})$\;
    }
\end{algorithm}

\newpage
\phantomsection
\addcontentsline{toc}{section}{Appendix C: Derivations}
\section*{Appendix C: Derivations}
\label{sec:AppendixC}
\setcounter{equation}{0}
\renewcommand{\theequation}{C.\arabic{equation}}
\setcounter{subsection}{0}
\renewcommand{\thesubsection}{C.\arabic{subsection}}
\renewcommand{\theHsubsection}{C.\arabic{subsection}}

\subsection{Gaussian kernel}
The standard Gaussian kernel is $K(v) = \phi(v) = (2\pi)^{-1/2} \exp(-v^2/2)$, with CDF $F_K(u) = \Phi(u)$. Applying Eq.~\eqref{eq:Psi_h_cdf_form} yields the smoothed score function:
\begin{equation*}
\Psi_h(e; \tau) = \Phi\left(\frac{e}{h}\right) - (1-\tau).
\end{equation*}
The smoothed loss function $L_h(e; \tau)$ is defined such that its derivative is the smoothed score, $L_h'(e; \tau) = \Psi_h(e; \tau)$. We find its form by computing the indefinite integral of $\Psi_h(e; \tau)$ and determining the constant of integration.
\begin{align*}
L_h(e; \tau) &= \int \Psi_h(e; \tau) \, \mathrm{d} e = \int \left( \Phi\left(\frac{e}{h}\right) - (1-\tau) \right) \, \mathrm{d} e \\
&= \int \Phi\left(\frac{e}{h}\right) \, \mathrm{d} e - (1-\tau)e + C_{G1}.
\end{align*}
The first integral, $\int \Phi(e/h) \, \mathrm{d} e$, is evaluated using integration by parts ($\int u \, \mathrm{d} v = uv - \int v \, \mathrm{d} u$). Let $u = \Phi(e/h)$ and $\mathrm{d} v = \mathrm{d} e$ (so $v = e$), then $\mathrm{d} u = \frac{1}{h}\phi(e/h) \, \mathrm{d} e$. This gives:
\begin{align*}
\int \Phi\left(\frac{e}{h}\right) \, \mathrm{d} e &= e \Phi\left(\frac{e}{h}\right) - \int e \cdot \frac{1}{h}\phi\left(\frac{e}{h}\right) \, \mathrm{d} e.
\end{align*}
The remaining integral is solved by substitution. Let $w = -e^2/(2h^2)$, then $\mathrm{d} w = -e/h^2 \, \mathrm{d} e$, which implies $e \, \mathrm{d} e = -h^2 \, \mathrm{d} w$. Thus,
\begin{align*}
	\int e \phi\left(\frac{e}{h}\right) \, \mathrm{d} e &= \int \frac{1}{\sqrt{2\pi}} \exp\left(-\frac{e^2}{2h^2}\right) e \, \mathrm{d} e = \frac{1}{\sqrt{2\pi}} \int \exp(w) (-h^2 \, \mathrm{d} w) \\
&= -h^2 \frac{1}{\sqrt{2\pi}} \int \exp(w) \, \mathrm{d} w = -h^2 \frac{1}{\sqrt{2\pi}} \exp(w) + C_{G2} = -h^2 \phi\left(\frac{e}{h}\right) + C_{G2}.
\end{align*}
Substituting this back and combining constants (letting $C_{G} = C_{G1} - C_{G2}$), we obtain the general form of the smoothed loss:
\begin{align*}
L_h(e; \tau) &= e \Phi\left(\frac{e}{h}\right) - \frac{1}{h} \left( -h^2 \phi\left(\frac{e}{h}\right) \right) - (1-\tau)e + C_{G} \\
&= e \left( \Phi\left(\frac{e}{h}\right) - (1-\tau) \right) + h \phi\left(\frac{e}{h}\right) + C_{G}.
\end{align*}
The constant of integration $C_{G}$ is determined by the condition that $L_h(e; \tau) \to \rho_\tau(e)$ as $h \to 0$. For any $e > 0$, we require $\lim_{h \to 0} L_h(e; \tau) = \rho_\tau(e) = e\tau$. As $h \to 0$, $\Phi(e/h) \to 1$ and $h\phi(e/h) \to 0$. The limit becomes:
$$ e \left( 1 - (1-\tau) \right) + 0 + C_{G} = e\tau \implies e\tau + C_{G} = e\tau, $$
which implies $C_{G}=0$. A similar analysis for $e < 0$ also yields $C_{G}=0$. Therefore, the smoothed loss function is:
\begin{equation*}
L_h(e; \tau) = e \left( \Phi\left(\frac{e}{h}\right) - (1-\tau) \right) + h \phi\left(\frac{e}{h}\right).
\end{equation*}
This derived form aligns with established results in the literature (e.g., \shortciteNP{Horowitz1998, Koenker2005}).

\subsection{Uniform kernel}

The standard Uniform kernel is defined as $K(v) = 1/2$ for $v \in [-1, 1]$ and $K(v) = 0$ otherwise. Its corresponding CDF, $F_K(u)$, is given by:
\begin{equation*}
F_K(u) = \begin{cases} 0 & \text{if } u < -1, \\ \frac{u+1}{2} & \text{if } -1 \le u \le 1, \\ 1 & \text{if } u > 1. \end{cases}
\end{equation*}
Substituting this CDF into the general expression for $\Psi_h(e; \tau)$ in Eq.~\eqref{eq:Psi_h_cdf_form}, we obtain a piecewise function for $\Psi_h(e; \tau)$ with the Uniform kernel:
\begin{equation*}
\Psi_h(e; \tau) = \begin{cases} -(1-\tau) & \text{if } e/h < -1, \\ \frac{e}{2h} + \tau - \frac{1}{2} & \text{if } -1 \le e/h \le 1, \\ \tau & \text{if } e/h > 1. \end{cases}
\end{equation*}
The smoothed loss function $L_h(e; \tau)$ is constructed to match the pinball loss $\rho_\tau(e)$ in the outer regions where $|e/h| \ge 1$. Specifically, $L_h(e; \tau) = e(\tau-1)$ for $e/h \le -1$ (i.e., $e \le -h$), and $L_h(e; \tau) = e\tau$ for $e/h \ge 1$ (i.e., $e \ge h$).

In the central region, where $-1 < e/h < 1$ (i.e., $-h < e < h$), $L_h(e; \tau)$ is obtained by integrating the corresponding segment of $\Psi_h(e; \tau)$:
\begin{align*}
L_h(e; \tau) &= \int \left(\frac{e}{2h} + \tau - \frac{1}{2}\right) \, \mathrm{d} e = \frac{e^2}{4h} + e\left(\tau - \frac{1}{2}\right) + C_U,
\end{align*}
where $C_U$ is the constant of integration for this segment.
To ensure continuity of $L_h(e; \tau)$ at the boundary $e=-h$, the value of the central segment's expression at $e=-h$ must equal the value from the left segment, which is $\rho_\tau(-h) = (-h)(\tau-1) = h(1-\tau)$. Setting these equal allows us to solve for $C_U$:
$(-h)^2/4h + (-h)\left(\tau - \frac{1}{2}\right) + C_U = h(1-\tau)$, we have $C_U = h/4$. Substituting it into the expression for the central segment, the complete piecewise definition of the smoothed loss function $L_h(e; \tau)$ for the Uniform kernel is:
\begin{equation*}
L_h(e; \tau) = \begin{cases} e(\tau-1) & \text{if } e/h \le -1, \\ \frac{e^2}{4h} + e\left(\tau - \frac{1}{2}\right) + \frac{h}{4} & \text{if } -1 < e/h < 1, \\ e\tau & \text{if } e/h \ge 1. \end{cases}
\end{equation*}
This construction ensures $L_h(e; \tau)$ is continuous at $e=-h$. We now verify continuity at the other boundary, $e=h$. For the central segment evaluated at $e=h$, we have
\begin{align*}
L_h(h;\tau) = \frac{h^2}{4h} + h\left(\tau - \frac{1}{2}\right) + \frac{h}{4} = h\tau.
\end{align*}
This value matches the expression for $L_h(e; \tau)$ in the region $e/h \ge 1$ when $e=h$, i.e., $L_h(h;\tau) = h\tau = \rho_\tau(h)$. Thus, continuity is also satisfied at $e=h$, and the derived $L_h(e; \tau)$ is continuous across all regions.

\subsection{Epanechnikov kernel}

The standard Epanechnikov kernel is defined by $K(v) = \frac{3}{4}(1-v^2)$ for $v \in [-1, 1]$ and $K(v) = 0$ otherwise.
Its CDF, $F_K(u)$, is piecewise:
\begin{equation*}
F_K(u) = \begin{cases} 0 & \text{if } u < -1, \\ \frac{3}{4}u - \frac{1}{4}u^3 + \frac{1}{2} & \text{if } -1 \le u \le 1, \\ 1 & \text{if } u > 1. \end{cases}
\end{equation*}
Using this CDF in conjunction with Eq.~\eqref{eq:Psi_h_cdf_form}, the expression for $\Psi_h(e; \tau)$ under the Epanechnikov kernel becomes:
\begin{equation*}
\Psi_h(e; \tau) = \begin{cases} -(1-\tau) & \text{if } e/h < -1, \\ \frac{3}{4}\left(\frac{e}{h}\right) - \frac{1}{4}\left(\frac{e}{h}\right)^3 + \tau - \frac{1}{2} & \text{if } -1 \le e/h \le 1, \\ \tau & \text{if } e/h > 1. \end{cases}
\end{equation*}
As with other kernels, the smoothed loss function $L_h(e; \tau)$ is designed to replicate the pinball loss $\rho_\tau(e)$ in the regions where $|e/h| \ge 1$. Thus, $L_h(e; \tau) = e(\tau-1)$ for $e/h \le -1$, and $L_h(e; \tau) = e\tau$ for $e/h \ge 1$.

For the central region, $-1 < e/h < 1$ (i.e., $-h < e < h$), $L_h(e; \tau)$ is derived by integrating the relevant segment of $\Psi_h(e; \tau)$:
\begin{align*}
L_h(e; \tau) &= \int \left(\frac{3e}{4h} - \frac{e^3}{4h^3} + \tau - \frac{1}{2}\right) \, \mathrm{d} e = \frac{3e^2}{8h} - \frac{e^4}{16h^3} + e\left(\tau - \frac{1}{2}\right) + C_E,
\end{align*}
where $C_E$ is the integration constant for this segment.
Continuity at the boundary $e=-h$ dictates that the value of this central expression at $e=-h$ must be equal to $\rho_\tau(-h) = (-h)(\tau-1) = h(1-\tau)$. This condition allows for the determination of $C_E$:
\begin{align*}
\frac{3(-h)^2}{8h} - \frac{(-h)^4}{16h^3} + (-h)\left(\tau - \frac{1}{2}\right) + C_E &= h(1-\tau),
\end{align*}
which implies that $C_E = 3h/16$. Thus, the complete piecewise definition for $L_h(e; \tau)$ using the Epanechnikov kernel is:
\begin{equation*}
L_h(e; \tau) = \begin{cases} e(\tau-1) & \text{if } e/h \le -1, \\ \frac{3e^2}{8h} - \frac{e^4}{16h^3} + e\left(\tau - \frac{1}{2}\right) + \frac{3h}{16} & \text{if } -1 < e/h < 1, \\ e\tau & \text{if } e/h \ge 1. \end{cases}
\end{equation*}
This form of $L_h(e; \tau)$ is continuous at $e=-h$ by construction. To confirm overall continuity, we check the boundary $e=h$. Evaluating the central segment at $e=h$:
\begin{align*}
L_h(h;\tau) &= \frac{3h^2}{8h} - \frac{h^4}{16h^3} + h\left(\tau - \frac{1}{2}\right) + \frac{3h}{16} = h\tau.
\end{align*}
This result, $h\tau$, matches the value of $L_h(e; \tau)$ for $e/h \ge 1$ at $e=h$ (i.e., $\rho_\tau(h)$). Therefore, the smoothed loss function $L_h(e; \tau)$ is continuous across all defined regions.

\subsection{Triangular kernel}

The standard Triangular kernel is given by $K(v) = 1-|v|$ for $v \in [-1, 1]$ and $K(v) = 0$ otherwise.
Its CDF, $F_K(u)$, is defined piecewise:
\begin{equation*}
F_K(u) = \begin{cases} 0 & \text{if } u < -1, \\
\frac{1}{2}(1+u)^2 & \text{if } -1 \le u < 0, \\
1 - \frac{1}{2}(1-u)^2 & \text{if } 0 \le u \le 1, \\
1 & \text{if } u > 1.
\end{cases}
\end{equation*}
Substituting this CDF into Eq.~\eqref{eq:Psi_h_cdf_form}, we derive the corresponding $\Psi_h(e; \tau)$:
\begin{equation}
\Psi_h(e; \tau) = \begin{cases} -(1-\tau) & \text{if } e/h < -1, \\
\frac{1}{2}\left(1+\frac{e}{h}\right)^2 - (1-\tau) & \text{if } -1 \le e/h < 0, \\
\tau - \frac{1}{2}\left(1-\frac{e}{h}\right)^2 & \text{if } 0 \le e/h \le 1, \\
\tau & \text{if } e/h > 1.
\end{cases} \label{eq:psi_h_triangular_detailed}
\end{equation}
The smoothed loss function $L_h(e; \tau)$ is constructed to match the pinball loss $\rho_\tau(e)$ in the outer regions where $|e/h| \ge 1$. Thus, $L_h(e; \tau) = e(\tau-1)$ for $e/h \le -1$, and $L_h(e; \tau) = e\tau$ for $e/h \ge 1$.

Due to the piecewise nature of $F_K(u)$ for the Triangular kernel around $u=0$, the central region for $L_h(e; \tau)$ ($-h < e < h$) is split into two segments. For the segment $-1 < e/h < 0$ (i.e., $-h < e < 0$), integrating the corresponding part of $\Psi_h(e; \tau)$ yields:
\begin{equation*}
L_h(e; \tau) = \int \left( \frac{1}{2}\left(1+\frac{e}{h}\right)^2 - (1-\tau) \right) \, \mathrm{d} e = \frac{h}{6}\left(1+\frac{e}{h}\right)^3 - e(1-\tau) + C_{T1}.
\end{equation*}
The constant $C_{T1}$ is determined by ensuring continuity with $\rho_\tau(e)$ at $e=-h$. Setting $L_h(-h;\tau) = \rho_\tau(-h) = h(1-\tau)$ results in $C_{T1}=0$.
Thus, for $-1 < e/h < 0$:
\begin{equation*}
L_h(e; \tau) = \frac{h}{6}\left(1+\frac{e}{h}\right)^3 - e(1-\tau).
\end{equation*}
For the segment $0 \le e/h < 1$ (i.e., $0 \le e < h$), integration gives:
\begin{equation*}
L_h(e; \tau) = \int \left( \tau - \frac{1}{2}\left(1-\frac{e}{h}\right)^2 \right) \, \mathrm{d} e = e\tau + \frac{h}{6}\left(1-\frac{e}{h}\right)^3 + C_{T2}.
\end{equation*}
Similarly, $C_{T2}$ is found by imposing continuity with $\rho_\tau(e)$ at $e=h$. The condition $L_h(h;\tau) = \rho_\tau(h) = h\tau$ yields $C_{T2}=0$.
Thus, for $0 \le e/h < 1$:
\begin{equation*}
L_h(e; \tau) = e\tau + \frac{h}{6}\left(1-\frac{e}{h}\right)^3.
\end{equation*}

With $C_{T1}=0$ and $C_{T2}=0$, we must verify the continuity of $L_h(e; \tau)$ at the internal boundary $e=0$: (1) As $e/h \to 0^-$, $L_h(0;\tau) = \frac{h}{6}(1+0)^3 - 0 \cdot (1-\tau) = h/6$; (2) As $e/h \to 0^+$, $L_h(0;\tau) = 0 \cdot \tau + \frac{h}{6}(1-0)^3 = h/6$. Since the limits from both sides are equal, $L_h(e; \tau)$ is continuous at $e=0$. Combining all segments, the complete expression for $L_h(e; \tau)$ with the Triangular kernel is:
\begin{equation*}
L_h(e; \tau) = \begin{cases} e(\tau-1) & \text{if } e/h \le -1, \\
\frac{h}{6}\left(1+\frac{e}{h}\right)^3 - e(1-\tau) & \text{if } -1 < e/h < 0, \\
e\tau + \frac{h}{6}\left(1-\frac{e}{h}\right)^3 & \text{if } 0 \le e/h < 1, \\
e\tau & \text{if } e/h \ge 1.
\end{cases}
\end{equation*}
This smoothed loss function $L_h(e; \tau)$ is continuous across all regions. Furthermore, owing to the continuity of the Triangular kernel $K(v)$ itself, the derivative function $\Psi_h(e;\tau)$ given in Eq.~\eqref{eq:psi_h_triangular_detailed} is also continuous for all $e$.

\phantomsection
\addcontentsline{toc}{section}{Appendix D: Simulation results}

\setcounter{section}{4}
\setcounter{subsection}{0}
\setcounter{table}{0}
\setcounter{equation}{0}

\renewcommand{\thesection}{D}
\renewcommand{\thesubsection}{D.\arabic{subsection}}
\renewcommand{\theHsubsection}{D.\arabic{subsection}}
\renewcommand{\thetable}{D.\arabic{table}}
\renewcommand{\theequation}{D.\arabic{equation}}

\sisetup{
    table-parse-only = false,
    round-mode = places
}

\begin{landscape}
\section*{Appendix D: Simulation results}
\label{sec:AppendixD}

\subsection{General performance results}
{\footnotesize
\setlength{\tabcolsep}{4pt}
\begin{longtable}{@{}
    l 
    l 
    S[table-format=1.2, round-precision=2] 
    l 
    S[table-format=1.4, table-text-alignment=center, table-number-alignment=center, round-precision=4] 
    S[table-format=1.4, table-text-alignment=center, table-number-alignment=center, round-precision=4] 
    S[table-format=1.4, table-text-alignment=center, table-number-alignment=center, round-precision=4] 
    S[table-format=1.4, table-text-alignment=center, table-number-alignment=center, round-precision=4] 
    S[table-format=0.4, table-text-alignment=center, table-number-alignment=center, round-precision=4] 
    S[table-format=1.4, table-text-alignment=center, table-number-alignment=center, round-precision=4] 
    S[table-format=3.2, table-text-alignment=center, table-number-alignment=center, round-precision=2] 
    S[table-format=1.4, table-text-alignment=center, table-number-alignment=center, round-precision=4] 
    S[table-format=4.1, table-text-alignment=center, table-number-alignment=center, round-precision=1] 
    @{}}

\caption{Full simulation results comparing StdQR, BQR-ALD, and our proposed BSQR with Gaussian (G), Uniform (U), Epanechnikov (E), and Triangular (T) kernels.}
    
\label{tab:sim_results_all_kernels} \\

\toprule
\multirow{2}{*}{\textbf{Error Dist.}} & \multirow{2}{*}{\textbf{Design}} & {\multirow{2}{*}{$\tau$}} & \multirow{2}{*}{\textbf{Method}} & \multicolumn{3}{c}{\textbf{Estimation Accuracy ($\boldsymbol{\beta}$)}} & {\textbf{Prediction}} & \multicolumn{2}{c}{\textbf{Inference ($\boldsymbol{\beta}$)}} & {\textbf{Computation}} & \multicolumn{2}{c}{\textbf{MCMC Diag. ($\boldsymbol{\beta}$)}} \\
\cmidrule(lr){5-7} \cmidrule(lr){8-8} \cmidrule(lr){9-10} \cmidrule(lr){11-11} \cmidrule(lr){12-13}
& & & & {\textbf{MSE}} & {\textbf{MAE}} & {\textbf{WMSE}} & {\textbf{Check Loss}} & {\textbf{Coverage}} & {\textbf{CI Width}} & {\textbf{Time (s)}} & {$\widehat{R}_{\max}$} & {ESS$_{\min}$} \\
\midrule
\endfirsthead

\caption[]{-- \textit{Continued from previous page}} \\
\toprule
\multirow{2}{*}{\textbf{Error Dist.}} & \multirow{2}{*}{\textbf{Design}} & {\multirow{2}{*}{$\tau$}} & \multirow{2}{*}{\textbf{Method}} & \multicolumn{3}{c}{\textbf{Estimation Accuracy ($\beta$)}} & {\textbf{Prediction}} & \multicolumn{2}{c}{\textbf{Inference ($\beta$)}} & {\textbf{Computation}} & \multicolumn{2}{c}{\textbf{MCMC Diag. ($\beta$)}} \\
\cmidrule(lr){5-7} \cmidrule(lr){8-8} \cmidrule(lr){9-10} \cmidrule(lr){11-11} \cmidrule(lr){12-13}
& & & & {\textbf{MSE}} & {\textbf{MAE}} & {\textbf{WMSE}} & {\textbf{Check Loss}} & {\textbf{Coverage}} & {\textbf{CI Width}} & {\textbf{Time (s)}} & {$\widehat{R}_{\max}$} & {ESS$_{\min}$} \\
\midrule
\endhead

\midrule
\multicolumn{13}{r}{\textit{Continued on next page}} \\
\endfoot

\bottomrule
\multicolumn{13}{p{\linewidth}}{%
  \small
  \textit{Note:} \textbf{Bold} values indicate superior performance among the Bayesian methods (BQR-ALD and BSQR variants) for key metrics (lower is better for MSE, MAE, WMSE, check loss, CI width, time; for coverage, closer to 0.95 is better; for $\widehat{R}_{\max}$, closer to 1.0 is better; for $\text{ESS}_{\min}$, higher is better). The MCMC diagnostics reported are the maximum potential scale reduction factor ($\widehat{R}_{\max}$) and the minimum bulk effective sample size (ESS$_{\min}$) across all $\boldsymbol{\beta}$ coefficients.
  For our BSQR methods, the average number of divergent transitions per replication was low (consistently below 2) and did not appear to compromise the posterior estimates.
} \\
\endlastfoot

$\mathcal{N}(0,1)$ & Sparse ($p=20$)
 & 0.25 & StdQR & 0.0179 & 0.1072 & 0.2186 & 0.4398 & {---} & {---} & {---} & {---} & {---} \\
 & & 0.25 & BQR-ALD & 0.0157 & 0.1005 & 0.1928 & 0.7478 & 0.8692 & 0.3748 & 25.82 & 1.0021 & 2333.4 \\
 & & 0.25 & BSQR-G & 0.0161 & 0.1008 & 0.2061 & 0.4374 & 0.8765 & 0.8954 & 130.18 & 1.0060 & \textbf{3113.7} \\
 & & 0.25 & BSQR-U & \textbf{0.0154} & \textbf{0.0996} & \textbf{0.1900} & \textbf{0.4353} & 0.8715 & 0.3757 & \textbf{6.47} & \textbf{1.0019} & 2985.3 \\
 & & 0.25 & BSQR-E & 0.0155 & \textbf{0.0996} & 0.1901 & \textbf{0.4353} & \textbf{0.9295} & 0.4566 & 39.39 & 1.0020 & 2928.4 \\
 & & 0.25 & BSQR-T & 0.0155 & 0.0998 & 0.1905 & \textbf{0.4353} & 0.8732 & \textbf{0.3743} & 30.92 & \textbf{1.0019} & 2904.1 \\ \cmidrule(l){3-13}
 & & 0.50 & StdQR & 0.0143 & 0.0950 & 0.1729 & 0.4322 & {---} & {---} & {---} & {---} & {---} \\
 & & 0.50 & BQR-ALD & 0.0115 & 0.0859 & 0.1404 & 0.4265 & 0.9205 & 0.3773 & 27.64 & 1.0021 & 2326.5 \\
 & & 0.50 & BSQR-G & 0.0109 & \textbf{0.0817} & 0.1399 & 0.4254 & 0.9350 & 0.8896 & 129.38 & 1.0060 & \textbf{3258.2} \\
 & & 0.50 & BSQR-U & 0.0107 & 0.0829 & 0.1307 & 0.4246 & 0.9300 & 0.3713 & \textbf{6.45} & \textbf{1.0019} & 3111.8 \\
 & & 0.50 & BSQR-E & \textbf{0.0105} & 0.0822 & \textbf{0.1286} & \textbf{0.4242} & \textbf{0.9855} & 0.5510 & 37.72 & \textbf{1.0019} & 3034.8 \\
 & & 0.50 & BSQR-T & 0.0110 & 0.0841 & 0.1348 & 0.4253 & 0.9258 & \textbf{0.3734} & 30.65 & \textbf{1.0019} & 3006.9 \\ \cmidrule(l){3-13}
 & & 0.75 & StdQR & 0.0181 & 0.1065 & 0.2233 & 0.4424 & {---} & {---} & {---} & {---} & {---} \\
 & & 0.75 & BQR-ALD & 0.0154 & 0.0989 & 0.1909 & 0.7496 & 0.8652 & 0.3763 & 26.25 & 1.0021 & 2339.5 \\
 & & 0.75 & BSQR-G & 0.0166 & 0.0998 & 0.2239 & 0.4399 & 0.8782 & 0.8964 & 125.60 & 1.0060 & \textbf{3104.5} \\
 & & 0.75 & BSQR-U & \textbf{0.0151} & \textbf{0.0979} & \textbf{0.1874} & \textbf{0.4360} & 0.8712 & 0.3775 & \textbf{6.38} & 1.0020 & 3022.6 \\
 & & 0.75 & BSQR-E & \textbf{0.0151} & 0.0981 & 0.1878 & 0.4361 & \textbf{0.9305} & 0.4503 & 39.69 & \textbf{1.0019} & 2963.9 \\
 & & 0.75 & BSQR-T & 0.0152 & 0.0982 & 0.1881 & 0.4362 & 0.8675 & \textbf{0.3759} & 31.28 & \textbf{1.0019} & 2946.9 \\ \cmidrule(l){2-13}
 & Dense ($p=8$)
 & 0.25 & StdQR & 0.0159 & 0.1005 & 0.0808 & 0.4145 & {---} & {---} & {---} & {---} & {---} \\
 & & 0.25 & BQR-ALD & 0.0142 & 0.0951 & 0.0725 & 0.7915 & 0.8625 & 0.3570 & 10.08 & 1.0018 & 2174.8 \\
 & & 0.25 & BSQR-G & \textbf{0.0136} & 0.0933 & 0.0699 & 0.4127 & 0.8850 & 0.3665 & 85.94 & \textbf{1.0015} & \textbf{2804.2} \\
 & & 0.25 & BSQR-U & 0.0137 & 0.0932 & 0.0698 & 0.4126 & 0.8694 & \textbf{0.3558} & \textbf{4.27} & 1.0016 & 2734.7 \\
 & & 0.25 & BSQR-E & \textbf{0.0136} & \textbf{0.0929} & \textbf{0.0697} & \textbf{0.4126} & \textbf{0.9362} & 0.4378 & 25.87 & \textbf{1.0015} & 2722.6 \\
 & & 0.25 & BSQR-T & 0.0138 & 0.0935 & 0.0705 & 0.4127 & 0.8638 & 0.3554 & 20.39 & 1.0017 & 2705.3 \\ \cmidrule(l){3-13}
 & & 0.50 & StdQR & 0.0129 & 0.0902 & 0.0651 & 0.4108 & {---} & {---} & {---} & {---} & {---} \\
 & & 0.50 & BQR-ALD & 0.0109 & 0.0825 & 0.0555 & 0.4092 & 0.9088 & 0.3568 & 10.09 & 1.0018 & 2186.0 \\
 & & 0.50 & BSQR-G & 0.0111 & \textbf{0.0776} & 0.0804 & 0.4110 & 0.9269 & 0.8671 & 88.83 & 1.0058 & \textbf{2842.0} \\
 & & 0.50 & BSQR-U & \textbf{0.0098} & 0.0779 & \textbf{0.0501} & \textbf{0.4081} & 0.9150 & \textbf{0.3514} & \textbf{4.28} & 1.0016 & 2751.5 \\
 & & 0.50 & BSQR-E & 0.0099 & 0.0781 & 0.0502 & \textbf{0.4081} & \textbf{0.9800} & 0.5039 & 24.98 & 1.0016 & 2722.8 \\
 & & 0.50 & BSQR-T & 0.0102 & 0.0795 & 0.0520 & 0.4085 & 0.9138 & 0.3534 & 20.30 & 1.0017 & 2732.6 \\ \cmidrule(l){3-13}
 & & 0.75 & StdQR & 0.0167 & 0.1032 & 0.0850 & 0.4156 & {---} & {---} & {---} & {---} & {---} \\
 & & 0.75 & BQR-ALD & 0.0152 & 0.0980 & 0.0769 & 0.7913 & 0.8494 & 0.3585 & 10.03 & 1.0018 & 2164.1 \\
 & & 0.75 & BSQR-G & 0.0147 & 0.0969 & 0.0747 & 0.4135 & 0.8700 & 0.3660 & 84.97 & \textbf{1.0016} & \textbf{2822.5} \\
 & & 0.75 & BSQR-U & \textbf{0.0145} & \textbf{0.0963} & \textbf{0.0737} & \textbf{0.4133} & 0.8588 & 0.3575 & \textbf{4.30} & 1.0017 & 2777.2 \\
 & & 0.75 & BSQR-E & 0.0146 & \textbf{0.0963} & 0.0740 & 0.4134 & \textbf{0.9138} & 0.4276 & 25.69 & \textbf{1.0016} & 2730.7 \\
 & & 0.75 & BSQR-T & 0.0146 & 0.0966 & 0.0743 & 0.4135 & 0.8525 & \textbf{0.3565} & 20.39 & \textbf{1.0016} & 2729.3 \\
\midrule

$t(3)$ & Sparse ($p=20$)
 & 0.25 & StdQR & 0.0238 & 0.1228 & 0.2905 & 0.6021 & {---} & {---} & {---} & {---} & {---} \\
 & & 0.25 & BQR-ALD & 0.0212 & 0.1159 & 0.2601 & 1.0805 & 0.9022 & 0.4859 & 23.25 & 1.0021 & 2347.4 \\
 & & 0.25 & BSQR-G & 0.0215 & 0.1169 & 0.2657 & 0.5982 & 0.9118 & 0.5085 & 115.82 & 1.0019 & \textbf{3043.3} \\
 & & 0.25 & BSQR-U & 0.0215 & 0.1167 & 0.2642 & 0.5979 & 0.9058 & 0.4929 & \textbf{6.29} & 1.0020 & 2938.7 \\
 & & 0.25 & BSQR-E & 0.0214 & 0.1166 & 0.2640 & 0.5979 & \textbf{0.9432} & 0.5650 & 39.01 & 1.0020 & 2905.6 \\
 & & 0.25 & BSQR-T & \textbf{0.0213} & \textbf{0.1162} & \textbf{0.2618} & \textbf{0.5975} & 0.9048 & \textbf{0.4897} & 30.24 & 1.0019 & 2907.8 \\ \cmidrule(l){3-13}
 & & 0.50 & StdQR & 0.0190 & 0.1096 & 0.2326 & 0.5905 & {---} & {---} & {---} & {---} & {---} \\
 & & 0.50 & BQR-ALD & 0.0156 & 0.0995 & 0.1910 & 0.5837 & 0.9468 & \textbf{0.4874} & 23.64 & 1.0022 & 2363.2 \\
 & & 0.50 & BSQR-G & \textbf{0.0153} & 0.0987 & 0.1889 & 0.5832 & 0.9570 & 0.5026 & 115.96 & \textbf{1.0019} & \textbf{3169.1} \\
 & & 0.50 & BSQR-U & 0.0154 & 0.0987 & 0.1887 & 0.5832 & 0.9530 & 0.4960 & \textbf{6.35} & 1.0020 & 3112.6 \\
 & & 0.50 & BSQR-E & \textbf{0.0153} & \textbf{0.0984} & \textbf{0.1881} & \textbf{0.5831} & \textbf{0.9915} & 0.7114 & 38.36 & \textbf{1.0019} & 3058.0 \\
 & & 0.50 & BSQR-T & 0.0154 & 0.0987 & 0.1886 & 0.5832 & 0.9502 & 0.4912 & 29.51 & 1.0020 & 2996.5 \\ \cmidrule(l){3-13}
 & & 0.75 & StdQR & 0.0241 & 0.1230 & 0.2940 & 0.5991 & {---} & {---} & {---} & {---} & {---} \\
 & & 0.75 & BQR-ALD & 0.0221 & 0.1176 & 0.2690 & 1.0923 & 0.9030 & 0.4929 & 23.40 & 1.0020 & 2332.6 \\
 & & 0.75 & BSQR-G & 0.0241 & 0.1210 & 0.3165 & 0.6003 & 0.9112 & 1.0306 & 119.20 & 1.0059 & \textbf{3031.9} \\
 & & 0.75 & BSQR-U & 0.0225 & 0.1186 & 0.2741 & 0.5961 & 0.9008 & 0.4995 & \textbf{6.33} & \textbf{1.0019} & 2925.5 \\
 & & 0.75 & BSQR-E & 0.0225 & 0.1184 & 0.2736 & 0.5960 & \textbf{0.9410} & 0.5708 & 39.56 & \textbf{1.0019} & 2910.4 \\
 & & 0.75 & BSQR-T & \textbf{0.0223} & \textbf{0.1181} & \textbf{0.2713} & \textbf{0.5957} & 0.9010 & \textbf{0.4966} & 30.47 & \textbf{1.0019} & 2907.0 \\ \cmidrule(l){2-13}
 & Dense ($p=8$)
 & 0.25 & StdQR & 0.0194 & 0.1121 & 0.1018 & 0.5687 & {---} & {---} & {---} & {---} & {---} \\
 & & 0.25 & BQR-ALD & \textbf{0.0176} & 0.1067 & 0.0919 & 1.1289 & 0.9006 & \textbf{0.4436} & 9.79 & 1.0017 & 2225.9 \\
 & & 0.25 & BSQR-G & 0.0181 & 0.1084 & 0.0937 & 0.5672 & 0.9106 & 0.4677 & 79.16 & 1.0016 & \textbf{2779.1} \\
 & & 0.25 & BSQR-U & 0.0180 & 0.1082 & 0.0931 & 0.5672 & 0.9050 & 0.4535 & \textbf{4.16} & \textbf{1.0015} & 2735.0 \\
 & & 0.25 & BSQR-E & 0.0179 & 0.1078 & 0.0926 & 0.5671 & \textbf{0.9525} & 0.5490 & 24.63 & 1.0016 & 2701.2 \\
 & & 0.25 & BSQR-T & 0.0177 & \textbf{0.1074} & \textbf{0.0919} & \textbf{0.5669} & 0.9025 & 0.4488 & 21.69 & 1.0070 & 2705.1 \\ \cmidrule(l){3-13}
 & & 0.50 & StdQR & 0.0149 & 0.0969 & 0.0756 & 0.5651 & {---} & {---} & {---} & {---} & {---} \\
 & & 0.50 & BQR-ALD & 0.0129 & 0.0907 & 0.0657 & 0.5635 & 0.9488 & \textbf{0.4418} & 10.04 & 1.0017 & 2248.7 \\
 & & 0.50 & BSQR-G & \textbf{0.0126} & 0.0900 & \textbf{0.0637} & \textbf{0.5632} & 0.9556 & 0.4612 & 77.73 & \textbf{1.0015} & \textbf{2811.1} \\
 & & 0.50 & BSQR-U & \textbf{0.0126} & 0.0897 & 0.0639 & 0.5632 & 0.9550 & 0.4526 & \textbf{4.29} & 1.0016 & 2757.1 \\
 & & 0.50 & BSQR-E & \textbf{0.0125} & \textbf{0.0896} & 0.0634 & \textbf{0.5632} & \textbf{0.9931} & 0.6434 & 24.30 & 1.0016 & 2753.5 \\
 & & 0.50 & BSQR-T & 0.0127 & 0.0900 & 0.0644 & 0.5633 & 0.9531 & 0.4466 & 19.91 & \textbf{1.0015} & 2734.2 \\ \cmidrule(l){3-13}
 & & 0.75 & StdQR & 0.0193 & 0.1093 & 0.0956 & 0.5660 & {---} & {---} & {---} & {---} & {---} \\
 & & 0.75 & BQR-ALD & 0.0182 & 0.1052 & 0.0894 & 1.1368 & 0.9069 & 0.4495 & 9.77 & 1.0017 & 2221.7 \\
 & & 0.75 & BSQR-G & 0.0188 & 0.1065 & 0.0924 & 0.5655 & 0.9181 & 0.4746 & 78.23 & \textbf{1.0015} & \textbf{2797.0} \\
 & & 0.75 & BSQR-U & 0.0186 & 0.1060 & 0.0912 & 0.5653 & 0.9088 & 0.4597 & \textbf{4.20} & 1.0016 & 2757.2 \\
 & & 0.75 & BSQR-E & 0.0184 & 0.1057 & 0.0908 & 0.5652 & \textbf{0.9475} & 0.5398 & 24.79 & 1.0016 & 2734.1 \\
 & & 0.75 & BSQR-T & \textbf{0.0183} & \textbf{0.1053} & \textbf{0.0901} & \textbf{0.5651} & 0.9088 & \textbf{0.4541} & 19.84 & \textbf{1.0015} & 2686.3 \\
\midrule

\multirow[t]{36}{*}{%
  \makecell[l]{
    \vspace*{2\baselineskip} \\ $0.2\mathcal{N}(0,3)$ \\
    $+0.8\mathcal{N}(0,4)$
  }
} & Sparse ($p=20$)
 & 0.25 & StdQR & 0.0697 & 0.2130 & 0.8426 & 0.8584 & {---} & {---} & {---} & {---} & {---} \\
 & & 0.25 & BQR-ALD & 0.0611 & 0.1987 & 0.7334 & 1.4624 & 0.8622 & 0.7370 & 25.16 & 1.0021 & 2310.5 \\
 & & 0.25 & BSQR-G & 0.0617 & 0.1982 & 0.7641 & 0.8500 & 0.8788 & 1.2841 & 108.02 & 1.0060 & \textbf{3002.9} \\
 & & 0.25 & BSQR-U & \textbf{0.0600} & \textbf{0.1965} & \textbf{0.7191} & \textbf{0.8468} & 0.8650 & 0.7391 & \textbf{6.25} & 1.0020 & 2985.0 \\
 & & 0.25 & BSQR-E & 0.0601 & 0.1968 & 0.7209 & 0.8470 & \textbf{0.9288} & 0.9021 & 38.29 & \textbf{1.0019} & 2872.8 \\
 & & 0.25 & BSQR-T & 0.0603 & 0.1971 & 0.7233 & 0.8472 & 0.8645 & \textbf{0.7367} & 29.84 & 1.0020 & 2831.8 \\ \cmidrule(l){3-13}
 & & 0.50 & StdQR & 0.0528 & 0.1841 & 0.6408 & 0.8402 & {---} & {---} & {---} & {---} & {---} \\
 & & 0.50 & BQR-ALD & 0.0426 & 0.1654 & 0.5161 & 0.8280 & 0.9295 & 0.7383 & 25.10 & 1.0022 & 2305.9 \\
 & & 0.50 & BSQR-G & \textbf{0.0373} & \textbf{0.1548} & \textbf{0.4516} & \textbf{0.8217} & 0.9418 & \textbf{0.7268} & 104.16 & \textbf{1.0019} & \textbf{3201.8} \\
 & & 0.50 & BSQR-U & 0.0394 & 0.1590 & 0.4763 & 0.8240 & 0.9342 & 0.7276 & \textbf{6.23} & 1.0020 & 3054.5 \\
 & & 0.50 & BSQR-E & 0.0387 & 0.1577 & 0.4691 & 0.8233 & \textbf{0.9912} & 1.0776 & 37.75 & \textbf{1.0019} & 3008.8 \\
 & & 0.50 & BSQR-T & 0.0409 & 0.1621 & 0.4946 & 0.8257 & 0.9318 & 0.7318 & 29.59 & \textbf{1.0019} & 2938.1 \\ \cmidrule(l){3-13}
 & & 0.75 & StdQR & 0.0696 & 0.2102 & 0.8306 & 0.8542 & {---} & {---} & {---} & {---} & {---} \\
 & & 0.75 & BQR-ALD & 0.0591 & 0.1946 & 0.7104 & 1.4533 & 0.8660 & 0.7298 & 25.00 & 1.0022 & 2326.3 \\
 & & 0.75 & BSQR-G & \textbf{0.0590} & \textbf{0.1946} & \textbf{0.7101} & 0.8432 & 0.8858 & 0.7632 & 107.22 & 1.0020 & \textbf{3061.8} \\
 & & 0.75 & BSQR-U & 0.0584 & 0.1937 & 0.7016 & \textbf{0.8425} & 0.8705 & 0.7327 & \textbf{6.27} & \textbf{1.0019} & 2953.9 \\
 & & 0.75 & BSQR-E & 0.0585 & 0.1938 & 0.7031 & 0.8426 & \textbf{0.9302} & 0.8741 & 38.83 & \textbf{1.0019} & 2895.6 \\
 & & 0.75 & BSQR-T & 0.0585 & 0.1937 & 0.7026 & 0.8426 & 0.8712 & \textbf{0.7302} & 29.77 & \textbf{1.0019} & 2872.8 \\ \cmidrule(l){2-13}
 & Dense ($p=8$)
 & 0.25 & StdQR & 0.0632 & 0.1993 & 0.3103 & 0.8032 & {---} & {---} & {---} & {---} & {---} \\
 & & 0.25 & BQR-ALD & 0.0572 & 0.1902 & 0.2807 & 1.5346 & 0.8562 & 0.6967 & 10.48 & 1.0019 & 2168.1 \\
 & & 0.25 & BSQR-G & 0.0547 & \textbf{0.1854} & 0.2725 & 0.7997 & 0.8781 & 0.7278 & 68.93 & \textbf{1.0015} & \textbf{2767.7} \\
 & & 0.25 & BSQR-U & \textbf{0.0545} & 0.1851 & \textbf{0.2706} & \textbf{0.7995} & 0.8675 & 0.6971 & \textbf{4.08} & 1.0017 & 2714.3 \\
 & & 0.25 & BSQR-E & 0.0548 & 0.1856 & 0.2713 & \textbf{0.7995} & \textbf{0.9206} & 0.8624 & 24.22 & 1.0016 & 2720.2 \\
 & & 0.25 & BSQR-T & 0.0552 & 0.1865 & 0.2726 & 0.7997 & 0.8569 & \textbf{0.6937} & 19.06 & \textbf{1.0015} & 2650.1 \\ \cmidrule(l){3-13}
 & & 0.50 & StdQR & 0.0479 & 0.1730 & 0.2431 & 0.8006 & {---} & {---} & {---} & {---} & {---} \\
 & & 0.50 & BQR-ALD & 0.0411 & 0.1610 & 0.2075 & 0.7972 & 0.9125 & 0.6870 & 10.46 & 1.0017 & 2158.1 \\
 & & 0.50 & BSQR-G & \textbf{0.0356} & \textbf{0.1502} & \textbf{0.1798} & \textbf{0.7945} & 0.9319 & 0.6785 & 70.28 & \textbf{1.0015} & \textbf{2788.8} \\
 & & 0.50 & BSQR-U & 0.0376 & 0.1542 & 0.1905 & 0.7955 & 0.9238 & \textbf{0.6762} & \textbf{4.09} & \textbf{1.0015} & 2755.2 \\
 & & 0.50 & BSQR-E & 0.0377 & 0.1543 & 0.1904 & 0.7955 & \textbf{0.9812} & 0.9773 & 23.71 & 1.0016 & 2736.7 \\
 & & 0.50 & BSQR-T & 0.0390 & 0.1569 & 0.1971 & 0.7961 & 0.9150 & 0.6789 & 19.10 & \textbf{1.0015} & 2717.1 \\ \cmidrule(l){3-13}
 & & 0.75 & StdQR & 0.0608 & 0.1966 & 0.3079 & 0.8060 & {---} & {---} & {---} & {---} & {---} \\
 & & 0.75 & BQR-ALD & 0.0538 & 0.1841 & 0.2737 & 1.5343 & 0.8538 & 0.6905 & 10.30 & 1.0018 & 2164.7 \\
 & & 0.75 & BSQR-G & \textbf{0.0520} & \textbf{0.1807} & \textbf{0.2634} & \textbf{0.8020} & \textbf{0.8825} & 0.7180 & 68.42 & 1.0016 & \textbf{2756.3} \\
 & & 0.75 & BSQR-U & 0.0519 & 0.1808 & 0.2620 & 0.8018 & 0.8650 & \textbf{0.6886} & \textbf{4.05} & 1.0017 & 2727.4 \\
 & & 0.75 & BSQR-E & 0.0519 & 0.1805 & 0.2631 & 0.8019 & 0.9150 & 0.8209 & 23.91 & 1.0016 & 2716.6 \\
 & & 0.75 & BSQR-T & 0.0523 & 0.1813 & 0.2658 & 0.8021 & 0.8612 & 0.6874 & 18.66 & \textbf{1.0015} & 2684.5 \\
\midrule

\multirow[t]{36}{*}{%
  \makecell[l]{
    \vspace*{2\baselineskip} \\ $\mathcal{N}(0,\sigma_i^2)$,\\$\sigma_i = \exp(-0.25+0.5x_{i1})$
  }
}
 & Sparse ($p=20$)
 & 0.25 & StdQR & 0.0096 & 0.0779 & 0.1177 & 0.3824 & {---} & {---} & {---} & {---} & {---} \\
 & & 0.25 & BQR-ALD & 0.0089 & 0.0749 & 0.1091 & 0.6951 & 0.9190 & 0.3203 & 23.56 & 1.0020 & 2345.5 \\
 & & 0.25 & BSQR-G & 0.0091 & 0.0759 & 0.1124 & 0.3813 & 0.9178 & 0.3294 & 131.95 & \textbf{1.0019} & \textbf{2997.0} \\
 & & 0.25 & BSQR-U & 0.0091 & 0.0756 & 0.1117 & 0.3811 & 0.9145 & 0.3237 & \textbf{6.50} & \textbf{1.0019} & 2864.6 \\
 & & 0.25 & BSQR-E & 0.0091 & 0.0755 & 0.1115 & 0.3811 & \textbf{0.9508} & 0.3696 & 40.88 & 1.0020 & 2843.0 \\
 & & 0.25 & BSQR-T & \textbf{0.0090} & \textbf{0.0754} & \textbf{0.1108} & \textbf{0.3809} & 0.9145 & \textbf{0.3221} & 32.05 & 1.0020 & 2834.2 \\ \cmidrule(l){3-13}
 & & 0.50 & StdQR & 0.0075 & 0.0689 & 0.0932 & 0.3760 & {---} & {---} & {---} & {---} & {---} \\
 & & 0.50 & BQR-ALD & 0.0064 & 0.0637 & 0.0800 & 0.3727 & 0.9502 & \textbf{0.3133} & 23.57 & 1.0021 & 2313.6 \\
 & & 0.50 & BSQR-G & 0.0076 & 0.0654 & 0.1130 & 0.3768 & \textbf{0.9538} & 0.8367 & 134.71 & 1.0060 & \textbf{3120.5} \\
 & & 0.50 & BSQR-U & \textbf{0.0064} & \textbf{0.0636} & \textbf{0.0801} & \textbf{0.3728} & 0.9515 & 0.3169 & \textbf{6.52} & \textbf{1.0019} & 3050.4 \\
 & & 0.50 & BSQR-E & \textbf{0.0064} & \textbf{0.0636} & \textbf{0.0801} & \textbf{0.3728} & 0.9915 & 0.4408 & 38.72 & \textbf{1.0019} & 3007.1 \\
 & & 0.50 & BSQR-T & \textbf{0.0064} & \textbf{0.0634} & 0.0795 & \textbf{0.3726} & 0.9512 & 0.3148 & 33.26 & 1.0020 & 2946.8 \\ \cmidrule(l){3-13}
 & & 0.75 & StdQR & 0.0100 & 0.0798 & 0.1220 & 0.3823 & {---} & {---} & {---} & {---} & {---} \\
 & & 0.75 & BQR-ALD & 0.0091 & 0.0760 & 0.1113 & 0.6893 & 0.9062 & 0.3151 & 23.38 & 1.0021 & 2332.7 \\
 & & 0.75 & BSQR-G & 0.0092 & 0.0766 & 0.1136 & 0.3804 & 0.9095 & 0.3243 & 133.81 & 1.0020 & \textbf{3007.0} \\
 & & 0.75 & BSQR-U & 0.0092 & 0.0764 & 0.1129 & 0.3802 & 0.9050 & 0.3187 & \textbf{6.43} & 1.0020 & 2906.1 \\
 & & 0.75 & BSQR-E & 0.0092 & 0.0764 & 0.1131 & 0.3803 & \textbf{0.9405} & 0.3598 & 39.83 & \textbf{1.0019} & 2815.8 \\
 & & 0.75 & BSQR-T & \textbf{0.0091} & \textbf{0.0762} & \textbf{0.1123} & \textbf{0.3801} & 0.9062 & \textbf{0.3173} & 31.65 & \textbf{1.0019} & 2837.4 \\ \cmidrule(l){2-13}
 & Dense ($p=8$)
 & 0.25 & StdQR & 0.0086 & 0.0728 & 0.0440 & 0.3655 & {---} & {---} & {---} & {---} & {---} \\
 & & 0.25 & BQR-ALD & 0.0081 & 0.0704 & 0.0417 & 0.7278 & 0.8838 & \textbf{0.2887} & 9.76 & 1.0017 & 2221.1 \\
 & & 0.25 & BSQR-G & 0.0104 & 0.0739 & 0.0732 & 0.3685 & 0.8900 & 0.8085 & 92.17 & 1.0057 & \textbf{2805.3} \\
 & & 0.25 & BSQR-U & 0.0083 & 0.0710 & 0.0427 & 0.3652 & 0.8906 & 0.2939 & \textbf{4.33} & 1.0017 & 2759.1 \\
 & & 0.25 & BSQR-E & 0.0083 & 0.0709 & 0.0426 & 0.3652 & \textbf{0.9331} & 0.3454 & 26.41 & 1.0017 & 2729.9 \\
 & & 0.25 & BSQR-T & \textbf{0.0082} & \textbf{0.0705} & \textbf{0.0421} & \textbf{0.3651} & 0.8888 & 0.2912 & 20.92 & \textbf{1.0016} & 2744.6 \\ \cmidrule(l){3-13}
 & & 0.50 & StdQR & 0.0065 & 0.0649 & 0.0338 & 0.3613 & {---} & {---} & {---} & {---} & {---} \\
 & & 0.50 & BQR-ALD & 0.0057 & 0.0605 & 0.0299 & 0.3602 & 0.9419 & \textbf{0.2861} & 9.76 & 1.0018 & 2251.1 \\
 & & 0.50 & BSQR-G & \textbf{0.0055} & \textbf{0.0593} & \textbf{0.0288} & \textbf{0.3600} & \textbf{0.9544} & 0.2960 & 95.94 & \textbf{1.0015} & \textbf{2842.0} \\
 & & 0.50 & BSQR-U & \textbf{0.0055} & 0.0595 & 0.0289 & \textbf{0.3600} & 0.9500 & 0.2903 & \textbf{4.40} & 1.0016 & 2788.6 \\
 & & 0.50 & BSQR-E & \textbf{0.0055} & 0.0595 & 0.0289 & \textbf{0.3600} & 0.9894 & 0.3952 & 25.76 & \textbf{1.0015} & 2756.1 \\
 & & 0.50 & BSQR-T & \textbf{0.0055} & \textbf{0.0595} & \textbf{0.0288} & \textbf{0.3599} & 0.9450 & 0.2883 & 24.58 & 1.0016 & 2776.2 \\ \cmidrule(l){3-13}
 & & 0.75 & StdQR & 0.0083 & 0.0725 & 0.0417 & 0.3618 & {---} & {---} & {---} & {---} & {---} \\
 & & 0.75 & BQR-ALD & 0.0073 & 0.0683 & 0.0373 & 0.7194 & 0.9062 & \textbf{0.2865} & 9.65 & 1.0018 & 2228.6 \\
 & & 0.75 & BSQR-G & 0.0076 & 0.0695 & 0.0385 & 0.3610 & 0.9106 & 0.2969 & 91.77 & \textbf{1.0015} & \textbf{2801.1} \\
 & & 0.75 & BSQR-U & 0.0076 & 0.0693 & 0.0384 & 0.3610 & 0.9075 & 0.2913 & \textbf{4.33} & 1.0016 & 2764.9 \\
 & & 0.75 & BSQR-E & 0.0075 & 0.0692 & 0.0383 & 0.3609 & \textbf{0.9431} & 0.3373 & 25.99 & 1.0017 & 2729.0 \\
 & & 0.75 & BSQR-T & \textbf{0.0074} & \textbf{0.0687} & \textbf{0.0378} & \textbf{0.3608} & 0.9081 & 0.2891 & 25.52 & \textbf{1.0015} & 2726.5 \\

\end{longtable}
}
\end{landscape}

\subsection{Finite sample validity under extreme sparsity}

\begin{table}[htbp]
\centering
\caption{Finite sample validity under extreme sparsity ($n=30, p=6$).}
\label{tab:stress_test_n30}
\begin{threeparttable}
\begin{tabular}{lllcccc}
\toprule
\textbf{Quantile} & \textbf{Method} & \textbf{Kernel} & \textbf{MSE} & \textbf{Bias$^2$} & \textbf{Coverage} & \textbf{Interval Score} \\
($\tau$) & & & ($\times 10^{-2}$) & ($\times 10^{-4}$) & (\%) & (IS) \\
\midrule
\multirow{5}{*}{0.05} 
 & PETEL & --- & \textbf{17.6} & 91.4 & 84.2 & \textbf{2.56} \\
 & BSQR & Gaussian & 61.5 & 1880.0 & \textbf{97.3} & 33.20 \\
 & BSQR & Uniform & 31.0 & \textbf{27.0} & 77.0 & 3.97 \\
 & BSQR & Epanechnikov & 32.4 & 27.8 & 84.2 & 3.26 \\
 & BSQR & Triangular & 29.0 & 27.5 & 67.0 & 5.16 \\
\midrule
\multirow{5}{*}{0.50} 
 & PETEL & --- & \textbf{7.94} & 49.7 & 96.7 & \textbf{1.44} \\
 & BSQR & Gaussian & 8.58 & 6.57 & 94.3 & 1.45 \\
 & BSQR & Uniform & 8.61 & 7.26 & 94.8 & 1.45 \\
 & BSQR & Epanechnikov & 8.41 & \textbf{6.26} & \textbf{100.0} & 2.08 \\
 & BSQR & Triangular & 8.81 & 8.21 & 94.3 & 1.45 \\
\midrule
\multirow{5}{*}{0.95} 
 & PETEL & --- & \textbf{21.2} & 41.0 & 83.0 & \textbf{2.87} \\
 & BSQR & Gaussian & 69.3 & 1520.0 & \textbf{98.3} & 32.80 \\
 & BSQR & Uniform & 36.3 & 15.0 & 75.0 & 4.32 \\
 & BSQR & Epanechnikov & 36.3 & 12.2 & 75.7 & 4.28 \\
 & BSQR & Triangular & 34.4 & \textbf{12.0} & 68.2 & 5.34 \\
\bottomrule
\end{tabular}
\begin{tablenotes}
\small
\item \textit{Note:} Evaluation of small-sample robustness. Numerical convergence was achieved in 100\% of the simulation replications for all methods. Values for \textbf{MSE} are scaled by $10^2$ and \textbf{Bias$^2$} by $10^4$.
\item \textbf{Interpretation:} While PETEL exhibits lower MSE and IS (bolded), this reflects ``false precision'' driven by variance reduction at the cost of significant bias and under-coverage at extreme quantiles ($\tau \in \{0.05, 0.95\}$). In contrast, BSQR (Gaussian) acts as an inferential safety net, prioritizing valid coverage ($>95\%$) over interval width in data-sparse regions.
\end{tablenotes}
\end{threeparttable}
\end{table}

\end{document}